\documentclass[11pt]{article}

\def\notes{0}
\ifnum\notes=1
\newcommand{\comment}[2]{\marginpar{\tiny{\textbf{#1: }\textit{#2}}}}
\else
\newcommand{\comment}[2]{}
\fi

\newcommand{\madhu}[1]{\comment{M}{#1}}
\newcommand{\srikanth}[1]{\comment{S}{#1}}
\newcommand{\mitali}[1]{\comment{Mitali}{#1}}

\usepackage{color}
\usepackage[pagebackref=true]{hyperref}
\hypersetup{
    unicode=false,          
    colorlinks=true,        
    linkcolor=blue,          
    citecolor=blue,        
    filecolor=magenta,      
    urlcolor=cyan           
}

\usepackage{hyperref,cleveref}

\usepackage{amsmath}
\usepackage{amssymb}
\usepackage{amsthm}
\usepackage{fullpage}
\usepackage{rotating}
\usepackage{tikz}
\usepackage{bbold}

\newtheorem{theorem}{Theorem}[section]
\newtheorem{corollary}[theorem]{Corollary}
\newtheorem{lemma}[theorem]{Lemma}
\newtheorem{observation}[theorem]{Observation}
\newtheorem{proposition}[theorem]{Proposition}
\newtheorem{definition}[theorem]{Definition}
\newtheorem{claim}[theorem]{Claim}
\newtheorem{fact}[theorem]{Fact}
\newtheorem{remark}[theorem]{Remark}

\newcommand{\prob}[2]{\mathop{\mathrm{Pr}}_{#1}\left[#2\right]}
\newcommand{\avg}[2]{\mathop{\textbf{E}}_{#1}\left[#2\right]}
\newcommand{\poly}{\mathop{\mathrm{poly}}}

\newcommand{\F}{\mathbb{F}}
\newcommand{\ip}[2]{\langle #1, #2 \rangle}


\newcommand{\Var}{\mathrm{Var}}

\newcommand{\mynorm}[1]{\left\lVert #1 \right\rVert}
\newcommand{\mc}[1]{\mathcal{#1}}
\newcommand{\mf}[1]{\mathfrak{#1}}

\newcommand{\Supp}{\mathrm{Supp}}
\newcommand{\charF}{\mathrm{char}}
\newcommand{\calf}{\mathcal{F}}
\newcommand{\Fnd}[2]{\mathcal{F}(#1,#2)}
\newcommand{\Fn}[1]{\mathcal{F}(#1)}

\usepackage{tikz}

\title{Local decoding and testing of polynomials over grids}
\author{Mitali Bafna\thanks{Harvard John A. Paulson School of Engineering and Applied Sciences. {\tt mitalibafna@g.harvard.edu}.} \and Srikanth Srinivasan\thanks{Department of Mathematics, IIT Bombay. {\tt srikanth@math.iitb.ac.in.}} \and Madhu Sudan\thanks{Harvard John A. Paulson School of Engineering and Applied Sciences. {\tt madhu@cs.harvard.edu}.}}

\begin{document}

\maketitle

\begin{abstract}
The well-known DeMillo-Lipton-Schwartz-Zippel lemma says that $n$-variate
polynomials of total degree at most $d$ over
grids, i.e. sets of the form $A_1 \times A_2 \times \cdots \times A_n$, form
error-correcting codes (of distance at least $2^{-d}$ provided $\min_i\{|A_i|\}\geq 2$).
In this work we explore their local
decodability and (tolerant) local testability. While these aspects have been studied
extensively when $A_1 = \cdots = A_n = \F_q$ are the same finite field, the
setting when $A_i$'s are not the full field does not seem to have been explored before.

In this work we focus on the case $A_i = \{0,1\}$ for every $i$. We show that for every field
(finite or otherwise) there is a test whose query complexity depends only on the
degree (and not on the number of variables). In contrast we show that
decodability is possible over fields of positive characteristic (with query complexity growing with
the degree of the polynomial and the characteristic), but not over the reals, where the query complexity must grow with $n$. As a
consequence we get a natural example of a code (one with a transitive group of symmetries) that is locally testable
but not locally decodable.

Classical results on local decoding and testing of polynomials have relied on the
2-transitive symmetries of the space of low-degree polynomials (under affine
transformations). Grids do not possess this symmetry: So we introduce some new
techniques to overcome this handicap and in particular use the hypercontractivity of
the (constant weight) noise operator on the Hamming cube.
\end{abstract}

\newpage

\tableofcontents

\newpage

\setcounter{page}{1}

\section{Introduction}

Low-degree polynomials have played a central role in computational complexity. (See for instance ~\cite{Shamir79,BCW,BGW,MVV,Razborov,LFKN,Shamir90,AroraS,ALMSS} for some of the early applications.)
One of the key properties of low-degree $n$-variate polynomials underlying many of the applications is the ``DeMillo-Lipton-Schwartz-Zippel'' distance lemma~\cite{DeMilloL,Schwartz,Zippel} which upper bounds the number of zeroes that a non-zero low-degree polynomial may have over ``grids'', i.e., over domains of the form $A_1 \times \cdots \times A_n$. This turns the space of polynomials into an error-correcting code (first observed by Reed~\cite{Reed} and Muller~\cite{Muller}) and many applications are built around this class of codes. These applications have also motivated a rich collection of tools including polynomial time (global) decoding algorithms for these codes, and ``local decoding''~\cite{BeaverF,Lipton,BLR} and ``local testing''~\cite{RubinfeldS,AKKLR,KaufRon} procedures for these codes.

Somewhat strikingly though, many of these tools associated with these codes don't work (at least not immediately) for all grid-like domains, but work only for the specific case of the domain being the vector space $\F^n$ where $\F$ is the field over which the polynomial is defined and $\F$ is finite. The simplest example of such a gap in knowledge was the case of ``global decoding''. Here, given a function $f:\prod_{i=1}^n A_i \to \F$ as a truth-table, the goal is to find a nearby polynomial (up to half the distance of the underlying code) in time polynomial in $|\prod_i A_i|$. When the domain equals $\F^n$ then such algorithms date back to the 1950s. However the case of general $A_i$ remained open till 2016 when Kim and Kopparty~\cite{KimK} finally solved this problem.

In this paper we initiate the study of local decoding and testing algorithms for polynomials when the domain is not a vector space. For uniformity, we consider the case of polynomials over hypercubes (i.e., when $A_i = \{0,1\} \subseteq \F$ for every $i$). We describe the problems formally next and then describe our results.

\subsection{Distance, Local Decoding and Local Testing}

We start with some brief notation. For finite sets $A_1,\ldots,A_n \subseteq \F$ and functions $f,g:
A_1 \times \cdots A_n \to\F$, let the distance between $f$ and $g$, denoted 
$\delta(f,g)$ be the quantity $\Pr_{a} [f(a) \ne g(a)]$ where $a$ is drawn uniformly from $A_1 \times \cdots \times A_n$. We say $f$ is $\delta$-close to $g$ if $\delta(f,g)\leq \delta$, and $\delta$-far otherwise. For a family of functions $\calf \subseteq 
\{h:A_1\times\cdots\times A_n \to \F\}$, let $\delta(\calf) = \min_{f\ne g\in \calf}\{\delta(f,g)\}$. 

To set the context for some of the results on local decoding and testing, we first recall the distance property of polynomials. If $|A_i|\geq 2$ for every $i$, the polynomial distance lemma asserts that the distance between any two distinct degree $d$ polynomials is at least $2^{-d}$. 
Of particular interest is the fact that for fixed $d$ this distance is bounded away from $0$, independent of $n$ or $|\F|$ or the structure of the sets $A_i$. In turn this behavior effectively has led to ``local decoding'' and ``local testing'' algorithms with complexity depending only on $d$ --- we define these notions and elaborate on this sentence next.

Given a family of functions $\calf$ from the domain $A_1 \times \cdots \times A_n$ to $\F$, we say $\calf$ is {\em $(\delta,q)$-locally decodable} if there exists a probabilistic algorithm that, given $a \in A_1\times \cdots \times A_n$ and oracle access to a function $f:A_1 \times \cdots \times A_n\to \F$ that is $\delta$-close to some function $p \in \calf$, makes at most $q$ oracle queries to $f$ and outputs $p(a)$ with probability at least $3/4$. (The existence of a $(\delta,q)$-local decoder for $\calf$ in particular implies that $\delta(\calf) \geq 2\delta$.)
We say that $\calf$ is $(\delta,q)$-\emph{locally testable} if there exists a probabilistic algorithm that makes $q$ queries to an oracle for $f:A_1\times \cdots \times A_n \to \F$ and accepts with probability at least $3/4$ if $f \in \calf$ and rejects with probability at least $3/4$ if $f$ is $\delta$-far from every function in $\calf$. We also consider the {\em tolerant testing} problem where functions that are close to the family $\calf$ are required to be accepted with high probability. Specifically we say that $\calf$ is $(\delta_1,\delta_2,q)$-tolerantly-locally-testable if it has a $(\delta_2,q)$-local tester that satisfies the additional condition that if $\delta(f,\calf) \leq \delta_1$ then the tester accepts $f$ with probability at least $3/4$. 

When $A_1 = \cdots = A_n = \F$ (and so $\F$ is finite) it was shown by Kaufman and Ron~\cite{KaufRon} (with similar results in Jutla et al.~\cite{JPRZ})  that the family of $n$-variate degree $d$ polynomials over $\F$ is $(\delta,q)$-locally decodable and $(\delta,q)$-locally testable for some $\delta = \exp(-d)$ and $q = \exp(d)$. In particular both $q$ and $1/\delta$ are bounded for fixed $d$, independent of $n$ and $\F$. Indeed in both cases $\delta$ is lower bounded by a constant factor of $\delta(\Fnd{n}{d})$ and $q$ is upper bounded by a polynomial in the inverse of $\delta(\Fnd{n}{d})$ where $\Fnd{n}{d}$ denotes the family of degree $d$ $n$-variate polynomials over $\F$, seemingly suggesting that the testability and decodability may be consequences of the distance. If so does this phenomenon should extend to the case of other sets $A_i \ne \F$ - does it? We explore this question in this paper.

In what follows we say that the family of degree $d$ $n$-variate polynomials is locally decodable (resp.  (tolerantly) testable) if there is bounded $q = q(d)$ and positive $\delta = \delta(d)$ such that $\Fnd{n}{d}$ is $(\delta,q)$-locally decodable (resp. (tolerantly) testable) for every $n$. The specific question we address below is when are the family of degree $d$ $n$-variate polynomials locally decodable and (tolerantly) testable when the domain is $\{0,1\}^n$.
(We stress that the choice of $\{0,1\}^n$ as domain is partly for simplicity and is equivalent to the setting of $|A_i| = 2$ for all $i$. Working with domains of other (and varying) sizes would lead to quantitative changes and we do not consider that setting in this paper.)

\subsection{Main Results}

Our first result (Theorem~\ref{thm:imposs-large-char-intro}) shows that even the space of degree $1$ polynomials is {\em not locally decodable} over fields of zero characteristic or over fields of large characteristic. 
This statement already stresses the main difference between the vector space setting ( domain being $\F^n$) and the ``grid'' setting (domain = $\{0,1\}^n$). One key reason underlying this difference is that the domain $\F^n$ has a rich group of symmetries that preserve the space of degree $d$ polynomials, where the space of symmetries is much smaller when the domain is $\{0,1\}^n$. 
Specifically the space of degree $d$ polynomials over $\F^n$ is ``affine-invariant'' (invariant under all affine maps from $\F^n$ to $\F^n$). The richness of this group of symmetries is well-known  to lead to local decoding algorithms (see for instance~\cite{AKKLR}) and this explains the local decodability of $\Fnd{n}{d}$ over the domain $\F^n$. Of course the absence of this rich group of symmetries does not rule out local decodability --- and so some work has to be done to establish Theorem~\ref{thm:imposs-large-char-intro}. We give an overview of the proof in Section~\ref{ssec:overview} and then give the proof in Section~\ref{sec:imposs}.

Our second result (Theorem~\ref{thm:decoding-small-char-intro}) shows, in contrast, that the class of {\em degree $d$ polynomials over fields of small characteristic are locally decodable}. Specifically, we show that there is a $q = q(d,p)<\infty$ and  $\delta = \delta(d,p)>0$ such that $\Fnd{n}{d}$ over the domain $\{0,1\}^n$ over a (possibly infinite) field $\F$ of characteristic $p$ is $(\delta,q)$-locally decodable. This is perhaps the first local-decodability result for polynomials over infinite fields. A key technical ingredient that leads to this result, which may be of independent interest, is that when $n = 2p^t$ (twice a power of the characteristic of $\F$) and $g$ is a degree $d$ polynomial for $d < n/2$ then $g(0)$ can be determined from the value of $g$ on the ball on Hamming weight $n/2$ (see Lemma~\ref{lem:useful}). 
Again, we give an overview of the proof in Section~\ref{ssec:overview} and then give the actual proof in Section~\ref{sec:ldsmall}.

Our final, and main technical, result (Theorems~\ref{thm:test-main-intro}~and~\ref{thm:tolerant-Fq-intro}) shows somewhat surprisingly that $\Fnd{n}{d}$ is {\em always} (i.e., over all fields) {\em tolerantly locally testable}. This leads to perhaps the simplest natural example of a locally testable code that is not locally decodable. We remark there are of course many examples of such codes (see, for instance, the locally testable codes of Dinur~\cite{Dinur}) but these are results of careful constructions and in particular not very symmetric. On the other hand $\Fnd{n}{d}$ over $\{0,1\}^n$ does possess moderate symmetry and in particular the automorphism group is transitive. We remark that for both our positive results (Theorems~\ref{thm:decoding-small-char-intro}~and~\ref{thm:test-main-intro}), the algorithms themselves are not obvious and the analysis leads to further interesting questions. We elaborate on these in the next section.

\subsection{Overview of proofs}
\label{ssec:overview}

\paragraph{Impossibility of local decoding over fields of large characteristic.}
In Section~\ref{sec:imposs} we show that even the family of affine functions over $\{0,1\}^n$ is not locally decodable.
The main idea behind this construction and proof is to show that the value of a affine function $\ell:\{0,1\}^n \to \F$ at $1^n$ can not be determined from its values on any set $S$ if $|S|$ is small (specifically $|S| = o(\log n/\log\log n)$) and $S$ contains only ``balanced'' elements (i.e., $x \in S \Rightarrow |\sum_i x_i - (n/2)| = O(\sqrt{n})$.
Since the space of affine functions from $\{0,1\}^n$ to $\F$ forms a vector space, this in turn translates to showing that no set of up to $|S|$ balanced vectors contain the vector $1^n$ in their affine span (over $\F$) and we prove this in Lemma~\ref{lem:smallspan}. 

Going from Lemma~\ref{lem:smallspan} to Theorem~\ref{thm:imposs-large-char} is relatively standard in the case of finite fields. We show that if one picks a random linear function and simply erase its values on imbalanced inputs, this leads to only a small fraction of error, but its value  at $1^n$ is not decodable with $o(\log n/\log \log n)$ queries. (Indeed many of the ingredients go back to the work of \cite{BSHS}, who show that a canonical non-adaptive algorithm is effectively optimal for linear codes, though their results are stated in terms of local testing rather than local decoding.) In the case of infinite fields one has to be careful since one can not simply work with functions that are chosen uniformly at random. Instead we work with random linear functions with bounded coefficients. The bound on the coefficients leads to mild complications due to border effects that need care. In Section~\ref{ssec:imposs-proof} we show how to overcome these complications using a counting (or encoding) argument.

The technical heart of this part is thus the proof of Lemma~\ref{lem:smallspan} and we give some idea of this proof next. Suppose $S = \{x^1,\ldots,x^t\}$ contained $x^0 = 1^n$ in its affine span and suppose $|\sum_{j=1}^n x^i_j - (n/2)| \leq n/s$ for all $i$.
Let $a_1,\ldots,a_t \in \F$ be coefficients such that $x^0 = \sum_{i} a_i x^i$ with $\sum_i a_i = 1$. Our proof involves reasoning about the size of the coefficients $a_1,\ldots,a_t$. To get some intuition why this may help, note that 
$$\frac{n}{2} = \left|\sum_{j=1}^n x^0_j -\frac{n}{2}\right| = \left|\sum_{i=1}^t a_i\cdot \left( \sum_{j=1}^n x^i_j - \frac{n}{2}\right)\right| 
\leq \sum_{i=1}^t |a_i|\cdot \left| \sum_{j=1}^n x^i_j -\frac{n}{2}\right| \leq \frac{n}{s}\cdot \sum_j |a_j|.$$
So in particular if the $a_j$'s are small, specifically if $|a_j|\leq 1$ then we conclude $t  = \Omega(s)$. But what happens if large $a_j$'s are used? To understand this, we first show that the coefficients need not be too large (as a function of $t$) - see Lemma~\ref{lem:Cramer}, and then use this to prove Lemma~\ref{lem:smallspan}. The details are in Section~\ref{ssec:imposs-tech}.

\paragraph{Local decodability over fields of small characteristic.}
The classical method to obtain a $q$-query local decoder is to find, given a target point $x^0 \in \F^n$, a distribution on queries 
$x^1,\ldots,x^q \in \F^n$  such that (1) $P(x^0)$ is determined by $P(x^1),\ldots,P(x^q)$ for every degree $d$ polynomial $P$, and (2) the query $x^i$ is independent of $x^0$ (so that an oracle $f$ that usually equals $P$ will satisfy $P(x^i) = f(x^i)$ for all $i$, with probability at least $3/4$). 
Classical reductions used the ``2-transitivity'' of the underlying space of automorphisms to guarantee that $x^i$ is independent of $x^j$ for every pair $i\ne j \in \{0,\ldots,q\}$ --- a stronger property than required! Unfortunately, our automorphism space is not ``2-transitive'' but it turns out we can still find a distribution that satisfies the minimal needs. 

Specifically, in our reduction we identify a parameter $k = k(p,d)$ and map each variable $x_\ell$ to either $y_j$ or $1-y_j$ for some $j=j(\ell) \in [k]$. This reduces the $n$-variate decoding task with oracle access to $f(x_1,\ldots,x_k)$ to a $k$-variate decoding task with access to the function $g(y_1,\ldots,y_k)$. Since there are only $2^k$ distinct inputs to $g$, decoding can solved with at most $2^k$ queries (if it can be solved at all). The choice of whether $x_\ell$ is mapped to $y_j$ or $1-y_j$ is determined by $x^0_j$ so that $f(x^0) = g(0^k)$.  Thus given $x^0$, the only randomness is in the choice of $j(\ell)$. We 
choose $j(\ell)$ uniformly and independently from $[k]$ for each $\ell$. For $y \in \{0,1\}^k$,  $x^y$ denote the corresponding query in $\{0,1\}^n$ (i.e., $g(y) = f(x^y)$). Given our choices, $x^y$ is not independent of $x^0$ for every choice of $y$. Indeed if $y$ has Hamming weight $1$, then $x^y$ is very likely to have Hamming distance $\approx n/k$ from $x^0$ which is far from independent. However if $y \in \{0,1\}^k$ is a balanced vector with exactly $k/2$ $1$s (so in particular we will need $k$ to be even), then it turns out $x^y$ is indeed independent of $x^0$. So we query only those $x^y$ for which $y$ is balanced. But this leads to a new challenge: can $P(0^k)$ be determined from the values of $P(y)$ for balanced $y$s? It turns out that for a careful choice of $k$ (and this is where the small characteristic plays a role) the value of a degree $d$ polynomial at $0$ is indeed determined by its values on balanced inputs (see Lemma~\ref{lem:useful}) and this turns out to be sufficient to build a decoding algorithm over fields of small characteristic. Details may be found in Section~\ref{sec:ldsmall}.

\paragraph{Local testability over all fields.}

We now turn to the main technical result of the paper, namely the local testability of polynomials over grids. All previous analyses of local testability of polynomials with query complexity independent of the number of variables have relied on symmetry either implicitly or explicitly. (See for example~\cite{KaufS} for further elaboration.) Furthermore many also depend on the local decodability explicitly; and in our setting we seem to have insufficient symmetry and definitely no local decodability. This forces us to choose the test and analysis quite carefully. 

It turns out that among existing approaches to analyses of local tests, the one due to Bhattacharyya et al~\cite{BKSSZ} (henceforth BKSSZ) seems to make the least use of local decodability and our hope is to be able to simulate this analysis in our case --- but the question remains: ``which tester should we use?". This is a non-trivial question since the BKSSZ test is a natural one in a setting with sufficient symmetry; but their analysis relies crucially on the ability to view their  test as a sequence of restrictions: Given a function $f:\F^n\to\F$ they produce a sequence  of functions $f = f_n,f_{n-1},\ldots,f_k$, where the function $f_r$ is an $r$-variate function obtained by restricting $f_{r+1}$ to a codimension one affine subspace. Their test finally checks to see if $f_k$ is a degree $d$ polynomial. To emulate this analysis, we design a somewhat artificial test: We also produce a sequence of functions $f_n,f_{n-1},\ldots,f_k$ with $f_r$ being an $r$-variate function. Since  we do not have the luxury to restrict to arbitrary subspaces, we instead derive $f_{r}$ from $f_{r+1}(z_1,\ldots,z_{r+1})$ by setting  $z_i = z_j$ or $z_i = 1 - z_j$ for some random pair $i,j$ (since these are the only simple affine restrictions that preserve the domain). We stop when the number of variables $k$ is small enough (and hopefully a number depending on $d$ alone and not on $n$ or $\F$). We then test that the final function has  degree $d$.

The analysis of this test is not straightforward even given previous works, but we are able to adapt the analyses to our setting. Two new ingredients that  appear in our analyses are the hypercontractivity of hypercube with the constant weight noise operator (analyzed by Polyanskiy~\cite{Polyanskiy}) and the intriguing stochastics of a random set-union problem. We explain our analysis and where the above appear next.

We start with the part which is more immediate from the BKSSZ analysis. This corresponds to a key step in the BKSSZ analysis where it is shown that if $f_{r+1}$ is far from degree $d$ polynomials then, with high probability, so also is $f_r$. This step is argued via contradiction. If $f_r$ is close to the space of degree $d$ polynomials for many restrictions, then from the many polynomials that agree with $f_r$ (for many of the restrictions) one can glue together an $r+1$-variate polynomial that is close to $f_{r+1}$. This step is mostly algebraic and works out in our case also; though the actual algebra is different and involves more cases. (See Lemma~\ref{lem:LDL} and its proof in Section~\ref{sec:LDL}.) 

The new part in our analysis is in the case where $f_n$ is moderately close to some low-degree polynomial $P$. In this case we would still like to show that the test rejects $f_n$ with positive probability. In both BKSSZ and in our analysis this is shown by showing that the $2^k$ queries into $f_n$ (that given the entire truth table of the function $f_k$) satisfy the property that $f_n$ is not equal to $P$ on exactly one of the queried points. Note that the value of $f_k(y)$ is obtained by querying $f$ at some point, which we denote $x^y$. In the BKSSZ analysis $x^a$ and $x^b$ are completely independent given $a \ne b \in \{0,1\}^k$. (Note that the mapping from $y$ to $x^y$ is randomized and depends on the random choices of the tester.) In our setting the behavior of $x^a$ and $x^b$  is more complex and depends on both the set of coordinates $j$ such that where $a_j \ne b_j$ and on the number of indices $i\in [n]$ such that the variable $x_i$ is mapped to variable $y_j$.
Our analysis ends up depending on two new ingredients:  (1) The number of variables $x_i$ that map to any particular variable $y_j$ is $\Omega(n/k)$ with probability at least $2^{-O(k)}$ (see Corollary~\ref{cor:pdist}). This part involves the analysis of a random set-union process elaborated on below. (2) Once the exact number of indices $i$ such that $x_i$ maps to $y_j$ is fixed for every $j\in[k]$ and none of the sets is too small, the distribution of $x^a$ and $x^b$ is sufficiently independent to ensure that the events $f(x^a) = P(x^a)$ and $f(x^b) = P(x^b)$ co-occur with probability much smaller than the individual probabilities of these events. This part uses the hypercontractivity of the hypercube but under an unusual noise operator corresponding to the ``constant weight operator'', fortunately analyzed by Polyanskiy~\cite{Polyanskiy}. Invoking his theorem we are able to conclude the proof of this section.

We now briefly expand on the ``random set-union'' process alluded to above. Recall that our process starts with $n$ variables, and at each stage a pair of remaining variables is identified and given the same name. (We may ignore the complications due to the complementation of the form $z_i = 1-z_j$ for this part.) Equivalently we start with $n$ sets $X_1,\ldots,X_n$ with $X_i = \{i\}$ initially. We then pick two random sets and merge them. We stop when there are $k$ sets left and our goal is to understand the likelihood that one of the sets turn out to be too tiny. (The expected size of a set is $n/k$ and too tiny corresponds to being smaller than $n/(4k)$.)  It turns out that the distribution of set sizes produced by this process has a particularly clean description as follows: Randomly arrange the elements $1$ to $n$ on a cycle and consider the partition into $k$ sets generated by the set of elements that start with a special element and end before the next special element as we go clockwise around the cycle, where the elements in $\{1,\ldots,k\}$ are the special ones.  The sizes of these partitions are distributed identically to the  sizes of the sets $S_j$! For example, when $k=2$ the two sets have sizes distributed uniformly from $1$ to $n-1$. In particular the sets size are not strongly concentrated around $n/k$ - but nevertheless the probability that no set is tiny is not too small and this suffices for our analysis. 

Details of this analysis may be found in Section~\ref{sec:ltall}.

\paragraph{Tolerant Local Testability.}
\madhu{FILL THIS IN. Why is tester not tolerant? What modifications? What changes in analysis.}
We now give an outline of how one can augment the above local tester to get a tolerant local tester, i.e. for a fixed $\delta = \delta(d)$, given any $\delta_1 < \delta_2 < \delta$, the tolerant tester accepts all functions that are $\delta_1$-close with high probability and rejects functions that are $\delta_2$-far with high probability. We note that two routes that are often used in the context of algebraic properties to get tolerant testers are not available to us here. First, it is well-known that if a code is locally decodable and testable then it has a tolerant tester (since the local-decoder effectively gives oracle access to the nearby codeword). This route is not available to us since we don't have local decoder (for general fields). A second route that gives somewhat weak tolerant testers, say when $\delta_2 = 2\delta_1$, is to show that the rejection probability of $q$-query local tester is sandwiched between $\delta\cdot q \cdot (1 - (q-1)\delta)$ and $\delta \cdot q$, given a function at distance $\delta$ from the code. If such a statement were to be true, then the rejection probability of the test is a factor two approximation to the distance of the function from the code (provided $\delta$ is smaller than $1/(2q)$) and this leads to a weak tolerant tester. In our case our analysis only lower bound of the form $q\cdot \delta^{1.1}$ and this makes the approximation very weak. Thus getting a tolerant tester turns out to be non-trivial and below we explain how our tester works.

At a high level, our tolerant tester estimates the distance of $f$ to $\Fnd{n}{d}$. 
First, using the ``intolerant'' local tester, one can reject functions that are $\delta$-far from $\Fnd{n}{d}$, where $\delta$ is chosen to be $1/2^{O(d)}$. If the intolerant tester accepts, we produce a random restriction $R: \{0,1\}^k \rightarrow \{0,1\}^n$ on $k$ variables. This subspace sampling process is slightly different from the random restriction used in the local (intolerant) tester and we leave the details (which are not crucial at this stage) to Section~\ref{sec:tol-test}. We denote the restricted function produced by $g(y_1,\ldots,y_k)$, where $g(y) := f(R(y))$. Once we have produced a random subspace ($R(\{0,1\}^k)$), instead of querying $g$ on all points $\{0,1\}^k$, we further choose a random subset $S \subseteq \{0,1\}^k$ and find $g(y), \forall y \in S$. We use the queried values to do a brute-force interpolation to find the closest degree $d$ polynomial to $g$ on $S$ denoted by $h(y_1,\ldots,y_k)$. We then find $\delta_S(g,h)$, the distance between $g,h$ on $S$, and if this distance is small enough (depending on $\delta_1,\delta_2$) we accept, else we reject. 

Given that the intolerant tester accepts, let $p \in \Fnd{n}{d}$ denote the (uniquely) closest polynomial to the given function $f$. Normally one would query $g$ on all points of the random subspace (i.e. $g(\{0,1\}^k) = f(R(\{0,1\}^k))$) and check its distance to the closest low degree polynomial, but it turns out that the random set $S$ plays a crucial role in our analysis. The crux of the proof is in showing that the sampling process $R$ and the random set $S$ satisfy two main properties: (1) We show that the set of points $R(S)$ produced by the random process is a good sampler, that is, the fractional size of any subset $T \subseteq \{0,1\}^n$ in $R(S)$ is a good estimate of its fractional size in $\{0,1\}^n$. This holds because of the random properties of $S$ and the hypercontractivity of the noisy hypercube which is relevant due to the nature of $R$. Details can be found in Lemma~\ref{lem:poly-prop}. Note that such a strong sampling property does not hold if one considers all of $R(\{0,1\}^k)$ and so one has to consider $R(S)$.
(2)  In Corollary~\ref{cor:prop_S}, we show that the space of degree $d$ polynomials (over any field) have distance close to $1/2^d$ even when restricted to a random subset $S \subseteq \{0,1\}^k$, with high probability. (By Schwartz-Zippel we have that distance of degree $d$ polynomials on $\{0,1\}^k$ is $1/2^d$ but Theorem~\ref{cor:prop_S} shows that the same is true when restricted to a random subset $S \subseteq \{0,1\}^k$.) In fact it is interesting that an analogous property holds more generally for linear codes, as we show in Lemma~\ref{lem:random_S}.

These two properties can be combined in a straightforward way (See Section~\ref{sec:tol-test} for details) to give the correctness of the tolerant tester. Taking $T \subseteq \{0,1\}^n$ to be the set of points where $f$ differs from $p$, the first property shows that $\delta_{R(S)}(f,p) $ is a good estimate of the fractional distance $\delta(f,p)$ in $\{0,1\}^n$. By definition of the restriction, we have that $\delta_S(g,p\circ R) = \delta_{R(S)}(f,p)$. The second property shows that the interpolated polynomial $h$ is indeed equal to the true polynomial $p \circ R = p|_{R(S)}$. Hence $\delta_S(g,h) = \delta_{S}(g,p\circ R) = \delta_{R(S)}(f,p) \approx \delta_{\{0,1\}^n}(f,p)$, which shows that the tolerant tester is correct with high probability.

\paragraph{Organization.}
In Section~\ref{sec:prelims} we start with some preliminaries including the main definitions and some of the tools we will need later.
In Section~\ref{sec:results} we give a formal statement of our results.
In Section~\ref{sec:ltall} we present and analyze the local tester over all fields.
In Section~\ref{sec:imposs} we show that over fields of large (or zero) characteristic, local decoding is not possible.
In Section~\ref{sec:ldsmall} we give a local decoder and its analysis over fields of small characteristic.
Finally in Section~\ref{sec:tol-test} we present and analyse the tolerant tester over all fields.

\section{Preliminaries}
\label{sec:prelims}

\subsection{Basic notation} 

Fix a field $\F$ and an $n\in \mathbb{N}$. We consider functions $f:\{0,1\}^n \rightarrow \F$ that can be written as \emph{multilinear} polynomials of total degree at most $d$. We denote this space by $\Fnd{n}{d; \F}$. The space of all functions from $\{0,1\}^n$ to $\F$ will be denoted simply as $\Fn{n;\F}$. (We will simplify these to $\Fnd{n}{d}$ and $\Fn{n}$ respectively, if the field $\F$ is  clear from context.)

Given $f,g\in \Fn{n}$, we use $\delta(f,g)$ to denote the fractional Hamming distance between $f$ and $g$. I.e.,
\[
\delta(f,g) := \prob{x\in \{0,1\}^n}{f(x) \neq g(x)}
\]

For a family $\mc{F}'\subseteq \Fn{n}$, we use $\delta(f,\mc{F}')$ to denote $\min_{g\in \mc{F}'}\{\delta(f,g)\}.$ Given an $f\in \Fn{n}$ and $d\geq 0$, we use $\delta_d(f)$ to denote $\delta(f,\Fnd{n}{d}).$ 

\subsection{Local Testers and Decoders} 

Let $\mathbb{F}$ be any field. We define the notion of a local tester and local decoder for subspaces of $\Fn{n}$. 

\begin{definition}[Local tester]
Fix $q\in \mathbb{N}$ and $\delta\in (0,1)$. Let $\mc{F}'$ be any subspace of $\Fn{n}.$ 

We say that a randomized algorithm $T$ is a \emph{$(\delta,q)$-local tester for $\mc{F}'$} if on an input $f\in \Fn{n}$, the algorithm does the following.
\begin{itemize}
\item $T$ makes at most $q$ non-adaptive queries to $f$ and either accepts or rejects.
\item (Completeness) If $f\in \mc{F}'$, then $T$ accepts with probability at least $3/4$.
\item (Soundness) If $\delta(f,\mc{F}')\geq \delta$, then $T$ rejects with probability at least $3/4$.
\end{itemize}

Further, we say that $T$ is a \emph{$(\delta_1,\delta_2,q)$-tolerant local tester for $\mc{F}'$} if it is a $(\delta_2, q)$-local tester for $\mc{F'}$ which satisfies the following strengthened completeness property.
\begin{itemize}
\item (Tolerant Completeness) If $\delta(f,\mc{F}')\leq \delta_1$, then $T$ accepts with probability at least $3/4$.
\end{itemize}

We say that a tester is \emph{adaptive} if the queries it makes to the input $f$ depend on the answers to its earlier queries. Otherwise, we say that the tester is \emph{non-adaptive.}
\end{definition}

\begin{definition}[Local decoder]
Fix $q\in \mathbb{N}$ and $\delta\in (0,1)$. Let $\mc{F}'$ be any subspace of $\Fn{n}.$ 

We say that a randomized algorithm $T$ is a \emph{$(\delta,q)$-local decoder for $\mc{F}'$} if on an input $f\in \Fn{n}$ and $x\in \{0,1\}^n$, the algorithm does the following.
\begin{itemize}
\item $T$ makes at most $q$ queries to $f$ and outputs $b\in \F$.
\item If $\delta(f,\mc{F}')\leq \delta$, then the output $b = f(x)$ with probability at least $3/4$.
\end{itemize}
We say that a decoder is \emph{adaptive} if the queries it makes to the input $f$ depend on the answers to its earlier queries. Otherwise, we say that the tester is \emph{non-adaptive.}
\end{definition}

\subsection{Some basic facts about binomial coefficients}

\begin{fact}
\label{fac:Hamballs}
For integer parameters $0\leq b\leq a$, let $\binom{a}{b}$ denote the size of a Hamming ball of radius $b$ in $\{0,1\}^a$; equivalently, $\binom{a}{\leq b} = \sum_{j\leq b}\binom{a}{j}.$ Then, we have
\[
\binom{a}{\leq b}\leq 2^{aH(b/a)}
\]
where $H(\cdot)$ is the binary entropy function. 
\end{fact}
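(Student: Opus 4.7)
The plan is to use the standard binomial-expansion argument, which is the quickest route to this bound. First I would dispose of the easy range $b \geq a/2$: in that case $H(b/a) \geq H(1/2) = 1$, so $2^{aH(b/a)} \geq 2^a$, and the bound is trivial since $\binom{a}{\leq b} \leq 2^a$. Thus I may assume $p := b/a \leq 1/2$.

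Next, I would start from the identity
\[
1 = (p + (1-p))^a = \sum_{j=0}^a \binom{a}{j} p^j (1-p)^{a-j},
\]
and restrict attention to the partial sum over $j \leq b$. The key observation is that for each such $j$, since $p \leq 1/2$ we have $(1-p)/p \geq 1$, and therefore
\[
p^j (1-p)^{a-j} = p^b (1-p)^{a-b} \cdot \left(\frac{1-p}{p}\right)^{b-j} \geq p^b (1-p)^{a-b}.
\]
Substituting this lower bound into the restricted sum yields
\[
\binom{a}{\leq b} \cdot p^b (1-p)^{a-b} \leq \sum_{j=0}^b \binom{a}{j} p^j (1-p)^{a-j} \leq 1.
\]

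Finally, I would rearrange and recognize the entropy: $p^{-b}(1-p)^{-(a-b)} = p^{-ap}(1-p)^{-a(1-p)} = 2^{a(-p\log_2 p - (1-p)\log_2(1-p))} = 2^{aH(p)}$. Combining gives $\binom{a}{\leq b} \leq 2^{aH(b/a)}$ as desired. There is no real obstacle here; the only small subtlety is remembering to handle the $b > a/2$ case separately, since the term-by-term comparison used in the main step requires $p \leq 1/2$ to flip the inequality the correct way. I would keep the proof to a few lines and cite it as a textbook fact if space is at a premium.
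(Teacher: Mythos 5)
Your main argument (for $b/a \leq 1/2$) is the standard and correct one: compare the partial sum $\sum_{j\leq b}\binom{a}{j}p^j(1-p)^{a-j}$ term by term against $p^b(1-p)^{a-b}$ and rearrange. The paper states this as a Fact without proof, so there is nothing to compare against on that front; your derivation of $p^{-b}(1-p)^{-(a-b)} = 2^{aH(b/a)}$ is also fine.

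However, your disposal of the range $b \geq a/2$ is wrong, and in a way worth flagging. The binary entropy function is \emph{maximized} at $1/2$, so for $b/a > 1/2$ you have $H(b/a) < 1$, not $H(b/a) \geq 1$ as you claim. In fact the inequality in the Fact genuinely fails there: take $a=4$, $b=3$, so $\binom{4}{\leq 3} = 15$ while $2^{4H(3/4)} \approx 9.5$; at the extreme $b=a$ the right-hand side is $2^{aH(1)} = 1$. So the Fact as stated in the paper is only valid for $b \leq a/2$ (a harmless misstatement, since every application in the paper --- $\binom{k}{\leq k/4}$ and $\binom{k}{\leq d}$ with $d \leq k/M$ --- has $b \leq a/2$). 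The fix for your write-up is simply to restrict the claim to $b \leq a/2$ and drop the "easy range" paragraph, rather than to assert a trivial bound that does not hold.
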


\subsection{Hypercontractivity.}

Hypercontractivity bounds $\ell_q$ norms of functions obtained by averaging a function of bounded $\ell_p$ norm, for some $p < q$, over weighted local neighborhoods. The traditional setting considers the averaging using the $\rho$-biased product measure over $\{0,1\}^n$, also known as the noisy hypercube (see Corollary~\ref{cor:noisy-hyp} below). In this work we also need to consider the setting where the neighborhood is the sphere of a specified distance - for which a hypercontractivity result was established recently by Polyanskiy~\cite{Polyanskiy} (see Theorem~\ref{thm:polanskiy} below).
Let us first recall the definition of $\rho$-biased product measure over $\{0,1\}^n$. 
\begin{definition}
	Let $\rho \in [0,1]$. For fixed $x \in \{0,1\}^n$, we write $y \sim N_\rho(x)$ to denote that the random string $y$ is drawn as follows: for each $i \in [n]$ independently,
	$$
	y_i = \begin{cases}
	x_i &\text{with probability } \rho \\
	\text{uniformly random } &\text{with probability } 1-\rho.
	\end{cases}
	$$ 
\end{definition}

\mitali{I didn't define $T_\rho$ and give the hypercontracivity theorem since we only require the SSE property which is stated directly in O'Donnell.}

For this product measure, Bonami~\cite{bonami} established a hypercontractivity theorem. One of the applications of the hypercontractivity theorem due to Bonami is the generalized small set expansion property of the noisy hypercube, details can be found in the book by O'Donnell~\cite[Chapter 10]{donnell}. 

\begin{theorem}\label{thm:gen-hyp}
	Let $0 \leq \rho \leq 1$. Let $A,B \subseteq \{-1,1\}^n$ have volumes $\exp\left(\frac{-a^2}{2} \right)$, $\exp\left(\frac{-b^2}{2}\right)$ and assume $0 \leq \rho a \leq b \leq a$. Then,
	$$\Pr_{\substack{ x \sim \{-1,1\}^n \\ y \sim N_\rho(x)}} [x \in A, y \in B] \leq  \exp\left(-\frac{1}{2}\frac{a^2 - 2\rho ab + b^2}{1 - \rho^2}\right ).$$
\end{theorem}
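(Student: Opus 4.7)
The plan is to derive Theorem~\ref{thm:gen-hyp} from the two-function (Bonami--Beckner) hypercontractivity inequality on the Boolean cube, which asserts that whenever $p, q \geq 1$ satisfy $(p-1)(q-1) \geq \rho^2$, one has
\[
\mathbb{E}\bigl[f(x) g(y)\bigr] \leq \mynorm{f}_p \mynorm{g}_q
\]
for every pair of functions $f, g : \{-1,1\}^n \to \mathbb{R}$, with $x$ uniform and $y \sim N_\rho(x)$. This form is equivalent via H\"older's inequality and duality to the single-function statement $\mynorm{T_\rho f}_q \leq \mynorm{f}_p$, so I can take it as the starting ingredient (it is the content of Bonami's theorem cited in the excerpt).

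First I would rewrite the quantity of interest as the correlated inner product $\mathbb{E}\bigl[1_A(x) 1_B(y)\bigr]$. Applying the two-function inequality with $f = 1_A$, $g = 1_B$ and computing $\mynorm{1_A}_p = |A|^{1/p}$ and $\mynorm{1_B}_q = |B|^{1/q}$ (where $|\cdot|$ denotes probability mass), the hypotheses $|A| = \exp(-a^2/2)$ and $|B| = \exp(-b^2/2)$ give
\[
\Pr\bigl[x \in A,\ y \in B\bigr] \leq \exp\!\left(-\frac{a^2}{2p} - \frac{b^2}{2q}\right)
\]
for every feasible pair $(p, q)$ with $(p-1)(q-1) \geq \rho^2$ and $p, q \geq 1$.

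Next I would optimize the exponent over feasible $(p, q)$. It suffices to take the constraint tight and parametrize $p = 1 + s$, $q = 1 + \rho^2/s$ for $s > 0$. Differentiating $\tfrac{a^2}{1+s} + \tfrac{b^2 s}{s+\rho^2}$ in $s$ and setting the derivative to zero yields $b\rho(1 + s) = a(s + \rho^2)$, whose unique nonnegative solution is $s^* = \rho(b - \rho a)/(a - \rho b)$. A short computation then gives $1 + s^* = a(1-\rho^2)/(a - \rho b)$ and $s^* + \rho^2 = \rho b(1-\rho^2)/(a - \rho b)$, and substituting back produces the identity
\[
\frac{a^2}{p^*} + \frac{b^2}{q^*} = \frac{a(a - \rho b) + b(b - \rho a)}{1 - \rho^2} = \frac{a^2 - 2\rho ab + b^2}{1 - \rho^2},
\]
which is exactly the claimed exponent.

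The main obstacle, and the reason the hypothesis $0 \leq \rho a \leq b \leq a$ is listed, is verifying feasibility of the optimizer: the point $s^* \geq 0$, equivalently $p^*, q^* \geq 1$, requires that $b - \rho a$ and $a - \rho b$ have the same (nonnegative) sign, i.e.\ $\rho a \leq b \leq a/\rho$. The stronger assumption $b \leq a$ together with $\rho \leq 1$ implies $b \leq a/\rho$, and $b \geq \rho a$ is assumed directly, so $(p^*, q^*)$ lies in the feasible region and the optimized bound is attained. If instead $b < \rho a$, the constrained optimum collapses to a boundary point $p = 1$ or $q = 1$, where the two-function inequality degenerates to the trivial estimate $\min\{|A|, |B|\}$; this corresponds to the familiar regime where $\rho$-noise provides no nontrivial decay, and the theorem as stated does not apply. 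Modulo this feasibility check, everything else is the routine single-variable calculus carried out above.
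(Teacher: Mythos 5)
Your proof is correct, and it is essentially the standard argument: the paper does not prove Theorem~\ref{thm:gen-hyp} but cites it from O'Donnell's book (Chapter 10), where the derivation is exactly what you give — apply the two-function hypercontractivity inequality to the indicators $1_A, 1_B$ and optimize over the tight constraint $(p-1)(q-1)=\rho^2$, with the hypothesis $\rho a \leq b \leq a$ guaranteeing feasibility of the optimizer. Your algebra for $s^*$, $1+s^*$, $s^*+\rho^2$, and the resulting exponent all check out.
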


\begin{corollary}\label{cor:noisy-hyp}
	Let $A \subseteq \{-1,1\}^n$ have volume $\alpha$, i.e. let $\mathbb{1}_A:\{-1,1\}^n \rightarrow \{0,1\}$ satisfy $E[\mathbb{1}_A] = \alpha$. Then for any $\rho \in [-1,1]$, 
	$$\Pr_{\substack{ x \sim \{-1,1\}^n \\ y \sim N_\rho(x)}} [x \in A, y \in A] \leq  \alpha^{\frac{2}{1+|\rho|}}.$$
\end{corollary}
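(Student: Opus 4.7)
The plan is to derive the corollary as a direct specialization of Theorem~\ref{thm:gen-hyp}, handling the negative-$\rho$ case by a one-line symmetry reduction. Concretely, I would split on the sign of $\rho$.

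First I would dispose of the case $\rho \in [0,1]$. Given a set $A$ of volume $\alpha \in (0,1]$, define $a = \sqrt{2 \ln(1/\alpha)}$, so that $\alpha = \exp(-a^2/2)$. Set $B = A$ and $b = a$ in Theorem~\ref{thm:gen-hyp}; the hypothesis $0 \leq \rho a \leq b \leq a$ becomes $\rho \leq 1$, which holds. The conclusion of the theorem then reads
\[
\Pr_{x,\, y \sim N_\rho(x)}[x \in A,\, y \in A] \;\leq\; \exp\!\left(-\frac{1}{2}\cdot \frac{2a^2 - 2\rho a^2}{1-\rho^2}\right) = \exp\!\left(-\frac{a^2}{1+\rho}\right) = \alpha^{2/(1+\rho)},
\]
which is the desired bound. (Degenerate edge cases $\alpha = 1$ or $\alpha = 0$ are trivial.)

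Next I would handle $\rho \in [-1,0)$ by a coordinate-flip trick. Interpreting $N_\rho$ for negative $\rho$ in the usual way, namely $\Pr[y_i = x_i] = (1+\rho)/2$ and $\Pr[y_i = -x_i] = (1-\rho)/2$, I would let $z := -y$ (coordinate-wise). A direct check shows $\Pr[z_i = x_i] = (1-\rho)/2 = (1+|\rho|)/2$, so $z \sim N_{|\rho|}(x)$. Writing $-A := \{-w : w \in A\}$, which has the same volume $\alpha$, we get
\[
\Pr_{x,\, y \sim N_\rho(x)}[x \in A,\, y \in A] \;=\; \Pr_{x,\, z \sim N_{|\rho|}(x)}[x \in A,\, z \in -A].
\]
Now I apply Theorem~\ref{thm:gen-hyp} with the sets $A$ and $-A$, both of volume $\alpha = \exp(-a^2/2)$, taking $a = b$ and correlation parameter $|\rho| \in [0,1]$. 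The same computation as before yields the upper bound $\alpha^{2/(1+|\rho|)}$, completing the proof.

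I do not anticipate any real obstacle: the positive-$\rho$ case is just arithmetic in the exponent of Theorem~\ref{thm:gen-hyp}, and the negative-$\rho$ case is a symmetry reduction that is standard for the noise operator on the hypercube. The only mild care needed is in making explicit the meaning of $N_\rho$ when $\rho < 0$, which the paper's definition only covers for $\rho \in [0,1]$; I would spell this out in one sentence before invoking the coordinate-flip.
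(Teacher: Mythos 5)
Your proposal is correct and follows essentially the same route as the paper: the positive-$\rho$ case is the direct specialization $B=A$, $b=a$ of Theorem~\ref{thm:gen-hyp}, and the negative-$\rho$ case is handled by the same complementation trick (your $z=-y$ and $-A$ are exactly the paper's $\overline{y}$ and $\overline{A}$). Your added remark that $N_\rho$ must be given a meaning for $\rho<0$ is a fair point of care that the paper glosses over, but it does not change the argument.
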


\begin{proof}
	We will prove the corollary in two steps, first for positive $\rho \in [0,1]$ and then for negative $\rho \in [-1,0]$. For $\rho \in [0,1]$, we will apply Theorem~\ref{thm:gen-hyp} with $B = A$ so that the volume of $B$, $\exp\left(\frac{-b^2}{2} \right)$ is equal to the volume of $A$, $\exp\left(\frac{-a^2}{2} \right)$ which in turn equals $\alpha$. We get that,
	$$\Pr_{\substack{ x \sim \{-1,1\}^n \\ y \sim N_\rho(x)}} [x \in A, y \in A] \leq  \exp\left(-\frac{1}{2}\frac{a^2 - 2\rho ab + b^2}{1 - \rho^2}\right ) = \exp\left(-\frac{1}{2}\frac{2a^2 - 2\rho a^2 }{1 - \rho^2}\right ) = \alpha^{\frac{2}{1+|\rho|}}.$$
	
	Now let us consider the case where $\rho \in [-1,0]$. Let $\overline{A} \subseteq \{-1,1\}^n$ denote the bitwise complement set of $A$, i.e $\overline{A} = \{a \mid \overline{a} \in A\}$, where the string $\overline{a}$ is the bitwise complement of $a$. Note that $a \in A \Leftrightarrow \overline{a} \in A$ which implies that $|A| = |\overline{A}|$. We also have that if $y \sim N_\rho(x)$ then $\overline{y} \sim N_{-\rho}(x)$. Applying Theorem~\ref{thm:gen-hyp} with $B = \overline{A}$ and $\exp\left(\frac{-b^2}{2} \right) = \exp\left(\frac{-a^2}{2} \right) = \alpha$ and $-\rho$ (which is positive) we get that,
	
	$$\Pr_{\substack{ x \sim \{-1,1\}^n \\ y \sim N_{\rho}(x)}} [x \in A, y \in A] = 
	\Pr_{\substack{ x \sim \{-1,1\}^n \\ \overline{y} \sim N_{-\rho}(x)}} [x \in A, \overline{y} \in \overline{A}]  
	 \leq  \exp\left(-\frac{1}{2}\frac{a^2 - 2(-\rho) ab + b^2}{1 - \rho^2}\right) = \alpha^{\frac{2}{1+|\rho|}}.$$
\end{proof}

Now we will get to the second setting, that of the spherical neighborhoods. Henceforth, let $\mathbb{R}$ be the underlying field. Let $\eta \in (0,1)$ be arbitrary. We define a smoothing operator $T_\eta$, which maps $\Fn{r} = \{f:\{0,1\}^r \rightarrow \mathbb{R}\}$ to itself. For $F\in \Fn{r}$, we define $T_\eta F$ as follows

\[
T_\eta F(x) = \avg{J\in \binom{[r]}{\eta r}}{ F(x\oplus J)}
\]
where $x\oplus J$ is the point $y\in \{0,1\}^r$ obtained by flipping $x$ at exactly the coordinates in $J$. 

Recall that for any $F\in \Fn{r}$ and any $p\geq 1$, $\mynorm{F}_p$ denotes $\avg{x\in \{0,1\}^r}{|F(x)|^p}^{1/p}$.

We will use the following hypercontractivity theorem of Polanskiy~\cite{Polyanskiy}.

\begin{theorem}[Follows from Theorem 1 in~\cite{Polyanskiy}]
\label{thm:polanskiy}
Assume that $\eta \in [1/20,19/20]$ and $\eta_0 = 1/20$. For any $F\in \Fn{r}$, we have
\[
\mynorm{T_\eta F}_2 \leq C\cdot \mynorm{F}_p
\]
for $p = 1+(1-2\eta_0)^2$ and $C$ is an absolute constant.
\end{theorem}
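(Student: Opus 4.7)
The plan is to derive the stated inequality as a direct consequence of Polyanskiy's hypercontractivity theorem for the constant-weight noise operator on the hypercube. Polyanskiy's Theorem 1 provides, for each $\eta \in (0,1)$, an inequality of the form $\mynorm{T_\eta F}_2 \leq \mynorm{F}_{p(\eta)}$ (up to an absolute multiplicative constant), where $p(\eta) > 1$ is an explicit function of $\eta$. This exponent function is symmetric about $\eta = 1/2$, tends to $1$ as $\eta \to 1/2$, and grows as $\eta$ moves toward $0$ or $1$. In particular, on any closed interval strictly inside $(0,1)$, the supremum of $p(\eta)$ is attained at the endpoints.

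First I would exploit the symmetry $T_{1-\eta} F = T_\eta(F \circ \tau)$, where $\tau: \{0,1\}^r \to \{0,1\}^r$ denotes bitwise complementation; since $\tau$ preserves the uniform measure we have $\mynorm{F \circ \tau}_p = \mynorm{F}_p$, and so it suffices to handle $\eta \in [1/20, 1/2]$. Within this range, the hypercontractivity exponent worst case occurs at the extreme value $\eta_0 = 1/20$.

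Next I would invoke Polyanskiy's theorem at $\eta_0$ to obtain $\mynorm{T_{\eta_0} F}_2 \leq C \mynorm{F}_p$ with $p = 1 + (1-2\eta_0)^2$ and $C$ an absolute constant. For a general $\eta \in [1/20, 1/2]$, the exponent $p(\eta)$ produced by Polyanskiy's theorem satisfies $p(\eta) \leq p$. Combined with the monotonicity of $L^q$-norms on a probability space, i.e.\ $\mynorm{F}_{p(\eta)} \leq \mynorm{F}_p$ whenever $p(\eta) \leq p$, this yields $\mynorm{T_\eta F}_2 \leq C \mynorm{F}_p$ uniformly in $\eta \in [1/20, 19/20]$.

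The main technical obstacle is matching the precise statement of Polyanskiy's Theorem 1 (which is phrased in the information-theoretic language of hypercontractivity ribbons and conditional mutual information) to the $L^p$-to-$L^2$ operator bound used here. One must verify that the extracted constant $C$ is genuinely absolute --- independent of $r$, of $\eta$, and of $F$ --- and that the exponent-monotonicity step does not inflate this constant beyond an absolute factor. These are essentially bookkeeping issues once the correct translation between the two formulations is in place.
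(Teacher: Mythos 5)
Your derivation is correct and is precisely the argument the paper leaves implicit: the theorem is stated only as ``follows from Theorem 1 in~\cite{Polyanskiy}'', and the intended route is exactly yours --- reduce $\eta>1/2$ to $1-\eta$ via bitwise complementation (which commutes with the constant-weight averaging and preserves all norms), observe that the exponent $1+(1-2\eta)^2$ from Polyanskiy's theorem is largest at the endpoint $\eta_0=1/20$, and use monotonicity of $L^q$-norms on a probability space to pass to the single exponent $p=1+(1-2\eta_0)^2$ with the same absolute constant. No further comparison is needed since the paper offers no independent proof.
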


\begin{corollary}
\label{cor:hyper}
Assume that $\eta_0,\eta$ are as in the statement of Theorem~\ref{thm:polanskiy} and let $\delta\in (0,1)$ be arbitrary. Say $E\subseteq \{0,1\}^r$ s.t. $|E|\leq \delta\cdot 2^r$. Assume that $(x',x'')\in \{0,1\}^r$ are chosen as follows: $x'\in \{0,1\}^r$ and $I'\in \binom{[r]}{\eta r}$ are chosen i.u.a.r., and we set $x'' = x'\oplus I'$. Then we have
\[
\prob{x',I'}{x'\in E \wedge x''\in E} \leq C\cdot \delta^{1+(1/40)}
\]
where $C$ is the constant from Theorem~\ref{thm:polanskiy}.
\end{corollary}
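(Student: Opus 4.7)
The plan is to express the probability as an inner product and then invoke Polyanskiy's hypercontractivity bound via Cauchy--Schwarz. Let $F = \mathbb{1}_E \in \Fn{r}$, so that $\|F\|_q = \Pr_{x'}[x' \in E]^{1/q}$ for every $q \geq 1$. First I would rewrite
\[
\prob{x',I'}{x' \in E \wedge x'' \in E} = \avg{x'}{F(x') \cdot \avg{I'}{F(x' \oplus I')}} = \avg{x'}{F(x') \cdot T_\eta F(x')},
\]
using the definition of $T_\eta$ from the paragraph preceding Theorem~\ref{thm:polanskiy}.

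Next I would apply Cauchy--Schwarz in $L^2(\{0,1\}^r)$ (with uniform measure) to bound this by $\mynorm{F}_2 \cdot \mynorm{T_\eta F}_2$, and then invoke Theorem~\ref{thm:polanskiy} on the second factor to get
\[
\avg{x'}{F(x') \cdot T_\eta F(x')} \leq \mynorm{F}_2 \cdot C \cdot \mynorm{F}_p
\]
with $p = 1 + (1 - 2\eta_0)^2 = 1 + (9/10)^2 = 181/100$. Since $F$ is the indicator of a set of density at most $\delta$, we have $\mynorm{F}_2 \leq \delta^{1/2}$ and $\mynorm{F}_p \leq \delta^{1/p} = \delta^{100/181}$.

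Putting the pieces together, the bound becomes $C \cdot \delta^{1/2 + 100/181} = C \cdot \delta^{381/362}$. The last step is the numerical check that $381/362 \geq 1 + 1/40 = 41/40$: indeed $381 \cdot 40 = 15240$ while $41 \cdot 362 = 14842$, so $381/362 > 41/40$, and since $\delta \in (0,1]$ this gives $\delta^{381/362} \leq \delta^{1 + 1/40}$, yielding the claimed estimate. There is no real obstacle here; the only thing to be slightly careful about is that Polyanskiy's theorem is stated for $\mynorm{T_\eta F}_2$, so one must use Cauchy--Schwarz (rather than Hölder against $T_\eta F$ in a different norm) to match it to $\mynorm{F}_2$ on one side and $\mynorm{F}_p$ on the other.
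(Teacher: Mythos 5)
Your proof is correct and is essentially the same as the paper's: both write the probability as $\avg{x'}{F(x')\,T_\eta F(x')}$, apply Cauchy--Schwarz together with Theorem~\ref{thm:polanskiy}, and bound $\mynorm{F}_2\leq\sqrt{\delta}$, $\mynorm{F}_p\leq\delta^{1/p}$. The only cosmetic difference is that you plug in $\eta_0=1/20$ and verify $381/362\geq 41/40$ numerically, whereas the paper derives the exponent via the general inequality $\frac{1}{1-2\eta_0(1-\eta_0)}\geq 1+\min\{\eta_0,1-\eta_0\}$; both give the claimed $\delta^{1+(1/40)}$.
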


\begin{proof}
Let $F:\{0,1\}^n\rightarrow \{0,1\}\subseteq \mathbb{R}$ be the indicator function of the set $E$. Note that we have
\[
\prob{x',I'}{x'\in E \wedge x''\in E} = \avg{x',I'}{F(x')F(x'\oplus I')} = \avg{x'}{F(x')T_\eta F(x')}.
\] 

By the Cauchy-Schwarz inequality and Theorem~\ref{thm:polanskiy} we get
\begin{equation}
\label{eq:hyper}
\avg{x'}{F(x')T_\eta F(x')} \leq \mynorm{F}_2\cdot C\cdot \mynorm{F}_p
\end{equation}
for $p = 1+(1-2\eta_0)^2$. Note that we have 
\begin{align*}
\mynorm{F}_p &\leq \delta^{1/p} = \delta^{\frac{1}{1+(1-2\eta_0)^2}}\\
& = \delta^{\frac{1}{2(1-2\eta_0(1-\eta_0))}}\leq (\sqrt{\delta})^{1+\min\{\eta_0,1-\eta_0\}} = \sqrt{\delta}^{1+(1/20)}
\end{align*}
where for the last inequality we have used the fact that for $\eta_0\in [0,1]$ we have
\[
\frac{1}{1-2\eta_0(1-\eta_0)} \geq 1+2\eta_0(1-\eta_0) \geq 1+\min\{\eta_0,1-\eta_0\}.
\]

Putting the upper bound on $\mynorm{F}_p$ together with the fact that $\mynorm{F}_2 \leq \sqrt{\delta}$ and (\ref{eq:hyper}), we get the claim.
\end{proof}

\section{Results}
\label{sec:results}

We show upper and lower bounds for testing and decoding polynomial codes over grids. All our upper bounds hold in the non-adaptive setting, while our lower bounds hold in the stronger adaptive setting. 


Our first result is that for any choice of the field $\F$ (possibly even infinite), the space of functions $\Fnd{n}{d}$ is locally testable. More precisely, we show the following.

\begin{theorem}[$\Fnd{n}{d}$ has a local tester for any field]
\label{thm:test-main-intro}
Let $\F$ be any field. Fix a positive integer $d$ and any $n\in \mathbb{N}$. Then the space $\mc{F}(n,d;\F)$ has a non-adaptive $(\varepsilon,q)$-local tester for $q = 2^{O(d)}\cdot \poly(1/\varepsilon)$.
\end{theorem}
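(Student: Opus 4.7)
The plan is to design a tester $T$ that iteratively reduces the number of variables via the only affine restrictions compatible with the Boolean grid, and then applies brute force at a constant dimension. Starting from $f_n := f$, I would produce a sequence $f_n, f_{n-1}, \ldots, f_k$, where $f_r \colon \{0,1\}^r \to \F$ is obtained from $f_{r+1}(z_1, \ldots, z_{r+1})$ by choosing a uniformly random pair $(i, j)$ of coordinates and a uniformly random bit $b \in \{0,1\}$ and setting $z_i = z_j \oplus b$. The stopping dimension $k$ depends only on $d$ (something like $k = \Theta(d)$), and once we reach $f_k$ the tester queries all $2^k$ values and accepts iff $f_k \in \Fnd{k}{d}$. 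The total query complexity is $2^k \cdot \poly(1/\varepsilon)$ after $\poly(1/\varepsilon)$-fold repetition to amplify rejection probability, matching the claimed $2^{O(d)} \poly(1/\varepsilon)$. Completeness is immediate since every identification/complement restriction maps a multilinear degree-$d$ polynomial to a multilinear degree-$d$ polynomial.

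For soundness I would split the analysis into two regimes, following the template of BKSSZ but substantially reworked. The first (inductive) regime handles the case where $\delta_d(f)$ is at least some absolute constant threshold $\delta_* = 2^{-O(d)}$: here I want to show that $\delta_d(f_r)$ stays bounded below along the sequence with constant probability, so that $f_k$ is with positive probability \emph{not} a degree-$d$ polynomial and is therefore rejected with certainty in the final brute-force step. This proceeds via a contrapositive ``local decodability lemma'' (the Lemma~\ref{lem:LDL} alluded to in Section~\ref{ssec:overview}): if $\delta_d(f_r)$ drops with too high probability over the choice of the pair $(i,j,b)$, one glues together the many degree-$d$ polynomials close to the $f_r$'s into a degree-$d$ polynomial close to $f_{r+1}$, contradicting the hypothesis. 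The algebra is somewhat more involved than in BKSSZ because one has only identification/complement restrictions, but the structural idea is the same.

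The technically heart of the argument is the second regime, where $f$ is moderately close to a unique degree-$d$ polynomial $P$ at distance $\delta \in (0, \delta_*)$ and we need the test to reject with probability $\Omega(\delta)$. Writing $x^y \in \{0,1\}^n$ for the point in the original cube queried in order to compute $f_k(y)$, the rejection event is (essentially) that $f_n(x^y) \neq P(x^y)$ for exactly one $y \in \{0,1\}^k$. A union bound expresses this as $\sum_y \Pr[f_n(x^y) \neq P(x^y)]$ minus pairwise co-occurrence terms $\Pr[f_n(x^a) \neq P(x^a) \wedge f_n(x^b) \neq P(x^b)]$ for $a \neq b$. Each individual probability equals $\delta$, and the goal is to show the pairwise terms contribute at most $\delta^{1+\Omega(1)}$ so the sum is still $\Omega(2^k \delta)$. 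For this I plan to condition on the random set-union process that determines which coordinates $i \in [n]$ get merged into which variable-class $j \in [k]$; by Corollary~\ref{cor:pdist}, with probability at least $2^{-O(k)}$ every class has size $\Omega(n/k)$. Under this good event, the pair $(x^a, x^b)$ is distributed exactly as a point and its constant-weight noise neighbor at relative distance $\eta \in [1/20, 19/20]$ determined by $|\{j : a_j \neq b_j\}|$ and the class sizes, so Corollary~\ref{cor:hyper} (i.e., Polyanskiy's hypercontractivity) bounds the joint error probability by $C \delta^{1 + 1/40}$, as required.

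The main obstacle, I expect, is the coupling and parameter-tracking in the second regime. Unlike BKSSZ, where the queries $x^a, x^b$ are effectively independent after conditioning, here they lie on a highly correlated slice of $\{0,1\}^n$ whose precise structure depends on both the set-union outcome and the complement bits $b$ chosen along the restriction sequence. One must verify that for all relevant Hamming distances $|a - b|$ in $\{0,1\}^k$ the induced noise parameter $\eta$ lies in the interval where Theorem~\ref{thm:polanskiy} applies, handle the boundary cases where $\eta$ escapes this range by separate elementary arguments, and control the contribution of the $2^{-O(k)}$-probability ``bad'' set-union outcomes (where some class is too small) so they do not overwhelm the gain from hypercontractivity. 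Balancing $k$ against $\varepsilon$ to ensure the final rejection probability is $\Omega(1)$ while keeping $k = O(d)$ is the last quantitative hurdle.
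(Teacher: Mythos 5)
Your test and the overall two--regime architecture (an inductive ``gluing'' lemma for the far case, set--union analysis plus constant-weight hypercontractivity for the close case) match the paper's proof. But there is a genuine gap in the heart of the second regime. You propose to run inclusion--exclusion over \emph{all} $2^k$ queries $y\in\{0,1\}^k$ and to bound every pairwise term $\Pr[x^a\in E\wedge x^b\in E]$ by $C\delta^{1+\Omega(1)}$ via Corollary~\ref{cor:hyper}. This cannot work as stated: the induced noise rate $\eta$ for the pair $(x^a,x^b)$ is roughly $\Delta(a,b)/k$ (up to the class-size fluctuations), so for the $k\cdot 2^{k-1}$ pairs at Hamming distance $1$ (and symmetrically those at distance $k-1$) one has $\eta$ of order $1/k$, far outside the range $[1/20,19/20]$ where Theorem~\ref{thm:polanskiy} applies. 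For such highly correlated pairs the joint error probability can genuinely be $\Omega(\delta)$ rather than $\delta^{1+\Omega(1)}$ (take $E$ a subcube), and since the number of these pairs already exceeds $2^k$, the second-order terms swamp the first-order term $2^k\delta$. Your remark that the boundary cases can be handled ``by separate elementary arguments'' is precisely where the proof breaks; there is no elementary patch, because these are not a few exceptional pairs but the overwhelming majority.

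The paper's resolution is the construction in Claim~\ref{clm:Texists}: one restricts attention to a subset $T\subseteq\{0,1\}^k$ of size $\binom{k}{\leq d}+1$ that is \emph{simultaneously} (a) a code with all pairwise distances in $[k/4,3k/4]$, so that every pair of queries in $T$ has $\eta\in[1/16,15/16]$ and Corollary~\ref{cor:hyper} applies uniformly, and (b) the support of a nonzero element of $\Fnd{k}{d}^\perp$, which guarantees that no two degree-$d$ polynomials can differ at exactly one point of $T$ --- the property needed so that ``exactly one disagreement on $T$'' still certifies that the restricted function is not of degree $d$. The inclusion--exclusion is then carried out only over $T$, where $|T|^2\cdot C\delta^{1+1/40}\ll|T|\cdot\delta$ for $\delta\leq\varepsilon_1=2^{-O(k)}$, yielding rejection probability $\delta\cdot|T|/2$ after conditioning on the good set-union event. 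Without this dual-codeword/code subset your argument either fails outright or, at best, forces $\varepsilon_1$ to shrink super-exponentially in $k$, which would not give the claimed $2^{O(d)}\cdot\poly(1/\varepsilon)$ query complexity. The rest of your outline (completeness, the LDL-style gluing, Corollary~\ref{cor:pdist}, and the final parameter balancing) is consistent with the paper.
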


In contrast, we show that the space $\Fnd{n}{d}$ is \emph{not} locally \emph{decodable} over fields of large characteristic, even for $d=1$.

\begin{theorem}[$\Fnd{n}{d}$ does not have a local decoder for large characteristic]
\label{thm:imposs-large-char-intro}
Let $n\in \mathbb{N}$ be a growing parameter. Let $\F$ be any field such that either $\charF(\F) = 0$ or $\charF(\F)\geq n^2$. Then any adaptive $(\varepsilon,q)$-local decoder for $\mc{F}(n,1;\F)$ that corrects an $\varepsilon$ fraction of errors must satisfy $q = \Omega_\varepsilon(\log n/\log \log n)$. 
\end{theorem}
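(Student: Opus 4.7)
I would follow the route sketched in Section~\ref{ssec:overview}, whose heart is a span lemma: no short list of ``balanced'' vectors in $\{0,1\}^n$ contains $1^n$ in its affine hull over $\F$. Concretely, the target (Lemma~\ref{lem:smallspan}) is: if $x^1,\ldots,x^t\in \{0,1\}^n$ all satisfy $|\sum_j x^i_j - n/2|\leq n/s$ and $1^n = \sum_{i=1}^t a_i x^i$ with $\sum_i a_i = 1$, then $t = \Omega(\log s/\log\log s)$. The bounded-coefficient regime is already suggestive: when $\max_i |a_i|\le 1$, the identity
\[
\frac{n}{2} \;=\; \Bigl|\sum_i a_i\bigl(\textstyle\sum_j x^i_j-\tfrac{n}{2}\bigr)\Bigr| \;\le\; \frac{n}{s}\sum_i|a_i|
\]
immediately forces $t\ge s/2$, so the real work lies in controlling the $a_i$'s when they can be large.

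To bound the $a_i$'s I would prove a Cramer-style lemma (Lemma~\ref{lem:Cramer}): after passing to an affinely independent minimal subset of the $x^i$'s, the unknowns $a_1,\ldots,a_t$ satisfy a $t\times t$ invertible subsystem with $\{0,1\}$ entries, so Cramer's rule together with Hadamard's inequality yield $|a_i|\le t^{t/2}$. Over an abstract $\F$ one first lifts the relevant $\{0,1\}$ matrices to $\mathbb{Z}$ and argues there; the hypothesis $\charF(\F)=0$ or $\charF(\F)\ge n^2$ guarantees that the determinants involved are nonzero and do not collapse modulo the characteristic. Substituting back into the telescoping bound above gives $n/2 \le (n/s)\cdot t\cdot t^{t/2}$, i.e.\ $t^{t/2+1}\ge s/2$, which rearranges to $t=\Omega(\log s/\log\log s)$. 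Choosing $s=\Theta(\sqrt{n})$ (with constant depending on $\varepsilon$), so that imbalanced strings form at most an $\varepsilon$-fraction of $\{0,1\}^n$ by Chernoff together with Fact~\ref{fac:Hamballs}, delivers the desired $\Omega(\log n/\log\log n)$ bound on $t$.

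To convert the span lemma into a decoding lower bound, I would use the standard Yao-style hard distribution: sample an affine function $\ell(x)=c_0+\sum_j c_j x_j$ from a distribution $\mu$ (uniform over $\F^{n+1}$ in the finite-field case, uniform over $c_j\in\{-N,\ldots,N\}\subset\mathbb{Z}$ for a sufficiently large $N$ in the characteristic-zero or large-characteristic case), and define $f$ to agree with $\ell$ on balanced inputs and to be $0$ elsewhere, so that $\delta(f,\ell)\le \varepsilon$. Since $f$ is \emph{deterministically} $0$ on imbalanced inputs, no adaptive decoder gains information from querying there, and we may assume WLOG that all $q$ queries are balanced. By the span lemma, whenever $q=o(\log n/\log\log n)$ the point $1^n$ lies outside the affine hull of the queried balanced inputs, so there is a nonzero affine function $v$ vanishing on every queried input but nonzero at $1^n$; for finite $\F$, adding random scalar multiples of $v$ to $\ell$ leaves the transcript invariant while making $\ell(1^n)$ uniform on $\F$, so the decoder cannot guess it with probability $3/4$. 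The main obstacle is that for characteristic-zero $\F$ the bounded-integer shift by $v$ may push some coefficient outside $\{-N,\ldots,N\}$, breaking exact shift-invariance of $\mu$; I expect to address this with the counting/encoding argument alluded to in Section~\ref{ssec:imposs-proof}, showing that a successful decoder together with its random coins and transcript would encode $\ell$ in fewer bits than its min-entropy under $\mu$, giving the contradiction.
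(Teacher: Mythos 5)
Your proposal follows essentially the same route as the paper: a Cramer/Hadamard bound on the coefficients (the paper's Lemma~\ref{lem:Cramer}, with $t!$ in place of your $t^{t/2}$, either of which suffices), the telescoping imbalance argument giving the span lemma (Lemma~\ref{lem:smallspan}), the hard distribution that zeroes a random bounded-coefficient linear function on the Chernoff-small set of imbalanced inputs, uniformity of $\ell(1^n)$ given the transcript in the finite-characteristic case, and the injective-encoding counting argument for characteristic zero. The one piece you leave unspecified --- the exact encoding --- is carried out in the paper by mapping $a$ to $(\ip{a}{x^1},\ldots,\ip{a}{x^t})$ together with its coordinates on a completing set of standard basis vectors, exactly the min-entropy comparison you anticipate.
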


Complementing the above result, we can show that if $\charF(\F)$ is a constant, then in fact the space $\Fnd{n}{d}$ does have a local decoding procedure. 

\begin{theorem}[$\Fnd{n}{d}$ has a local decoder for constant characteristic]
\label{thm:decoding-small-char-intro}
Let $\charF(\F) = p$ be a positive constant. Fix any $d,n\in \mathbb{N}$. There is a $k\leq pd$ such that the space $\mc{F}(n,d;\F)$ has a non-adaptive $(1/2^{O(k)},4^k)$-local decoder.
\end{theorem}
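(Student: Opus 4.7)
The plan is to implement the random-restriction reduction outlined in Section~\ref{ssec:overview}: reduce the $n$-variate decoding task to one on just $k$ variables (with $k$ depending only on $p$ and $d$) and then solve the small task by invoking Lemma~\ref{lem:useful}.

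\textbf{Choice of parameters and the reduction.} I would fix $t$ minimal with $p^t > d$ and set $k := 2p^t$, so that $d < k/2$ and $k = O(pd)$. Given the target $x^0 \in \{0,1\}^n$, draw $j(1),\ldots,j(n)\in[k]$ uniformly and independently, and for each $y \in \{0,1\}^k$ define the corresponding query point $x^y \in \{0,1\}^n$ coordinatewise by
\[
x^y_\ell \;:=\; y_{j(\ell)} \oplus x^0_\ell,
\]
so that $x^{0^k} = x^0$. For any $P \in \Fnd{n}{d}$, the function $g(y) := P(x^y)$ is a polynomial of degree $\leq d$ in $y_1,\ldots,y_k$ (each $x^y_\ell$ is either $y_{j(\ell)}$ or $1 - y_{j(\ell)}$, so the substitution is affine in $y$), and $g(0^k) = P(x^0)$ is the value the decoder must output.

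\textbf{Decoding.} By Lemma~\ref{lem:useful} applied with $k = 2p^t$ and $d < p^t = k/2$, there exist coefficients $\{c_y\}_{|y| = k/2} \subseteq \F$ (depending only on $k,d,\F$) such that $g(0^k) = \sum_{|y|=k/2} c_y\, g(y)$ for every polynomial $g$ of degree $\leq d$. The decoder queries $f$ at the $\binom{k}{k/2} \leq 2^k \leq 4^k$ points $\{x^y : |y| = k/2\}$ and outputs $b := \sum_{|y|=k/2} c_y\, f(x^y)$. This is non-adaptive.

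\textbf{Correctness.} Let $P \in \Fnd{n}{d}$ be the unique polynomial within distance $\delta$ of $f$. We have $b = P(x^0)$ whenever $f(x^y) = P(x^y)$ for every balanced $y$. The key observation is that for each fixed $y$ with $|y| = k/2$, $x^y$ is marginally uniform on $\{0,1\}^n$: for each $\ell$ the coordinate $x^y_\ell = y_{j(\ell)} \oplus x^0_\ell$ is determined by $j(\ell)$, and since $j(\ell)$ is uniform on $[k]$ while exactly $k/2$ of the entries of $y$ are $1$, $y_{j(\ell)}$ is an unbiased coin, independent of $x^0_\ell$; the independence of $j(1),\ldots,j(n)$ then implies that the coordinates of $x^y$ are jointly independent, so $x^y \sim \mathrm{Unif}(\{0,1\}^n)$. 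Thus $\Pr[f(x^y) \neq P(x^y)] \leq \delta$ for each balanced $y$, and a union bound gives $\Pr[b \neq P(x^0)] \leq 2^k \delta$, which is at most $1/4$ whenever $\delta \leq 1/2^{k+2} = 1/2^{O(k)}$.

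\textbf{The main obstacle} is Lemma~\ref{lem:useful} itself, i.e., proving that every multilinear polynomial $g$ of degree $\leq d < p^t$ in $k = 2p^t$ variables with $g(y) = 0$ for every balanced $y$ must also satisfy $g(0^k) = 0$. This fails over $\mathbb{Q}$ (witness $g = (\sum_i y_i) - k/2$, which vanishes on the middle slice but takes value $-k/2 \neq 0$ at $0^k$), so the argument must use the small-characteristic hypothesis in an essential way; moreover, the naive symmetrization over $S_k$ fails because $k!$ vanishes modulo $p$ when $p \leq k$. My approach would be to first restrict attention to ``symmetric'' representatives accessible via the Hamming-weight basis $\{\binom{|y|}{j}\}_{j \leq d}$ and use Lucas' theorem (specifically that $\binom{p^t}{j} \equiv 0 \pmod p$ for $1 \leq j < p^t$) to see that every such symmetric degree-$\leq d$ polynomial which vanishes at weight $p^t$ must also vanish at weight $0$; then bootstrap to arbitrary kernel elements either by averaging over a group action whose order is coprime to $p$ or by a direct dimension-counting argument on the $S_k$-isotypic components.
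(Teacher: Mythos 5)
Your reduction, decoder, and correctness analysis coincide with the paper's: the same random assignment $j:[n]\to[k]$ (the paper's map $h:[n]\to[2k]$), the same observation that for a \emph{balanced} $y$ each query $x^y$ is marginally uniform on $\{0,1\}^n$, and the same union bound over the $\binom{k}{k/2}\leq 4^{k/2}$ balanced queries. The one genuine gap is exactly the piece you flag: Lemma~\ref{lem:useful} is not proved, and the route you sketch for it does not obviously close. Your symmetric-case computation is fine (a symmetric degree-$\leq d$ polynomial equals $\sum_{j\leq d}\beta_j\binom{w}{j}$ on the weight-$w$ slice, and Lucas gives $\binom{p^t}{j}\equiv 0$ for $1\leq j\leq d<p^t$, so its value on the middle slice is $\beta_0=g(0^k)$), but the ``bootstrap'' to arbitrary $g$ is the entire difficulty. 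Averaging over a $p'$-subgroup of $S_k$ only produces an $H$-invariant polynomial, not a symmetric one, and in characteristic $p\leq k$ the $S_k$-module $\Fnd{k}{d}$ does not split into isotypic components in the way a dimension count would need; neither alternative is a proof. Worse, the most natural way to realize your linear functional --- take $c_y$ constant over all balanced $y$ --- provably fails: $\sum_{|y|=k/2}y^I = \binom{2p^t-|I|}{p^t-|I|}$, which by Lucas is $\equiv 1\pmod p$ for every $1\leq |I|\leq d$, not $0$. So some nonuniform choice of coefficients is forced, and nothing in your sketch produces one.

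The paper's proof of Lemma~\ref{lem:useful} is a short direct computation that sidesteps symmetrization entirely. Writing $G=\sum_{|I|\leq d}\alpha_I Y^I$ on $\{0,1\}^{2k}$ with $k=p^t>d$, it sums $G$ not over all of $B$ but over the sub-slice $B'\subseteq B$ of balanced inputs whose last $k-d$ coordinates are $0$. Each monomial $Y^I$ contributes $\sum_{y\in B'}y^I=\binom{d+k-|I|}{k-|I|}$, and Lucas' theorem (Corollary~\ref{cor:Lucas}) shows this is $\equiv 0\pmod p$ for $1\leq |I|\leq d$ while $\binom{d+k}{k}\not\equiv 0$. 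Hence $\sum_{y\in B'}G(y)=\binom{d+k}{k}\cdot G(0^{2k})$ with an invertible constant, i.e., the coefficients you need are $c_y=\binom{d+k}{k}^{-1}$ for $y\in B'$ and $c_y=0$ otherwise. Supplying this (or some equivalent construction) is necessary to complete your argument; as written, the proposal reduces the theorem to an unproved claim whose suggested proof strategies face concrete obstructions.
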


Finally, we augment our testing theorem to get a tolerant tester for $\Fnd{n}{d}$.
\begin{theorem}
\label{thm:tolerant-Fq-intro}
For any $d\in \mathbb{N}$, there is a fixed $\delta = \delta(d)\in (0,1)$ such that for every $\delta_1 < \delta_2 < \delta$, the space $\Fnd{n}{d}$ has a $(\delta_1,\delta_2,q)$-tolerant local tester, where $q = O_{d,\delta_1,\delta_2}(1) = \left(\frac{O(d)}{(\delta_2-\delta_1)^4}\right)^d$.
\end{theorem}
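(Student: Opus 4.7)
The plan is to build a tolerant tester in two phases. \textbf{Phase~1} runs the (non-tolerant) local tester of Theorem~\ref{thm:test-main-intro} with parameter $\delta := \delta(d)$; this rejects any $f$ that is $\delta$-far from $\Fnd{n}{d}$ with probability $\geq 3/4$, so after this filter we may assume $\delta(f, \Fnd{n}{d}) < \delta(d) < 1/2^{d+1}$. By the polynomial distance lemma ($2^{-d}$ is a lower bound on the distance of $\Fnd{n}{d}$) there is then a \emph{unique} closest polynomial $p \in \Fnd{n}{d}$ to $f$; letting $T := \{x : f(x) \neq p(x)\}$, the remaining task is to estimate $|T|/2^n = \delta(f, p)$ with additive error at most $(\delta_2 - \delta_1)/3$.

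\textbf{Phase~2} is a sample-based distance estimator. Draw a random restriction $R : \{0,1\}^k \to \{0,1\}^n$ with $k = O(d)$ by partitioning $[n]$ into $k$ bins $B_1,\ldots,B_k$ uniformly at random and independently complementing each coordinate with probability $1/2$; the $i$-th coordinate of $R(y)$ then equals $y_{j(i)}$ or $1 - y_{j(i)}$ according to the complementation pattern, where $j(i)$ is the bin of $i$. Independently draw a uniformly random subset $S \subseteq \{0,1\}^k$ of size $s = (O(d)/(\delta_2 - \delta_1)^4)^d$. Query $g(y) := f(R(y))$ for every $y \in S$, brute-force compute the polynomial $h \in \Fnd{k}{d}$ that minimises $\delta_S(g, h)$, and accept iff $\delta_S(g, h) \leq (\delta_1 + \delta_2)/2$.

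Correctness hinges on two high-probability structural properties of the random objects $R$ and $S$. \textbf{Property~(A): $R(S)$ samples $T$ well.} On a typical $(R, S)$ the empirical density $|\{y \in S: R(y) \in T\}|/s$ is within $(\delta_2 - \delta_1)/3$ of $|T|/2^n = \delta(f, p)$. The $\binom{s}{2}$ query pairs are far from independent --- $R(y)$ and $R(y')$ agree on every coordinate whose bin lies in the agreement set of $y, y'$ --- so ordinary Chernoff/Chebyshev bounds fail. The central observation is that, conditioned on a typical (concentrated) choice of bin sizes, the pair $(R(y), R(y'))$ is distributed as $(x', x' \oplus I)$ with $x'$ uniform in $\{0,1\}^n$ and $I$ a random subset of Hamming weight in $[n/20, 19n/20]$. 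Polyanskiy's constant-weight hypercontractivity (Corollary~\ref{cor:hyper}) then bounds the pair-collision probability by $O(\delta(f,p)^{1 + 1/40})$, and analogous higher-moment applications of the same theorem produce the concentration that justifies the choice of $s$. \textbf{Property~(B): random puncturing preserves polynomial distance.} For $s$ of the chosen magnitude, a uniformly random $S$ of size $s$ preserves the $\delta$-weight of every nonzero polynomial in $\Fnd{k}{d}$ up to a factor of $1 \pm 1/10$. This is a Chernoff bound for each distinct zero-pattern in $\{0,1\}^k$ (every nonzero polynomial has $\delta$-weight $\geq 1/2^d$) followed by a union bound over the at most $2^{2^k}$ possible zero-patterns, which works over any field and therefore accommodates infinite $\F$ as well.

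Combining (A) and (B) closes the argument: by (B), the nearest polynomial $h$ to $g$ on $S$ is forced to be $p \circ R$, because any other polynomial is $\geq 1/2^{d+1}$-far from $p \circ R$ on $S$ while $\delta_S(g, p \circ R) \approx \delta(f,p) < 1/2^{d+1}$ by (A); hence $\delta_S(g, h) = \delta_S(g, p \circ R)$ is exactly the sampler estimate controlled by (A), and the threshold $(\delta_1 + \delta_2)/2$ correctly separates $\delta_1$-close from $\delta_2$-far inputs. \textbf{The main obstacle} is Property~(A), where the strong dependence among the queries in $R(S)$ prevents naive concentration and forces the analysis to route through Polyanskiy's spherical-noise hypercontractivity --- precisely the regime for which Corollary~\ref{cor:hyper} was proved. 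Total query complexity is the sum of the Phase~1 complexity $q_0 = 2^{O(d)}$ and the Phase~2 sample size $s$, giving $q = (O(d)/(\delta_2 - \delta_1)^4)^d$ as claimed.
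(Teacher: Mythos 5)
Your architecture matches the paper's exactly (intolerant filter, random bin-restriction $R$, random subset $S\subseteq\{0,1\}^k$, brute-force interpolation on $S$, threshold at $(\delta_1+\delta_2)/2$), but both of your key properties have genuine gaps as stated.

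For Property~(A): a bound of $O(\mu^{1+1/40})$ on $\Pr[R(y),R(y')\in T]$, where $\mu=\delta(f,p)$, is what the \emph{intolerant} tester needs, but it is far too weak for \emph{estimating} $\mu$ to additive accuracy $\epsilon=(\delta_2-\delta_1)/2$. Chebyshev requires the covariance $\Pr[R(y),R(y')\in T]-\mu^2$ to be $O(\epsilon^2)$, and $\mu^{1+1/40}-\mu^2$ is nowhere near $O(\epsilon^2)$ when, say, $\mu\approx 2^{-d}$ and $\epsilon\ll 2^{-d}$ (which is allowed, since $\delta_1<\delta_2<\delta$ are arbitrary). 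The exponent $2/(1+|\rho|)$ must be driven to $2$, i.e.\ the correlation $\rho$ must tend to $0$ with $\epsilon$; this forces the points of $S$ to be pairwise at normalized distance within $O(\epsilon^2)$ of $1/2$, and hence forces $k=\Omega(\log|S|/\epsilon^4)$. The paper takes $k=O(d\log(d/\epsilon)/\epsilon^4)$ and separation $\beta=\tfrac12-\Theta(\epsilon^2)$, which makes the per-pair covariance $O(\epsilon^2)$ uniformly in $\mu$. Your $k=O(d)$ cannot achieve this, and is in any case inconsistent with drawing a subset of size $s=(O(d)/(\delta_2-\delta_1)^4)^d$ from $\{0,1\}^k$. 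A side remark: because each coordinate of $[n]$ is assigned to a bin independently, the pair $(R(y),R(y'))$ is exactly noisy-hypercube distributed, so the paper invokes the standard Bonami small-set expansion (Corollary~\ref{cor:noisy-hyp}) here; Polyanskiy's constant-weight hypercontractivity is needed only in the intolerant tester, where the bins come from the set-union process and the flip-weight is fixed. Your route through Polyanskiy (conditioning on bin sizes) could be made to work, but it is not the bottleneck.

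For Property~(B): a union bound over ``at most $2^{2^k}$ possible zero-patterns'' forces $|S|\gtrsim 2^{k+d}$ for the Chernoff bound to survive the union, which destroys the claimed query complexity. The entire content of the paper's Lemma~\ref{lem:random_S} is that a linear code of dimension $K$ and block length $N$ has at most $\binom{N}{\leq K}$ distinct zero-patterns; for $\Fnd{k}{d}$ this is $2^{k\cdot\binom{k}{\leq d}}=2^{k^{O(d)}}$ rather than $2^{2^k}$, which is what makes $|S|=k^{O(d)}$ samples suffice (and, simultaneously, what makes the argument independent of $|\F|$). Without that counting lemma your Property~(B) does not yield the stated $q$.
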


\section{A local tester for $\Fnd{n}{d}$ over any field}
\label{sec:ltall}

We now present our local tester and its analysis. The reader may find the overview from Section~\ref{ssec:overview} helpful while reading the below.

We start by introducing some notation for this section. Throughout, fix any field $\F.$ We consider functions $f:\{0,1\}^I\rightarrow \F$ where $I$ is a finite set of positive integers and indexes into the set of variables $\{X_i\ |\ i\in I\}$. We denote this space as $\Fn{I}.$ Similarly, $\Fnd{I}{d}$ is defined to be the space of functions of degree at most $d$ over the variables indexed by $I$.

The following is the test we use to check if a given function $f:\{0,1\}^I\rightarrow \mathbb{F}$ is close to $\Fnd{I}{d}$.

\paragraph{Test $T_{k,I}(f_I)$}
\paragraph{Notation.} Given two variables $X$ and $Y$ and $a\in \{0,1\}$, ``replacing $X$ by $a\oplus Y$'' refers to substituting $X$ by $Y$ if $a=0$ and by $1-Y$ if $a = 1$.

\begin{itemize}
\item If $|I| > k$, then
\begin{itemize}
\item Choose a random $a\in \{0,1\}$ and distinct $i_0,j_0\in I$ at random and replace $X_{j_0}$ by $a\oplus X_{i_0}$. Let $f'_I$ denote the resulting restriction of $f_I$.
\item Run $T_{k,I\setminus \{j_0\}}(f'_I)$ and output what it outputs.
\end{itemize}
\item If $|I| = k$ then
\begin{itemize}
\item Choose a uniformly random bijection $\sigma: I\rightarrow [k]$.
\item Choose an $a\in \{0,1\}^{k}$ uniformly at random.
\item Replace each $X_i$ ($i\in I$) with $Y_{\sigma(i)}\oplus a_i$.
\item Check if the restricted function $g(Y_1,\ldots,Y_k)\in \Fnd{k}{d}$ by querying $g$ on all its inputs. Accept if so and reject otherwise.
\end{itemize}
\end{itemize}

\begin{remark}
\label{rem:rand-bij}
It is not strictly necessary to choose a \emph{random} bijection $\sigma$ in the test $T_{k,I}$ and a fixed bijection $\sigma: I\rightarrow [k]$ would do just as well. However, the above leads to a cleaner reformulation of the test in Section~\ref{sec:SDL} below.
\end{remark}

\begin{observation}
\label{obs:queries}
Test $T_{k,I}$ has query complexity $2^k$.
\end{observation}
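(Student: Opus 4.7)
The plan is to observe that the recursive branch of $T_{k,I}$ performs no oracle queries at all — it only chooses random parameters $a\in\{0,1\}$ and distinct $i_0,j_0 \in I$, and then replaces the variable $X_{j_0}$ by the expression $a\oplus X_{i_0}$ to define a new function $f'_I$ on the smaller index set $I\setminus\{j_0\}$. This substitution is purely symbolic: one does not need to know any value of $f_I$ to describe $f'_I$, since $f'_I(x) = f_I(x^*)$ where $x^*_{j_0} = a\oplus x_{i_0}$ and $x^*_i = x_i$ for $i\neq j_0$. The recursion then passes $f'_I$ downward, reducing $|I|$ by exactly one at each level.

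Consequently, all oracle queries to $f_I$ happen in the base case $|I|=k$. In that base case, after another symbolic substitution $X_i \mapsto Y_{\sigma(i)}\oplus a_i$, the test evaluates the resulting function $g \in \Fn{k}$ at every point of $\{0,1\}^k$, which requires exactly $2^k$ evaluations of $g$. A straightforward induction on the depth of recursion (of length $|I|-k$) shows that the chain of restrictions composes to a single well-defined map $\{0,1\}^k \to \{0,1\}^I$; this is immediate because each step removes $j_0$ from the surviving index set, so no coordinate is ever substituted twice, and the substitution $X_{j_0}\mapsto a\oplus X_{i_0}$ is well-defined on the surviving variables since $i_0\neq j_0$. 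Thus each evaluation of $g$ at a point $y\in\{0,1\}^k$ equals $f_I$ evaluated at the single point of $\{0,1\}^I$ obtained by applying this composed map to $y$, costing exactly one query to $f_I$.

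Summing gives $2^k$ queries in total. There is no hard step: the observation is essentially definitional, the only thing worth spelling out being the inductive check that the restrictions compose to a single-valued map back to $\{0,1\}^I$.
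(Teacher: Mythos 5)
Your argument is correct and is exactly the justification the paper leaves implicit: the recursive steps are purely symbolic substitutions costing no queries, and the base case queries the $k$-variate restriction $g$ on all $2^k$ inputs, each corresponding to one query to $f_I$. The paper states this as an observation without proof, so your write-up simply makes explicit the same definitional reasoning.
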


\begin{observation}
\label{obs:completeness}
If $f_I\in \Fnd{I}{d}$, then $T_{k,I}$ accepts with probability $1$.
\end{observation}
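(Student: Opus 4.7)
The plan is to observe that every step of the test $T_{k,I}$ applies an affine substitution to the variables of $f_I$, and affine substitutions cannot increase the total degree of a multilinear polynomial. Hence starting from $f_I \in \Fnd{I}{d}$, the final function $g$ handed to the exhaustive check will still lie in $\Fnd{k}{d}$, so the tester accepts deterministically.

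More concretely, I would prove this by induction on $|I| - k$. In the inductive step, the tester picks $i_0,j_0 \in I$ distinct and $a \in \{0,1\}$, and replaces $X_{j_0}$ by $a \oplus X_{i_0}$. Writing $f_I$ as a multilinear polynomial of degree at most $d$, this substitution replaces each monomial $m$ containing $X_{j_0}$ by at most two monomials (from expanding $a \oplus X_{i_0}$, which equals $a + X_{i_0}$ if $a=0$ and $1 - X_{i_0}$ if $a=1$), each of which has degree no greater than $\deg(m)$. Since $X_{i_0}$ is multilinear (i.e., $X_{i_0}^2 = X_{i_0}$ on $\{0,1\}$), after reducing, the resulting polynomial $f'_I$ in the variables $\{X_i : i \in I \setminus \{j_0\}\}$ is again multilinear and has total degree at most $d$. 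Thus $f'_I \in \Fnd{I \setminus \{j_0\}}{d}$, and the recursive call $T_{k,I\setminus\{j_0\}}(f'_I)$ accepts with probability $1$ by the induction hypothesis.

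In the base case $|I| = k$, the tester applies the affine substitution $X_i \mapsto Y_{\sigma(i)} \oplus a_i$ for each $i \in I$. By the same reasoning as above (each substitution is either $Y_{\sigma(i)}$ or $1 - Y_{\sigma(i)}$, and reducing modulo $Y_j^2 = Y_j$ preserves degree), the resulting polynomial $g(Y_1,\ldots,Y_k)$ is a multilinear polynomial of degree at most $d$, i.e., $g \in \Fnd{k}{d}$. The exhaustive query then accepts.

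The only potential subtlety is ensuring that degree is preserved under our specific substitutions over $\{0,1\}^n$ (where we must enforce multilinearity after substituting $X_{j_0} \mapsto a \oplus X_{i_0}$, since the monomial $X_{i_0} X_{j_0}$ becomes $X_{i_0}(a \oplus X_{i_0})$ which, after using $X_{i_0}^2 = X_{i_0}$, reduces to a linear or constant term rather than a quadratic one). But this only decreases the degree further, so there is no obstacle; the observation is essentially immediate.
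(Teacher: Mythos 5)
Your proof is correct and matches the paper's (implicit) reasoning: the paper states this as an unproved observation precisely because each substitution $X_{j_0}\mapsto a\oplus X_{i_0}$ is an affine, degree-non-increasing map, so the final restricted function remains in $\Fnd{k}{d}$ and the exhaustive check accepts. Your induction and the remark about multilinear reduction only lowering degree are exactly the details one would fill in.
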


The following theorem is the main result of this section and implies Theorem~\ref{thm:test-main-intro} from Section~\ref{sec:results}.

\begin{theorem}
\label{thm:test-main}
For each positive integer $d$, there is a $k  = O(d)$ and $\varepsilon_0 = 1/2^{O(d)}$ such that for any $I$ of size at least $k+1$ and any $f_I\in \mc{F}(I)$,
\[
\prob{}{\text{Test $T_{k,I}$ rejects $f_I$}} \geq \frac{1}{2^{O(d)}}\cdot \min\{\delta_d(f_I),\varepsilon_0\}.
\]
\end{theorem}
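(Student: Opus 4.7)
The plan is to induct on $|I|$, treating each recursive step of $T_{k,I}$ as a random restriction (identifying $X_{j_0}$ with $a\oplus X_{i_0}$) and reducing to the base case $|I|=k$, at which point the test brute-force checks membership in $\Fnd{k}{d}$. I split the analysis into two regimes: the \emph{far} case $\delta_d(f_I)\geq \varepsilon_0$ and the \emph{close} case $\delta_d(f_I)<\varepsilon_0$.

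For the far case, the key tool is a Local Decode Lemma (Lemma~\ref{lem:LDL}): if a function $h$ on $J$ variables is $\alpha$-far from $\Fnd{J}{d}$ with $\alpha\leq \varepsilon_0$, then a single recursive step preserves farness except with small failure probability. The proof of Lemma~\ref{lem:LDL} is algebraic and mirrors BKSSZ: if sufficiently many restrictions were close to degree-$d$ polynomials, the interpolants could be glued together (using $X_{j_0}=a\oplus X_{i_0}$) into a low-degree polynomial that contradicts the assumed farness of $h$. Iterating over the $|I|-k$ recursive steps and taking a union bound, with constant probability the base function $g\in \Fn{k}$ remains $\varepsilon_0$-far from $\Fnd{k}{d}$, and the brute-force check then rejects with certainty. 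This delivers rejection probability $\Omega(1)\geq \varepsilon_0/2^{O(d)}$.

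For the close case, there is a unique $P\in \Fnd{I}{d}$ with $\delta(f_I,P)=\delta\leq \varepsilon_0$; let $B=\{x:f_I(x)\neq P(x)\}$. After all substitutions, each $y\in \{0,1\}^k$ corresponds to a random query point $x^y\in \{0,1\}^I$, and the restriction of $P$ is a degree-$d$ polynomial $p(y)=P(x^y)$. Since $\Fnd{k}{d}$ has minimum distance at least $2^{-d}$ by the polynomial distance lemma, whenever $Z:=|\{y:x^y\in B\}|$ satisfies $1\leq Z<2^{k-d-1}$ the function $g(y)=f_I(x^y)$ lies within the unique-decoding radius of $p$ but differs from it, so $g\notin \Fnd{k}{d}$ and the test rejects. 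The random shift $a\in \{0,1\}^k$ makes each $x^y$ individually uniform on $\{0,1\}^I$, so $E[Z]=2^k\delta$; the main work is lower bounding $\Pr[1\leq Z<2^{k-d-1}]$ by $\delta/2^{O(d)}$, for which I need to control $E[Z^2]$ and apply Markov for the upper tail.

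Controlling $E[Z^2]$ requires understanding pairwise correlations between $x^a$ and $x^b$. The Hamming distance $|x^a\oplus x^b|$ equals $\sum_{j:a_j\neq b_j} s_j$, where $s_j$ counts variables $X_i$ ultimately identified with $Y_j$; the joint distribution of $(s_1,\ldots,s_k)$ is governed by a random set-merging process whose law coincides with the sequence of gap sizes of $k$ markers placed on a random cyclic ordering of $[n]$. Using this reformulation I show (Corollary~\ref{cor:pdist}) that all $s_j\geq n/(4k)$ with probability at least $2^{-O(k)}$. Conditioned on this balanced event, the pair $(x^a,x^b)$ is distributed as two uniformly random points at Hamming distance bounded away from both $0$ and $|I|$, so the constant-weight hypercontractivity of Corollary~\ref{cor:hyper} (derived from Polyanskiy~\cite{Polyanskiy}) yields $\Pr[x^a,x^b\in B\mid \text{balanced}]\leq C\delta^{1+1/40}$. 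A Paley--Zygmund argument conditional on the balanced event then gives $\Pr[Z\geq 1\mid\text{balanced}]\geq 2^k\delta/2^{O(d)}$ (provided $\varepsilon_0=1/2^{O(d)}$ is chosen small enough that $4^k\delta^{1+1/40}\ll 2^k\delta$), and Markov controls $\Pr[Z\geq 2^{k-d-1}]\leq 2^{d+1}\delta$. Unconditioning by the factor $\Pr[\text{balanced}]\geq 2^{-O(k)}$ and using $k=O(d)$ finishes the bound. The main obstacle will be threading together the set-merging combinatorics with the hypercontractive estimate without ever having genuine independence among the $2^k$ query points; once the balanced event is isolated and hypercontractivity is applied to it, the remaining bookkeeping is routine.
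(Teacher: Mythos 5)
Your overall architecture (recursive restrictions, an LDL-type gluing lemma for the far regime, and a set-merging-plus-hypercontractivity analysis for the close regime) matches the paper's, and your close-case endgame via unique decoding ($1\leq Z<2^{k-d-1}$ forces $g\notin\Fnd{k}{d}$) is a legitimate alternative to the paper's device of a dual codeword support $T$. But there are two genuine gaps.

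The first is in the far case. The gluing argument behind Lemma~\ref{lem:LDL} cannot show that ``$\varepsilon_0$-farness is preserved at level $\varepsilon_0$'': if many one-step restrictions of $h$ are $\varepsilon_0$-close to degree-$d$ polynomials, the glued polynomial is only guaranteed to be $(\ell\varepsilon_0/2+2^{-\ell})$-close to $h$, which contradicts $\varepsilon_1$-farness for a strictly larger threshold $\varepsilon_1\gg\varepsilon_0$, not $\varepsilon_0$-farness. So the lemma has the asymmetric form ``if $\delta_d(f_{r+1})>\varepsilon_1$ then $\prob{}{\delta_d(f_r)<\varepsilon_0}$ is small,'' and its conclusion does not match its hypothesis: after one step you only know $\delta_d(f_r)\geq\varepsilon_0$, which is too weak to apply the lemma again. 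Your single-threshold iteration therefore breaks after the first step. The paper's fix is a three-event analysis with two thresholds: either $\delta_d(f_r)>\varepsilon_1$ for all $r$ (then the base check rejects with probability $1$), or some $f_r$ lands in the window $[\varepsilon_0,\varepsilon_1]$ (then the close-case analysis, which must be proved for all $\delta\leq\varepsilon_1$, is applied to that intermediate $f_r$ and yields rejection probability $\geq\varepsilon_0/2^{O(d)}$), or the process jumps over the window, which Lemma~\ref{lem:LDL} rules out with probability $\prod_{r\geq k}(1-100\ell^2/r^2)\geq 2^{-O(\ell)}$. Note also that a union bound over the steps is vacuous here, since $\sum_{r\geq k}100\ell^2/r^2=\Theta(\ell)$; one needs the product form, and the surviving probability is $2^{-O(d)}$, not a constant.

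The second gap is your blanket claim that, conditioned on the balanced event, every pair $(x^a,x^b)$ with $a\neq b$ is at Hamming distance bounded away from $0$ and $|I|$. This is false for $a,b$ that are close (or antipodal) in $\{0,1\}^k$: if $a$ and $b$ differ in a single coordinate $j$, then $x^a\oplus x^b$ is supported exactly on the block $B_j$, whose size may be as small as $r/(4k)$ even on the balanced event, so the flip fraction $\eta\approx 1/(4k)$ falls outside the range required by Corollary~\ref{cor:hyper} and the bound $C\delta^{1+1/40}$ is unavailable. Consequently $\mathbf{E}[Z^2]\leq 2^k\delta+4^kC\delta^{1+1/40}$ is unjustified; this is precisely why the paper restricts attention to a set $T$ with pairwise distance $\Delta'\geq k/4$. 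Your route can be repaired: use the trivial bound $\delta$ for the at most $2^{k(1+H(1/20))+1}$ ordered pairs with $\Delta'(a,b)<k/20$ and hypercontractivity for the rest, which gives $\prob{}{Z\geq 1}\geq \min\{2^{k(1-H(1/20))},\delta^{-1/40}/C\}\cdot\delta/O(1)$; for $k=Md$ with $M$ large and $\varepsilon_1=2^{-O(d)}$ small this still dominates the Markov term $2^{d+1}\delta$, so the subtraction closes. But as written the second-moment estimate is incorrect.
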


Theorem~\ref{thm:test-main-intro} immediately follows from Theorem~\ref{thm:test-main} since to get an $(\varepsilon, 2^{O(d)})$-tester, we repeat the test $T_{k,[n]}$ $t = 2^{O(d)}\cdot \poly(1/\varepsilon)$ many times and accept if and only if each iteration of the test accepts. If the input function $f\in \Fn{n}$ is of degree at most $d$, this test accepts with probability $1$. Otherwise, this test rejects with probability at least $3/4$ for suitably chosen $t$ as above. The number of queries made by the test is $2^{k}\cdot t = 2^{O(d)}\cdot \poly(1/\varepsilon).$

\paragraph{Parameters.} For the rest of this section, we use the following parameters. We choose 
\begin{equation}
\label{eq:choicek}
k = M \cdot d 
\end{equation}
for a large absolute constant $M\in \mathbb{N}$ and set 
\begin{equation}
\label{eq:choiceeps}
\varepsilon_1 = \frac{1}{(4C\cdot 2^{k\cdot H(1/M)})^{40}}
\end{equation}
where $C$ is the absolute constant from Corollary~\ref{cor:hyper}. The constant $M$ is chosen so that 
\begin{equation}
\label{eq:choiceM}
H(1/M) < \frac{1}{20} \qquad\qquad\qquad\text{and}\qquad\qquad\qquad k \geq 100 \log \frac{2}{\varepsilon_1}.
\end{equation}
Note that the second constraint is satisfied for a large enough absolute constant $M$ since we have
\[
\frac{100\log(2/\varepsilon_1)}{k} \leq \frac{40kH(1/M) + 40\log C + O(1)}{k} \leq 40H(1/M) + \frac{40\log C + O(1)}{M}
\]
which can be made arbitrary small for large enough constant $M$.
 Further, we set 
 \begin{equation}
 \label{eq:choiceelleps0}
  \ell = \log \frac{2}{\varepsilon_1}\qquad\qquad \text{and}\qquad\qquad\varepsilon_0 = \frac{\varepsilon_1}{100\ell}.
 \end{equation}

The following are the two main lemmas used to establish Theorem~\ref{thm:test-main}.

\begin{lemma}[Small distance lemma]
\label{lem:SDL}
Fix any $I$ such that $|I| = r > k+1$ and $f_I:\{0,1\}^I\rightarrow\mathbb{F}$ such that $\delta_d(f_I) = \delta \leq \varepsilon_1$.
\[
\prob{}{\text{$T_{k,I}$ rejects $f_I$}} \geq \frac{\delta}{2^{O(k)}}.
\]
\end{lemma}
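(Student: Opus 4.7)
Let $P \in \Fnd{I}{d}$ be a nearest degree-$d$ polynomial to $f_I$ and set $E := \{x \in \{0,1\}^I : f_I(x) \ne P(x)\}$, so that $|E|/2^r = \delta$ where $r = |I|$. Denote by $X^y \in \{0,1\}^I$ (for $y \in \{0,1\}^k$) the $2^k$ random query points produced by the test, and let $N := |\{y : X^y \in E\}|$. If $1 \le N < 2^{k-d}$, then the restricted function $g(y) = f_I(X^y)$ and the degree-$d$ polynomial $g^\star(y) := P(X^y) \in \Fnd{k}{d}$ differ at exactly $N$ points, a Hamming weight strictly between $0$ and $2^{k-d}$; this contradicts $\delta(\Fnd{k}{d}) \ge 2^{-d}$, so $g \notin \Fnd{k}{d}$ and the test rejects. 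It therefore suffices to show $\Pr[1 \le N < 2^{k-d}] \ge \delta / 2^{O(k)}$.

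\textbf{Exposing the random structure.} The test produces a random partition $I = I_1 \sqcup \cdots \sqcup I_k$ with sizes $n_j := |I_j|$ together with independent complementation bits $(b_i)_{i \in I}$; conditional on the sizes $(n_j)$, the labelled partition is uniformly distributed among labelled ordered partitions of $I$ with those sizes. It follows that each $X^y$ is marginally uniform on $\{0,1\}^I$ (so $\mathbb{E}[N] = 2^k \delta$), and for $y \ne y'$ with difference set $D = \{j : y_j \ne y'_j\}$ we have $X^{y'} = X^y \oplus \Delta_{y,y'}$, where $\Delta_{y,y'}$ is, conditionally on $(n_j)$, a uniformly random subset of $I$ of size $N_{y,y'} := \sum_{j \in D} n_j$.

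\textbf{Second moment on balanced partitions.} Let $\mc{B}$ be the event that $n_j = \Theta(r/k)$ for every $j$, so that the flip fraction $N_{y,y'}/r$ falls in $[1/20, 19/20]$ for every ``typical'' pair with $|D| \in [k/M, k(1 - 1/M)]$ (the multiplicative constants in the balanced event being chosen to accommodate the entropy bound below). The random set-union / cycle analysis from the overview (Corollary~\ref{cor:pdist}) gives $\Pr[\mc{B}] \ge 2^{-O(k)}$. For typical pairs under $\mc{B}$, Corollary~\ref{cor:hyper} yields $\Pr[X^y, X^{y'} \in E \mid \mc{B}] \le C \delta^{1 + 1/40}$; for each of the at most $2 \cdot 2^{k + k H(1/M)}$ ``atypical'' pairs (bounded via Fact~\ref{fac:Hamballs}), I use the trivial bound $\delta$. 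Summing the diagonal and off-diagonal contributions,
\[
\mathbb{E}[N^2 \mid \mc{B}] \;\le\; 2^k \delta \;+\; 2^{2k} C \delta^{1 + 1/40} \;+\; 2 \cdot 2^{k + k H(1/M)} \delta.
\]
The choice $\varepsilon_1 = 1/(4 C \cdot 2^{k H(1/M)})^{40}$ together with $H(1/M) < 1/20$ is calibrated so that the middle (typical-pair) term dominates the other two for every $\delta \le \varepsilon_1$, giving $\mathbb{E}[N^2 \mid \mc{B}] = O(2^{k(2 - H(1/M))} \delta)$.

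\textbf{Closing and main obstacle.} By Paley--Zygmund,
\[
\Pr[N \ge 1 \mid \mc{B}] \;\ge\; \frac{(\mathbb{E}[N \mid \mc{B}])^2}{\mathbb{E}[N^2 \mid \mc{B}]} \;=\; \Omega\!\left(2^{k H(1/M)} \delta\right),
\]
while Markov gives $\Pr[N \ge 2^{k-d} \mid \mc{B}] \le \mathbb{E}[N \mid \mc{B}] / 2^{k-d} = 2^d \delta$. Since $k = M d$ with $M$ chosen large enough that $2^{k H(1/M)} \gg 2^d$, the Paley--Zygmund lower bound comfortably exceeds the Markov upper bound, so $\Pr[1 \le N < 2^{k-d} \mid \mc{B}] = \Omega(2^{k H(1/M)} \delta)$. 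Multiplying by $\Pr[\mc{B}] \ge 2^{-O(k)}$ completes the argument. The main obstacles I expect are (i) the simultaneous calibration of $M$, $k$, and $\varepsilon_1$ so that the hypercontractive typical-pair contribution dominates the second moment and so that the Paley--Zygmund bound beats the Markov estimate, and (ii) the companion bound $\Pr[\mc{B}] \ge 2^{-O(k)}$ via the random set-union / cycle process (Corollary~\ref{cor:pdist}), which is the other key technical input that this proof draws on from elsewhere in the paper.
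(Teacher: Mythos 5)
Your proposal is correct in outline but takes a genuinely different route from the paper's, and it contains one concrete miscalibration that must be repaired. The paper does not work with all $2^k$ query points: it constructs (Claim~\ref{clm:Texists}) a special set $T\subseteq\{0,1\}^k$ of size at most $\binom{k}{\le d}+1$ which is the support of a dual codeword of $\Fnd{k}{d}$ and has all pairwise distances in $[k/4,3k/4]$, shows by inclusion--exclusion that with probability at least $\delta|T|/2$ exactly one point of $x(T)$ lands in the error set $E$, and uses the dual-codeword property to force rejection. You replace $T$ by the whole cube, replace ``exactly one disagreement in $T$'' by ``between $1$ and $2^{k-d}-1$ disagreements overall'' (which suffices by the distance of $\Fnd{k}{d}$), and replace inclusion--exclusion by Paley--Zygmund plus Markov. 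This eliminates Claim~\ref{clm:Texists} entirely, at the price of having to control pairs $y,y'$ at very small or very large Hamming distance --- precisely the pairs the paper's $T$ excludes by construction. Both routes rest on the same two probabilistic inputs, Corollary~\ref{cor:pdist} and Corollary~\ref{cor:hyper}, and your final calibration ($2^{kH(1/M)}\gg 2^d$ since $k=Md$) is sound.

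The error is in the definition of ``typical'' pairs. You take these to be the pairs with $|D|\in[k/M,\,k(1-1/M)]$ and assert that the flip fraction then lies in $[1/20,19/20]$. It does not: under any balanced event with $n_j=\Theta(r/k)$, a pair with $|D|=k/M$ has flip fraction $\sum_{j\in D}n_j/r=\Theta(1/M)$, which for the large absolute constant $M$ of~(\ref{eq:choicek}) lies far below the threshold $1/20$ required by Theorem~\ref{thm:polanskiy} and Corollary~\ref{cor:hyper}; no choice of the multiplicative constants in $\mc{B}$ can fix this, so the hypercontractive bound $C\delta^{1+1/40}$ is unjustified for a large swath of your ``typical'' pairs. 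The repair is to take the typical range to be $|D|\in[ck,(1-c)k]$ for an absolute constant $c$ (e.g.\ $c=1/5$, so that $|D|\cdot\frac{r}{4k}\ge r/20$, matching the paper's use of $\Delta'\ge k/4$ giving $\eta\in[1/16,15/16]$). The atypical count then grows to $2\cdot 2^{k(1+H(1/5))}$, but since $1+H(1/5)<2-H(1/M)$ the second moment is still dominated by the typical term $2^{k(2-H(1/M))}\delta/4$ coming from the choice of $\varepsilon_1$ in~(\ref{eq:choiceeps}), Paley--Zygmund still yields $\Omega(2^{kH(1/M)}\delta)\gg 2^d\delta$, and the proof closes with the same $\delta/2^{O(k)}$ bound. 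So the gap is a fixable parameter error, not a structural flaw.
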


\begin{lemma}[Large distance lemma]
\label{lem:LDL}
Fix any $I$ such that $|I| = r$ satisfies $r^2 > 100\ell^2$ and $f_I:\{0,1\}^I\rightarrow\mathbb{F}$ such that $\delta_d(f_I) > \varepsilon_1$. Then
\[
\prob{}{\delta_d(f'_I) < \varepsilon_0} < \frac{100\ell^2}{r^2}.
\]
\end{lemma}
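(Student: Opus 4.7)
The plan is to prove the contrapositive: assume that $p := \Pr[\delta_d(f'_I) < \varepsilon_0] \geq 100\ell^2/r^2$, and construct a degree-$\leq d$ multilinear polynomial $P^*$ on $\{0,1\}^I$ with $\delta(P^*, f_I) \leq \varepsilon_1$, contradicting $\delta_d(f_I) > \varepsilon_1$. The overall architecture mirrors the BKSSZ~\cite{BKSSZ} analysis: good restrictions yield ``local'' low-degree polynomials that are glued via agreement on intersections of restriction-subspaces. The structural differences are that our codimension-$1$ restriction subsets $A_\tau := \{x \in \{0,1\}^I : x_{j_0} = a \oplus x_{i_0}\}$ (for a triple $\tau = (a,i_0,j_0)$) are specific parity-constrained subsets of the hypercube rather than arbitrary affine hyperplanes, and multilinearity over $\{0,1\}^I$ (rather than polynomial identity over $\F^n$) is what drives the algebra.

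First, I will set up uniqueness and extensions. For every ``good'' triple $\tau = (a,i_0,j_0)$ with $\delta_d(f'_I) < \varepsilon_0$, let $P_\tau$ be the closest multilinear degree-$\leq d$ polynomial on $I \setminus \{j_0\}$; this is unique since $2\varepsilon_0 < 2^{-d}$ by our choice of parameters and by the polynomial distance lemma on $\{0,1\}^{r-1}$. Let $\tilde P_\tau$ be the canonical extension to variables $I$ that does not depend on $X_{j_0}$; then $\tilde P_\tau$ is $\varepsilon_0$-close to $f_I$ on the half-hypercube $A_\tau$.

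Next, the algebraic gluing. Fix any single good triple $\tau_1 = (a_1,i_1,j_1)$ and set $P^* := \tilde P_{\tau_1}$. For a ``compatible'' good triple $\tau_2 = (a_2, i_2, j_2)$ with $j_2 \neq j_1$ and $\{i_2,j_2\} \cap \{i_1,j_1\} = \emptyset$, the intersection $A_{\tau_1} \cap A_{\tau_2}$ has size $2^{r-2}$, and both $\tilde P_{\tau_1}$ and $\tilde P_{\tau_2}$ are $\varepsilon_0$-close to $f_I$ there, hence $2\varepsilon_0$-close to each other. Because $r-2 \geq d$, the distance lemma forces $\tilde P_{\tau_1}$ and $\tilde P_{\tau_2}$ to coincide as multilinear polynomials on $A_{\tau_1} \cap A_{\tau_2}$. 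A short algebraic argument (dividing out the linear form $X_{j_2} - X_{i_2} - a_2$ and exploiting the fact that $\tilde P_{\tau_1}$ is independent of $X_{j_1}$, with separate sub-cases for $a_2 \in \{0,1\}$ and for $\operatorname{char}(\F) = 2$) upgrades this to the statement that $\tilde P_{\tau_1}$ agrees with $\tilde P_{\tau_2}$, and thus is $\varepsilon_0$-close to $f_I$, on the entirety of $A_{\tau_2}$.

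Finally, I will use the hypothesis $p \geq 100\ell^2/r^2$ to ensure that enough compatible triples $\tau_2$ exist to cover the hypercube. For a random $x \in \{0,1\}^I$, with probability $1/2$ we have $x \in A_{\tau_1}$ (where $P^*$ already agrees with $f_I$ except on an $\varepsilon_0$ fraction); for $x \notin A_{\tau_1}$, I will show by a pairwise/second-moment computation that $x$ is covered by a compatible good $\tau_2$ except with probability $O(\ell^2/r^2) \cdot r \ll \varepsilon_1$, which is where the $\ell$-dependence in the statement emerges. Accounting for the $\varepsilon_0 = \varepsilon_1/(100\ell)$ error per covering restriction and summing over the $O(\ell)$-many coverings used gives $\delta(P^*, f_I) \leq \varepsilon_1$, the desired contradiction. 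I expect the main obstacle to be exactly the algebraic gluing step: because our restriction subsets are not arbitrary affine hyperplanes, the ``divide out the linear form'' step requires a careful case analysis in $a_2$ and in the characteristic of $\F$, together with handling of degenerate overlaps among $\{i_1,j_1,i_2,j_2\}$. This case analysis is what the overview in Section~\ref{ssec:overview} refers to when noting that ``the actual algebra is different and involves more cases'' than in BKSSZ.
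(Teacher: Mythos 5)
Your overall architecture (argue the contrapositive, take the unique close polynomial for each good restriction, glue via agreement on intersections) matches the paper's, but the gluing step you propose has a genuine gap that I do not think can be repaired. You anchor on a single good triple, set $P^* = \tilde P_{\tau_1}$, and claim that agreement of $\tilde P_{\tau_1}$ and $\tilde P_{\tau_2}$ on $A_{\tau_1}\cap A_{\tau_2}$ can be upgraded to agreement on all of $A_{\tau_2}$. This is false. The difference of the two polynomials, restricted to $A_{\tau_2}$, is a multilinear polynomial on $\{0,1\}^{I\setminus\{j_2\}}$ that vanishes on the half-cube $\{x: x_{j_1} = a_1\oplus x_{i_1}\}$, and such polynomials need not vanish identically: with $a_1=0$ take $\tilde P_{\tau_1}=X_{i_1}$ and $\tilde P_{\tau_2}=X_{j_1}$ (degree $1$), or $\tilde P_{\tau_1}=0$ and $\tilde P_{\tau_2}=X_{i_1}+X_{j_1}-2X_{i_1}X_{j_1}$ (degree $2$). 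In both examples the second polynomial is independent of $X_{j_2}$, the two agree on $A_{\tau_1}\supseteq A_{\tau_1}\cap A_{\tau_2}$, yet they disagree on half of $A_{\tau_2}$. The underlying obstruction is that a good restriction only pins down its polynomial on a half-cube; the canonical extension of $P_{\tau_1}$ to $\{0,1\}^I$ carries no information about $f_I$ off $A_{\tau_1}$, so no single $\tilde P_{\tau_h}$ can serve as the global approximant.

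What the paper does instead is construct a genuinely new polynomial $P$ whose restriction to $R_h$ equals $P^{(h)}$ \emph{simultaneously} for $\ell+1$ bad triples, and this is where the real work lies. Two ingredients are missing from your sketch. First, a combinatorial dichotomy: if the probability in the lemma is at least $100\ell^2/r^2$ there are more than $25\ell^2$ bad pairs, so the graph of bad pairs contains either a matching or a star with $\ell+1$ edges; this structure guarantees both that $\bigcup_h R_h$ has complement of measure $2^{-\ell}\le \varepsilon_1/2$ (which is what actually controls the covering --- your second-moment estimate does not substitute for it, since covering $\{0,1\}^I$ by half-cubes up to measure $\varepsilon_1$ requires about $\ell=\log(2/\varepsilon_1)$ of them in general position) and that every coefficient of $P$ can be read off from at least one $P^{(h)}$ since $|S|\le d\le \ell$. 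Second, the explicit coefficient-by-coefficient definition of $P$: direct in the matching case, and by a downward induction on monomial size in the star case, using only the pairwise consistency $P^{(h)}|_{h'}=P^{(h')}|_{h}$ (which is the one part of your argument that does go through via the distance lemma on $R_h\cap R_{h'}$). Finally, the error accounting is not ``$\varepsilon_0$ per covering restriction summed over $O(\ell)$ coverings'': one must partition $\{0,1\}^I$ into $R_1, R_2\setminus R_1,\ldots$ and pay a conditioning factor of $2^{h-1}$ on the $h$-th piece, which is exactly why $\varepsilon_0$ is taken to be $\varepsilon_1/(100\ell)$.
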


With the above lemmas in place, we show how to finish the proof of Theorem~\ref{thm:test-main}.

\begin{proof}[Proof of Theorem~\ref{thm:test-main}]
Fix any $I$ and consider the behaviour of the test $T_{k,I}$ on $f_I$. Assume $|I| = n$.

A single run of $T_{k,I}$ produces a sequence of functions $f_n = f_I,f_{n-1},\ldots,f_{k}$, where $f_r$ is a function on $r$ variables. Let $I_n = I,I_{n-1},\ldots,I_{k}$ be the sequence of index sets produced. We have $f_r:\{0,1\}^{I_r}\rightarrow \mathbb{F}$. Note that $k\geq 100\ell$ by (\ref{eq:choiceM}) and (\ref{eq:choiceelleps0}).

Define the following pairwise disjoint events for each $r\in \{k,\ldots, n\}$.
\begin{itemize}
\item $\mf{F}_r$ is the event that $\delta_d(f_r) > \varepsilon_1$.
\item $\mf{C}_r$ is the event that $\delta_d(f_r) < \varepsilon_0$.
\item $\mc{E}_r$ is the event that $\delta_d(f_r) \in [\varepsilon_0,\varepsilon_1]$.
\end{itemize}

For any $f_I$, one of $\mf{F}_r,\mf{C}_r,$ or $\mc{E}_r$ occurs with probability $1$. If either $\mc{E}_n$ or $\mf{C}_n$ occurs, then by Lemma~\ref{lem:SDL} we are done. Therefore, we assume that $\mf{F}_n$ holds.

We note that one of the following possibilities must occur: either all the $f_r$ satisfy $\delta_d(f_r) > \varepsilon_1$; or there is some $f_r$ such that $\delta_d(f_r) \in [\varepsilon_0,\varepsilon_1]$; or finally, there is some $f_r$ such that $\delta_d(f_{r+1}) > \varepsilon_1$ but $\delta_d(f_r) < \varepsilon_0$. We handle each of these cases somewhat differently.

Clearly, if $\mf{F}_{k}$ holds, then $\deg(f_{I_k}) > d$ and hence $T_{k,I_{k}}$ rejects $f_{I_{k}}$ with probability $1$. On the other hand, by Lemma~\ref{lem:SDL}, we see that
\[
\prob{}{\text{$T_{k,I}$ rejects $f_I$}\ |\ \bigvee_{r = k}^{n-1} \mc{E}_r} \geq \frac{\varepsilon_0}{2^{O(k)}}.
\]

Thus, we have
\begin{equation}
\label{eq:rej-prob}
\prob{}{\text{$T_{k,I}$ rejects $f_I$}} \geq \frac{\varepsilon_0}{2^{O(k)}}\cdot \prob{}{\bigvee_{r = k}^{n-1} \mc{E}_r \vee \bigwedge_{r = k}^{n-1} \mf{F}_r}
\end{equation}

Let $\mc{E}$ denote the event $\neg(\bigvee_{r = k}^{n-1} \mc{E}_r \vee \bigwedge_{r = k}^{n-1} \mf{F}_r)$. Notice that if event $\mc{E}$ occurs, there must be an $r \geq k$ such that $\mf{C}_r$ occurs but we also have $\mf{F}_{r+1}\wedge\mf{F}_{r+2}\wedge\cdots\wedge\mf{F}_{n} $. By Lemma~\ref{lem:LDL}, the probability of this is upper bounded by $100\ell^2/r^2$ for each $r\geq k.$

By a conditional probability argument, we see that
\[
\prob{}{\neg\mc{E}} \geq \prod_{r \geq k} \left( 1- \frac{100\ell^2}{r^2}\right) \geq \exp\left(-200\ell^2\sum_{r\geq k}\frac{1}{r^2}\right) \geq \exp(-O(\ell)) = \frac{1}{2^{O(k)}}
\]
where we have used the fact that $k\geq 100\ell$ and for the second inequality we also use  $(1-x) \geq \exp(-2x)$ for $x\in [0,1/2]$. Plugging the above into (\ref{eq:rej-prob}), we get the theorem.
\end{proof}

It remains to prove Lemmas~\ref{lem:SDL} and \ref{lem:LDL} which we do in Sections~\ref{sec:SDL} and \ref{sec:LDL} respectively.

\subsection{Proof of Small Distance Lemma (Lemma~\ref{lem:SDL})}
\label{sec:SDL}

 We start with a brief overview of the proof of Lemma~\ref{lem:SDL}. Suppose $f_I$ is $\delta$-close to some polynomial $P$ for some $\delta \leq \varepsilon_1$. As mentioned in Section~\ref{ssec:overview}, our aim is to show that the (random) restriction $g$ of $f$ obtained above and the corresponding restriction $Q$ of $P$ differ at only one point. Then we will be done since any two distinct degree-$d$ polynomials on $\{0,1\}^k$ must differ on at least $2$ points (if $k > d$) and hence the restricted function $g$ cannot be a degree-$d$ polynomial. 

Note that the restriction is effectively given by $a \in \{0,1\}^I$ and $\phi:I\to [k]$
such that $g(y) = f_I(x(y))$ where $x(y) = (x_i(y))_{i \in I}$ is given by $x_i(y_1,\ldots,y_k) = y_{\phi(i)} \oplus a_i$. ($\phi$ is obtained by a sequence of replacements followed by the bijection $\sigma$.) Similarly we define $Q(y) = P(x(y))$.  To analyze the test, 
 we consider the queries $\{x(y)\}_{y \in \{0,1\}^k}$ made to the oracle for $f_I$. For every fixed $y\in \{0,1\}^k$ the randomness (in $a$ and $\phi$) leads to a random query $x(y)\in \{0,1\}^I$ to $f_I$ and it is not hard to show that for each fixed $y$, $x(y)$ is uniformly distributed over  $\{0,1\}^I.$ Hence, the probability that $g$ and $Q$ differ at any fixed $y\in \{0,1\}^k$ is exactly $\delta$. 

We would now like to say that for distinct $y',y''\in \{0,1\}^k$, the probability that $g$ and $Q$ differ at \emph{both} $y'$ and $y''$ is much smaller than $\delta$. This would  be true if, for example, $x(y')$ and $x(y'')$ were independent of each other, but this is unfortunately not the case. For example, consider the case when no $X_i$ ($i\in I$) is identified with the variable $Y_k$ (i.e., for every $i \in I$, $\phi(i) \ne k$).\footnote{Strictly speaking this case can not occur due to the way $\phi$ is constructed, but it is is useful to think about this case anyway.} In this case, $x(y') = x(y'')$ for every $y'$ and $y''$ that differ only at the $k$th position. More generally, if the number of variables that are identified with $Y_k$ is very small (much smaller than the expected number $r/k$) then $x(y')$ and $x(y'')$ would be heavily correlated if $y'$ and $y''$ differed in only the $k$th coordinate.

So, the first step in our proof is to analyze the above restriction process and show that with reasonable probability, for every $Y_j$ there are many variables (close to the expected number) mapped to it, i.e., $|\phi^{-1}(j)|$ is $\Omega(r/k)$ for every $j \in [k]$. To get to this analysis we first give an alternate (non-iterative) description of the test $T_{k,I}$ and analyze it by exploring the random set-union process mentioned in Section~\ref{ssec:overview}. We note that this process and its analysis may be independently interesting. 

Once we have a decent lower bound on $\min_j |\phi^{-1}(j)|$, we can use the hypercontractivity theorem of Polyanskiy (Theorem~\ref{thm:polanskiy}) to argue that for any $y'\neq y''$, the inputs $x(y')$ and $x(y'')$ are somewhat negatively correlated (see Corollary~\ref{cor:hyper}).
We note that since the distribution of the pair $(x(y'),x(y''))$ is not the usual noisy hypercube distribution and so the usual hypercontractivity does not help. But this is where the strength of Polyanskiy's hypercontractivity comes in handy --- even after we fix the Hamming distance between $x(y')$ and $x(y'')$ the symmetry of the space leads to enough randomness to apply Theorem~\ref{thm:polanskiy}. This application already allows us to show a
weak version of Lemma~\ref{lem:SDL} and hence a weak version of our final tester. 

To prove Lemma~\ref{lem:SDL} in full strength as stated, we note that stronger parameters for the lemma are linked to stronger negative correlation between $x(y')$ and $x(y'')$ for various $y'$ and $y''$. It turns out that this is directly related to the Hamming distance of $y'$ and $y''$: specifically, we would like their Hamming distance to not be too close to $0$ or to $k$. Hence, we would like to restrict our attention to a subset $T$ of the query points of $\{0,1\}^k$ that form such a ``code''. At the same time, however, we need to ensure that, as for $\{0,1\}^k$, any two distinct degree-$d$ polynomials cannot differ at exactly one point in $T$. We construct such a set $T$  in Claim~\ref{clm:Texists}, and use it to prove Lemma~\ref{lem:SDL}. 

\vspace*{5pt}

We now begin the formal proof with an alternate
 but equivalent (non-recursive) description of test $T_{k,I}$ for $|I| = r > k$.

\paragraph{Test $T_{k,I}$} (Alternate description)
\begin{itemize}
\item Choose $a\in \{0,1\}^r$ uniformly at random.
\item Choose a bijection $\pi:[r]\rightarrow I$ uniformly at random.
\item Choose $p:\{k+1,\ldots,r\}\rightarrow \mathbb{Z}$ so that each $p(i)$ is uniformly distributed over the set $\{1,\ldots,i-1\}$ and the $p(i)$s are mutually independent. (Here $p(i)$ stands for the ``parent of $i$'').
\item For $i$ in $r,r-1,\ldots,k+1$
\begin{itemize}
\item Substitute $X_{\pi(i)}$ by $a_i \oplus X_{\pi(p(i))}$.
\end{itemize}
\item For $i\in 1,\ldots,k$
\begin{itemize}
\item Replace each $X_{\pi(i)}$ with $a_i\oplus Y_{i}$ for each $i\in [k]$.
\end{itemize}
\item Check if the restricted function $g(Y_1,\ldots,Y_k)$ is of degree at most $d$ by querying $g$ on all its inputs.
Accept if so and reject otherwise.
\end{itemize}

\begin{proposition}
\label{prop:equiv}
The iterative description above is equivalent to test $T_{k,I}$.
\end{proposition}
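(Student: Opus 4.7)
The plan is to construct an explicit bijection between the underlying sample spaces of the two descriptions that preserves both the probability measure and the sequence of substitutions applied to the input function. Since the final function $g(Y_1,\ldots,Y_k)$ is determined entirely by this sequence of substitutions, this suffices.

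Given a run of the iterative test, write $I = I_r \supsetneq I_{r-1} \supsetneq \cdots \supsetneq I_k$ for the sequence of index sets (so $I_{i-1} = I_i \setminus \{j_0^{(i)}\}$ for $i > k$), let $a^{(i)} \in \{0,1\}$ be the bit chosen at step $i$, and let $\sigma:I_k\to [k]$ and $\tilde a \in \{0,1\}^k$ be the final choices. Define $\pi:[r]\to I$ by $\pi(i) = j_0^{(i)}$ for $i > k$ and $\pi(i) = \sigma^{-1}(i)$ for $i \leq k$; define $p:\{k+1,\ldots,r\}\to\mathbb{Z}$ by $p(i) = \pi^{-1}(i_0^{(i)})$; and set $a \in \{0,1\}^r$ via $a_i = a^{(i)}$ for $i > k$ and $a_i = \tilde a_i$ for $i \leq k$. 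A straightforward induction on $i$ from $r$ downwards shows $\pi(\{1,\ldots,i\}) = I_i$, so $\pi$ is genuinely a bijection and, since $i_0^{(i)} \in I_{i-1} = \pi(\{1,\ldots,i-1\})$, the value $p(i)$ lies in $\{1,\ldots,i-1\}$ as required.

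Next I would check that this map $\Phi$ is a measure-preserving bijection onto the sample space of the alternate description. The inverse is immediate: from $(\pi, p, a)$ recover $I_i := \pi(\{1,\ldots,i\})$, $j_0^{(i)} := \pi(i)$, $i_0^{(i)} := \pi(p(i))$, $\sigma := (\pi|_{[k]})^{-1}$, and the $a^{(i)}, \tilde a$ by splitting $a$. For measure preservation, note that under the iterative distribution, conditional on the previous steps, $j_0^{(i)}$ is uniform on $I_i$, $i_0^{(i)}$ is uniform on $I_i\setminus\{j_0^{(i)}\} = \pi(\{1,\ldots,i-1\})$, the final $\sigma$ is a uniform bijection, and all $a^{(i)}$ and $\tilde a$ are uniform and mutually independent. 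Pushed forward by $\Phi$, this is precisely the distribution where $\pi$ is a uniform bijection $[r]\to I$, each $p(i)$ is independent and uniform on $\{1,\ldots,i-1\}$, and $a$ is uniform on $\{0,1\}^r$, which is the alternate distribution.

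Finally, I would verify that $\Phi$-corresponding executions produce identical substitution sequences. In the iterative test, step $i$ (for $i$ decreasing from $r$ to $k+1$) replaces $X_{j_0^{(i)}}$ by $a^{(i)} \oplus X_{i_0^{(i)}}$, which translates under $\Phi$ to $X_{\pi(i)} \leftarrow a_i \oplus X_{\pi(p(i))}$, matching the alternate description's loop verbatim. The iterative final stage replaces $X_j$ (for $j\in I_k$) by $\tilde a_{\sigma(j)} \oplus Y_{\sigma(j)}$; setting $j = \pi(i)$ for $i\in[k]$ this reads $X_{\pi(i)} \leftarrow a_i \oplus Y_i$, again matching the alternate description. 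Hence both tests compute the same $g$ on every run in the bijection, completing the equivalence. The argument is essentially bookkeeping; the only conceptual step is identifying the right bijection, namely the one where $\pi$ records the reverse order of variable elimination and $p(i)$ records the parent variable into which $\pi(i)$ is folded.
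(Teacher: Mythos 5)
Your proof is correct: the bijection $\Phi$ (with $\pi$ recording the reverse elimination order, $p(i)=\pi^{-1}(i_0^{(i)})$, and the conditioning-on-$\pi$ argument for the independence and uniformity of the $p(i)$'s) is exactly the bookkeeping needed, and the substitution sequences do match verbatim. The paper states Proposition~\ref{prop:equiv} without any proof, treating it as routine, so your write-up simply supplies the argument the authors left implicit.
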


We now begin the analysis of the test $T_{k,I}.$ As stated above, the first step is to understand the distribution of the number of $X_i$ ($i\in I$) eventually identified with $Y_j$ (for various $j\in [k]$). We will show (Corollary~\ref{cor:pdist}) that with reasonable probability, each $Y_j$ has $\Omega(r/k)$ $X_i$s that are identified with it. 

Fix any bijection $\pi:[r]\rightarrow [r]$. For $i,j$ such that $i\geq j$ and $i\in \{k,\ldots,r\}$, we define $B_{j, i}$ to be the index set of those variables that are identified with $X_{\pi(j)}$ (or its complement) in the first $r-i$ rounds of substitution. Formally,
\[
B_{j,i} = \left\{
\begin{array}{ll}
\{\pi(j)\} & \text{if $i=r$,}\\
B_{j,i+1}& \text{if $i<r$ and $p(i+1) \neq j$.}\\
B_{j,i+1}\cup B_{i+1,i+1} & \text{if $i<r$ and $p(i+1) = j$.}
\end{array}\right.
\]
For $j\in [k]$, let $B_j = B_{j,k}$. This is the set of $i$ such that $X_{\pi(i)}$ is ``eventually'' identified with $X_{\pi(j)}$ (or its complement). For $i\in [r]$, we define $b(i) = j$ if $i\in B_j$.

To analyze the distribution of the ``buckets'' $B_1,\ldots, B_k$, it will be helpful to look at an equivalent way of generating this distribution. We do this by sampling the buckets in ``reverse'': i.e., we start with the $j$th bucket being the singleton set $\{j\}$ and for each $i = k+1,\ldots,r$, we add $i$ to the $j$th bucket if $i$ falls into the the $j$th bucket.

Formally, for each $j\in [k]$, define the set $B'_{j,i}$ to be $B_{j}\cap [i]$. Note that we have
\[
B'_{j,i+1} = \left\{
\begin{array}{ll}
\{j\} & \text{if $i=k$,}\\
B'_{j,i} & \text{if $i > k$ and $p(i+1) \not\in B'_{j,i}$.}\\
B'_{j,i}\cup \{i+1\} & \text{if $i > k$ and $p(i+1) \in B'_{j,i}$.}
\end{array}\right.
\]
In particular, we see that for any $i \geq k+1$,
\begin{equation}
\label{eq:Bji}
\prob{p}{\pi(i+1)\in B'_{j,i+1}\ |\ B'_{1,i},\ldots,B'_{k,i}} = \prob{p}{p(i+1)\in B'_{j,i}\ |\ p(k+1),\ldots,p(i)} = \frac{|B'_{j,i}|}{i}.
\end{equation}

This yields the following equivalent way of sampling sets from the above distribution.
\begin{lemma}
\label{lem:pdist}
Consider the following sampling algorithm that partitions $[r]$ into $k$ parts. Choose a random  permutation $\sigma$ of the set $\{1,\ldots,r\}$ as follows. First choose a uniform element $i_1\in \{1,\ldots,k\}$. Now choose a uniformly random permutation $\sigma$ of $[r]$ such that $\sigma(1) = 1$ and  assume that the elements of $[k]\setminus \{i_1\}$ appear in the order $i_2,\ldots,i_{k}$ in $\sigma$ ($\sigma(i_2)<\cdots < \sigma(i_k)$). Define $C_1,\ldots,C_k$ as follows: 
\begin{itemize}
\item $C_{1} = \{j > k\ |\  \sigma(j) < \sigma(i_2)\}\cup \{1\}$,
\item $C_{2} = \{j > k\ |\ \sigma(i_2) < \sigma(j) < \sigma(i_3)\}\cup \{2\},$
\item $\ldots$
\item $C_{k-1} = \{j > k\ |\ \sigma(i_{k-1}) < \sigma(j) < \sigma(i_k)\}\cup \{k-1\},$
\item $C_k = \{j > k\ |\ \sigma(i_{k}) < \sigma(j)\}\cup \{k\}$.
\end{itemize}
Then the distribution of $(C_1,\ldots,C_{k})$ is identical to the distribution of $(B_1,\ldots,B_{k})$.
\end{lemma}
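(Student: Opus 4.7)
The plan is to identify both $(B_1,\ldots,B_k)$ and $(C_1,\ldots,C_k)$ with the same explicit distribution on labelled partitions $D_1\sqcup\cdots\sqcup D_k=[r]$ with $j\in D_j$, namely: (1) the size vector $(|D_1|,\ldots,|D_k|)$ is uniform over the $\binom{r-1}{k-1}$ compositions of $r$ into $k$ positive parts, and (2) conditional on the sizes, the non-special elements $\{k+1,\ldots,r\}$ are distributed uniformly at random among the $\binom{r-k}{|D_1|-1,\ldots,|D_k|-1}$ assignments respecting the sizes. Once both sides are shown to admit this description, equality of distributions is immediate.

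For the $B$-process (a P\'olya-urn on $k$ colours starting with one ball each), I would fix a target partition $B_j=\{j\}\cup S_j$ and use the recursive description to compute its probability directly. At step $i\in\{k+1,\ldots,r\}$ the probability that element $i$ is added to its target bucket $b(i)$ is $|B'_{b(i),i-1}|/(i-1)$, and the key observation is that this numerator equals the rank of $i$ within $S_{b(i)}$ in increasing order. Multiplying across all steps, the numerators telescope to $\prod_j (|B_j|-1)!$ while the denominators give $(r-1)!/(k-1)!$, yielding
\[
\Pr[B_j=\{j\}\cup S_j\ \forall j] \;=\; \frac{(k-1)!\,\prod_{j=1}^k(|B_j|-1)!}{(r-1)!},
\]
which depends only on the sizes. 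Multiplying by the multinomial count of assignments compatible with a given size profile gives $1/\binom{r-1}{k-1}$ per profile, verifying (1) and (2) on the $B$-side.

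For the $C$-process, I would interpret the sampling as placing the special element $i_1$ at position $1$ and filling the remaining $r-1$ positions uniformly at random from $[r]\setminus\{i_1\}$. Writing $p_m=\sigma(i_m)$ for the positions of the $k$ special elements in positional order so that $p_1=1<p_2<\cdots<p_k$, the formulas defining $C_1,\ldots,C_k$ give $|C_1|=p_2-1$, $|C_m|=p_{m+1}-p_m$ for $2\le m\le k-1$, and $|C_k|=r-p_k+1$; this is the classical bijection between $k$-subsets of $[r]$ containing $1$ and compositions of $r$ into $k$ positive parts. Because $\{p_2,\ldots,p_k\}$ forms a uniform $(k-1)$-subset of $\{2,\ldots,r\}$ independently of $i_1$, property (1) holds. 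Property (2) then follows because conditional on the positions of the specials, the $r-k$ non-special elements fill the remaining positions uniformly at random, inducing a uniform distribution across arcs.

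The main point requiring care is the role of the uniform choice of $i_1$: without this randomisation, the bucket labels on the $C$-side would be tied to position in a way that is not symmetric in $[k]$, and the labelled distribution would fail to match the symmetric P\'olya-urn distribution on the $B$-side. Randomising $i_1$ exactly restores the symmetry between the $k$ labels, after which properties (1) and (2) force agreement with the $B$-process.
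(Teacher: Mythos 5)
Your proof is correct, but it takes a genuinely different route from the paper's. The paper argues by a direct coupling of the two processes: it samples $\sigma$ by inserting elements $2,\ldots,r$ one at a time at a uniformly random position after element $1$, tracks the buckets $C_{j,i}$ as the insertion proceeds, and observes that the conditional probability that element $i+1$ lands in the $j$th arc is $|C_{j,i}|/i$ --- exactly the transition probability of the $B$-process from (\ref{eq:Bji}) --- so the two Markov chains coincide step by step and no closed-form probability is ever computed. You instead evaluate both distributions explicitly and match them against a common description (uniform composition of sizes, uniform assignment of non-special elements given the sizes). Your P\'olya-urn calculation is right: the numerator at the step where element $i$ joins bucket $b(i)$ is indeed the rank of $i$ within $S_{b(i)}$, the numerators telescope to $\prod_j(|B_j|-1)!$, the denominators give $(r-1)!/(k-1)!$, and multiplying by the multinomial count of assignments with a fixed size profile yields $1/\binom{r-1}{k-1}$; the stars-and-bars identification on the $C$-side is also right. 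Your approach costs more computation but buys more: it produces the exact law of the labelled partition (in particular the exchangeability of the sizes and the uniformity of the composition, which the paper only illustrates informally for $k=2$ in the overview), whereas the paper's coupling is shorter but yields only the equality of distributions. One small caveat: your closing remark about the randomisation of $i_1$ is not really load-bearing --- under the lemma's labelling, $C_m$ is the $m$-th arc in positional order with element $m$ adjoined, so the distribution of $(C_1,\ldots,C_k)$ depends only on the positions of the special and non-special elements and not on which special element is $i_1$; your properties (1) and (2) hold on the $C$-side irrespective of that choice, so the remark can be dropped without weakening the argument.
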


\begin{proof}
Assume $\sigma$ is sampled by starting with the element $1$ and then inserting the elements $i=2,\ldots,r$ one by one in a random position \emph{after} $1$ (since we are sampling $\sigma$ such that $\sigma(1) = 1$). Simultaneously, consider the evolution of the $j$th bucket. Let $C_{j,i}$ denote the $j$th bucket after elements $2,\ldots,i$ have been inserted. 

Note that no matter how the first $k$ elements are ordered in $\sigma$, the element $j\in [k]$ goes to the $j$th bucket at the end of the sampling process. Thus, after having inserted $2,\ldots,k$, we have $C_{j,k} = \{j\}.$

We now insert $(i+1)$ for each $i$ such that $k\leq i < r$. The position of $i+1$ is a uniform random position after the first position. For each $i$, the probability that $i+1$ ends up in the $j$th bucket can be seen to be $|C_{j,i}|/i$, exactly as in (\ref{eq:Bji}). This shows that $(C_1,\ldots, C_k)$ has the same distribution as $(B_1,\ldots,B_k).$
\end{proof}

\begin{corollary}
\label{cor:pdist}
With probability at least $\frac{1}{2^{O(k)}}$ we have $|B_j|\geq \frac{r}{4k}$ for each $j\in [k]$.
\end{corollary}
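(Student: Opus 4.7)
The plan is to invoke Lemma~\ref{lem:pdist}---or more specifically the incremental insertion process described in its proof---to reduce the corollary to a counting problem about compositions of $r$ into $k$ positive parts. In that process, elements $k+1,\ldots,r$ are inserted one at a time, each joining the $j$-th bucket with probability proportional to its current size, starting from the singleton buckets $\{1\},\{2\},\ldots,\{k\}$. This is exactly a P\'olya urn started with one ball of each of $k$ colors, and a standard calculation (multiplying the multinomial ``arrangement count'' by the product of probabilities along any one arrangement) shows that its terminal distribution is \emph{uniform} over ordered tuples $(c_1,\ldots,c_k)$ of positive integers summing to $r$. Thus each of the $\binom{r-1}{k-1}$ compositions is equally likely as the outcome $(|B_1|,\ldots,|B_k|)$.

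With the uniform distribution in hand the question becomes purely combinatorial: of the $\binom{r-1}{k-1}$ compositions, how many have every part at least $m := \lceil r/(4k)\rceil$? The standard substitution $c_j \mapsto c_j - (m-1)$ gives a bijection with arbitrary compositions of $r - k(m-1)$ into $k$ positive parts, whose count is $\binom{r - km + k - 1}{k-1}$. Hence
\[
\prob{}{|B_j| \geq r/(4k)\ \text{for every}\ j \in [k]} \;=\; \frac{\binom{r - km + k - 1}{k-1}}{\binom{r-1}{k-1}}.
\]

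To finish, the case $r \leq 4k$ is immediate ($m = 1$ and the ratio is $1$). For $r > 4k$, the inequality $m \leq r/(4k) + 1$ gives $km \leq r/4 + k \leq r/2$, hence $r - km \geq r/2$. Writing the ratio of binomials as the product $\prod_{i=0}^{k-2}(r - km + k - 1 - i)/(r - 1 - i)$, each factor has numerator at least $r/2$ and denominator at most $r$, so each factor is at least $1/2$ and the product is at least $2^{-(k-1)} = 2^{-O(k)}$. No step here is a genuine obstacle---the key conceptual move is recognizing the P\'olya-urn uniformity over compositions, after which the bound is routine binomial arithmetic.
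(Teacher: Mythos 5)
Your proof is correct, but it takes a genuinely different route from the paper's. You observe that the incremental process behind Lemma~\ref{lem:pdist} is exactly a P\'olya urn started with one ball of each of $k$ colors, whose terminal law is uniform over the $\binom{r-1}{k-1}$ compositions of $r$ into $k$ positive parts (the exchangeability computation you sketch — the product of transition probabilities along any arrival order with final counts $(c_1,\dots,c_k)$ equals $\prod_j (c_j-1)!\cdot (k-1)!/(r-1)!$, and there are $(r-k)!/\prod_j(c_j-1)!$ such orders — checks out). From there the bound is an exact count: the shifted stars-and-bars bijection and the factor-by-factor estimate $\prod_{i=0}^{k-2}\frac{r-km+k-1-i}{r-1-i}\ge 2^{-(k-1)}$ are both valid, using $km\le r/4+k\le r/2$ when $r>4k$, and the case $r\le 4k$ is trivial as you note. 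The paper instead passes through the random-cycle description of Lemma~\ref{lem:pdist} and argues by a two-stage sampling: condition on the ordering of $\{k+1,\dots,r\}$, insert each of $2,\dots,k$ into a designated window of width about $r/(2k)$, and sum the probabilities of the disjoint events $\mc{E}_\tau$ over all bijections $\tau$, obtaining $(k-1)!/(2^{k-1}k^{k-1})\ge 2^{-O(k)}$ via Stirling. Your argument is exact rather than a lower bound on a sub-event, yields a cleaner closed form for the probability, and gives a marginally better constant ($2^{-(k-1)}$); the paper's window argument avoids invoking the uniform-composition fact explicitly, though that fact is implicit in its own cycle representation (and in the remark that for $k=2$ the two sizes are uniform on $\{1,\dots,n-1\}$). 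Either way the conclusion $2^{-O(k)}$ is the same.
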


\begin{proof}
We assume that $r > 4k$ since otherwise the statement to be proved is trivial (as each $|B_j|\geq 1$ with probability $1$.) By Lemma~\ref{lem:pdist} it suffices to prove the above statement for the sets $(C_1,\ldots,C_{k})$. 

Now, say a permutation $\sigma$ of $[r]$ fixing $1$ is chosen u.a.r. and we set $C_j$ as in Lemma~\ref{lem:pdist}. We view the process of sampling $\sigma$ as happening in two stages: we first choose a random linear ordering of $A=\{k+1,\ldots,r\}$, i.e. a random function $\sigma':A\rightarrow [r-k]$, and then inserting the elements $2,\ldots,k$ one by one at random locations in this ordering. (The position of the element $1$ is of course determined.)

Condition on any choice of $\sigma'$. For $j\in \{2,\ldots,k\}$, let $C_j' = \{i\ |\ (j-1)r/k \leq \sigma'(i) \leq (j-1)r/k + \lceil r/2k\rceil\}$. Fix any bijection $\tau:\{2,\ldots,k\}\rightarrow \{2,\ldots,k\}.$ 

Consider the probability that on inserting $2,\ldots,k$ into the ordering $\sigma'$, each $j\in \{2,\ldots,k\}$ is inserted between two elements of $C_{\tau(j)}'$. Call this event $\mc{E}_\tau.$ Conditioned on this event, it can be seen that for each $j\in \{2,\ldots,k\}$, the $j$th bucket $C_j$ has size at least 
\[
\min\{a\ |\ a\in  C'_j\} - \max \{a\ |\ a\in  C'_{j-1}\} \geq \frac{jr}{k}-(\frac{(j-1)r}{k} + \frac{r}{2k}+1) = \frac{r}{2k}-1\geq \frac{r}{4k}
\]
where we have defined $C'_1 = \{0\}$. Similarly, conditioned on $\mc{E}_\tau$, we have $|C_1|\geq r/k\geq r/(4k).$

Since this holds for each $\tau$ and the events $\mc{E}_\tau$ are mutually exclusive, we have
\[
\prob{}{\forall j\in [k],\ |C_j| \geq \frac{r}{4k}} \geq \sum_{\tau} \prob{}{\mc{E}_\tau}.
\]

We now analyze $\prob{}{\mc{E}_\tau}$ for any fixed $\tau$. Conditioned on the positions of $2,\ldots,j-1$, the probability that $\sigma(j)\in C'_{\tau(j)}$ is at least $ (r/(2k))\cdot (1/r) = 1/(2k)$. Therefore we have
\[
\prob{}{\mc{E}_\tau} \geq 1/2^{k-1}k^{k-1}.
\]
 Thus, we get

\[
\prob{}{\forall j\in \{2,\ldots,k\},\ |C_j| \geq \frac{r}{4k}} \geq \sum_{\tau} \prob{}{\mc{E}_\tau} \geq \frac{(k-1)!}{2^{k-1}\cdot k^{k-1}} \geq \frac{1}{2^{O(k)}},
\]
where we have used the Stirling approximation for the final inequality. This concludes the proof of the corollary.
\end{proof}

Note that the sets $B_j$ are determined by our choice of $p$. For the rest of the section, we condition on a choice of $p = p_0$ such that Corollary~\ref{cor:pdist} holds. We now show how to finish the proof of Lemma~\ref{lem:SDL}. 

Fix a polynomial $P\in \Fnd{I}{d}$ such that $\delta(f_I,P) = \delta_d(f_I) = \delta$ as in the lemma statement. Let $E\subseteq \{0,1\}^I$ be the set of points where $f$ and $P$ differ. We have
$
\frac{|E|}{2^r} = \delta \leq \varepsilon_1.
$

For $y',y''\in \{0,1\}^k$, we use $\Delta(y',y'')$ to denote the Hamming distance between them and $\Delta'(y',y'')$ to denote the quantity $\min \{\Delta(y',y''),k-\Delta(y',y'')\}$.

We prove the following two claims. 

\begin{claim}
\label{clm:Texists}
There is a non-empty set $T\subseteq \{0,1\}^k$ such that:
\begin{itemize}
\item $|T|\leq \binom{k}{\leq d} + 1$, 
\item Given distinct $y',y''\in T$, $\Delta'(y',y'')\geq k/4$,
\item No pair of polynomials $P,P'\in \Fnd{I}{d}$ can differ at exactly one input from $T$.\footnote{Note that it could be that two distinct polynomials in $\Fnd{I}{d}$ agree everywhere in $T$.}
\end{itemize}
\end{claim}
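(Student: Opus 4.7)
\emph{Proof plan.} My plan is to build $T$ as the disjoint union $T = T_0 \cup \{y^*\}$, where $T_0 \subseteq \{0,1\}^k$ is a carefully chosen ``basis'' of size $\binom{k}{\leq d}$ and $y^*$ is one additional point. The high-level strategy is to pick $T_0$ to consist of balanced, pairwise far-apart points on which the evaluation map $\mathrm{ev}_{T_0}: \Fnd{k}{d}\to \F^{T_0}$ is a vector-space isomorphism, and then to choose $y^*$ sufficiently far from every point of $T_0$ while additionally not lying in the zero set of any of the Lagrange polynomials $Q_{y_0}$ associated with $T_0$ (where $Q_{y_0}\in \Fnd{k}{d}$ denotes the unique polynomial with $Q_{y_0}(y_0)=1$ and $Q_{y_0}(y')=0$ for $y'\in T_0\setminus \{y_0\}$).

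First, I would construct $T_0$ by sampling points from the middle layer $M=\binom{[k]}{k/2}$. Hypergeometric tail bounds imply that two uniformly random points of $M$ have Hamming distance in $[k/4,3k/4]$ except with probability $e^{-\Omega(k)}$, so by a union bound a small random subset of $M$ has pairwise $\Delta'$-distance at least $k/4$, using the regime $k=Md$ with $M$ a large constant so that $\binom{k}{\leq d}^2 \cdot e^{-\Omega(k)}$ remains negligible. The evaluation of $\Fnd{k}{d}$ on $M$ is injective whenever $d<k/2$ (an instance of the fact underlying Lemma~\ref{lem:useful}), so within $M$ one can select $\binom{k}{\leq d}$ points whose evaluation vectors span the image, via a greedy basis-extension procedure that simultaneously maintains the pairwise distance requirement.

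Next, I would choose $y^*\in \{0,1\}^k$ simultaneously satisfying (i)~$\Delta'(y^*,y_0)\geq k/4$ for every $y_0\in T_0$, and (ii)~$Q_{y_0}(y^*)\neq 0$ for every $y_0\in T_0$. Given such a $y^*$, setting $T=T_0\cup \{y^*\}$ immediately gives properties~1 and~2 of the claim. Property~3 follows by contradiction: if some non-zero $Q\in \Fnd{k}{d}$ were non-zero at a unique point $y_1\in T$, then either $y_1=y^*$, in which case $Q$ vanishes on all of $T_0$ and hence $Q=0$ by the injectivity of $\mathrm{ev}_{T_0}$; or $y_1\in T_0$, in which case $Q$ vanishes on $T_0\setminus \{y_1\}$ and must therefore equal $c\cdot Q_{y_1}$ for some $c\neq 0$, whence $Q(y^*)=c\cdot Q_{y_1}(y^*)\neq 0$ by (ii), contradicting $Q(y^*)=0$.

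The principal obstacle is verifying the existence of $y^*$ satisfying both (i) and (ii). A naive union bound fails for (ii) because each Lagrange polynomial $Q_{y_0}$, though non-zero, can vanish on up to a $(1-2^{-d})$-fraction of $\{0,1\}^k$ by the distance bound for multilinear polynomials; so pairwise non-vanishing is easy but simultaneous non-vanishing over all $y_0 \in T_0$ is not. To overcome this, one leverages the algebraic identity $\sum_{y_0\in T_0} Q_{y_0}\equiv 1$ as polynomials in $\Fnd{k}{d}$ (which holds because the dual-basis expansion of the constant function $1$ must recover $1$), together with the ample latitude provided by the regime $k=Md$ with $M\gg 1$. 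These ingredients let one produce $y^*$ in the thick middle layer at which none of the $Q_{y_0}$'s vanishes and from which all points of $T_0$ are at distance at least $k/4$, completing the construction.
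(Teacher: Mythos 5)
Your construction takes a genuinely different route from the paper's. The paper finds a non-zero dual codeword $f\in\Fnd{k}{d}^\perp$ supported inside a well-separated set $U$ of size $\binom{k}{\leq d}+1$ (the separated set exists by a greedy packing argument, and the dual codeword by solving $\binom{k}{\leq d}$ homogeneous linear equations in the $\binom{k}{\leq d}+1$ unknowns $\{f(y)\}_{y\in U}$), and sets $T=\Supp(f)$; orthogonality then gives property 3 in one line. Your route, by contrast, has two genuine gaps. First, the assertion that evaluation of $\Fnd{k}{d}$ on the middle layer is injective whenever $d<k/2$ is false over general fields, and the claim must hold for arbitrary $\F$: over $\F_2$ with $k=4$ and $d=1$, the polynomial $x_1+x_2+x_3+x_4$ vanishes on every weight-$2$ point. (Lemma~\ref{lem:useful} is a characteristic-$p$ statement about recovering $G(0^{2k})$ for a very particular choice of $k$, not an injectivity statement over all fields.) Even granting injectivity on the middle layer, you would still need to exhibit an interpolating subset $T_0$ of size exactly $\binom{k}{\leq d}$ that simultaneously satisfies the pairwise-distance constraint; ``greedy basis extension while maintaining distance'' is precisely the step that requires proof, since the points whose evaluation vectors would extend the current partial basis might all lie close to points already chosen.

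Second, and more seriously, the existence of $y^*$ with $Q_{y_0}(y^*)\neq 0$ for every $y_0\in T_0$ is asserted but never proved. The identity $\sum_{y_0}Q_{y_0}\equiv 1$ only guarantees that at each point \emph{at least one} Lagrange polynomial is non-zero, which is the opposite of what you need; a union bound over the $\binom{k}{\leq d}$ vanishing sets, each of density up to $1-2^{-d}$, fails completely, and ``ample latitude from $k=Md$'' is not an argument. It is far from clear that such a $y^*$ exists in general: over $\F_2$, for instance, you would need all $\binom{k}{\leq d}$ values $Q_{y_0}(y^*)$ to equal $1$ simultaneously, which already imposes a parity constraint via the identity above. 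The paper's dual-codeword formulation sidesteps both difficulties, because it never requires $T$ to be an interpolating set (the footnote to the claim explicitly allows distinct polynomials to agree on all of $T$) and only requires a non-zero vector in the kernel of a system with more unknowns than equations. I would recommend either supplying genuine proofs for these two steps or switching to the dual-code viewpoint.
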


For each input $y\in \{0,1\}^k$ to the restricted polynomial $g$, let $x(y)\in \{0,1\}^{I}$ be the corresponding input to $f_I$. Let $S$ denote the multiset $\{x(y)\ |\ y\in T\}.$ This is a subset of the set of inputs on which $f_I$ is queried. 

\begin{claim}
\label{clm:onept}
Let $p=p_0$ be as chosen above. With probability at least $\delta\cdot (|T|/2)$ over the choice of $\pi$ and $a$, we have $|S\cap E| = 1$ (i.e. there is a unique $y\in T$ such that $x(y)\in E$).
\end{claim}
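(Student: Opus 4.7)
The plan is to use a standard inclusion-exclusion argument on the events $A_y := \{x(y)\in E\}$ for $y\in T$. For this we need two ingredients: (i) for each fixed $y\in T$, $\Pr[A_y] = \delta$, and (ii) for each distinct pair $y',y''\in T$, $\Pr[A_{y'}\wedge A_{y''}] \leq C\cdot \delta^{1+1/40}$. Given these, inclusion-exclusion yields
\[
\Pr[\text{exactly one } A_y \text{ holds}] \;\geq\; |T|\delta - 2\binom{|T|}{2} C\delta^{1+1/40} \;\geq\; \frac{|T|\delta}{2},
\]
where the last inequality uses $|T|\leq 2^{kH(1/M)}+1$ together with the definition (\ref{eq:choiceeps}) of $\varepsilon_1$, which was chosen precisely to guarantee $(|T|-1) C \delta^{1/40}\leq 1/2$ whenever $\delta\leq \varepsilon_1$.

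For (i): Using the alternate description of $T_{k,I}$, each variable $X_{\pi(i)}$ is substituted by $a'_i \oplus Y_{b(i)}$, where $a'_i = \bigoplus_{j \in \mathrm{path}(i)} a_j$ is the XOR of the $a$-values along the path from $i$ to its root in the substitution tree. Since the linear map $a\mapsto a'$ is a bijection on $\{0,1\}^r$, $a'$ is uniform on $\{0,1\}^r$ when $a$ is. Therefore, for any fixed $y$, the coordinates of $x(y)$ (indexed by $I$ via $\pi$) are mutually independent uniform bits, so $x(y)$ is uniform on $\{0,1\}^I$ and $\Pr[x(y)\in E] = |E|/2^r = \delta$.

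For (ii): This is the main technical step. For distinct $y', y'' \in T$, the bits of $z := x(y')\oplus x(y'')$ satisfy $z_i = y'_{b(\pi^{-1}(i))} \oplus y''_{b(\pi^{-1}(i))}$, so $z$ is the indicator of $\pi\!\left(\bigcup_{j:y'_j\neq y''_j} B_j\right)$, a set whose size is $\sum_{j:y'_j\neq y''_j}|B_j|$. Since we have conditioned on $p_0$ satisfying $|B_j|\geq r/(4k)$ for every $j$, and since $\Delta'(y',y'')\geq k/4$ by Claim~\ref{clm:Texists}, the fractional weight $\eta := |z|/r$ lies in $[1/16,15/16]\subseteq [1/20,19/20]$. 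Moreover, conditional on $p_0$ and on the value of $\eta$, the uniform random choice of $\pi$ makes the set of $1$-positions of $z$ a uniformly random $\eta r$-subset of $I$, independent of $x(y')$ (which stays uniform even after conditioning on $\pi$). Hence the joint law of $(x(y'),x(y''))$ is precisely the ``uniform string together with random constant-weight flip'' distribution to which Corollary~\ref{cor:hyper} applies; averaging the bound from that corollary over the finitely many possible values of $\eta$ yields $\Pr[A_{y'}\wedge A_{y''}\mid p_0]\leq C\delta^{1+1/40}$, as needed.

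The main obstacle, and the step requiring the most care, is verifying that the pair $(x(y'), x(y''))$ really has the ``uniform plus constant-weight noise'' distribution demanded by Corollary~\ref{cor:hyper}. This hinges on two structural observations: that the map $a\mapsto a'$ is a linear bijection (so that, for fixed $\pi$, $x(y')$ is uniform), and that for fixed $p_0$ the remaining randomness in $\pi$ places the $1$-positions of $z$ as a uniformly random $\eta r$-subset. The bucket size lower bound from Corollary~\ref{cor:pdist} is then essential to confirm that $\eta$ falls in the admissible range of Polyanskiy's theorem; without it, $\eta$ could be arbitrarily close to $0$ or $1$ and Corollary~\ref{cor:hyper} would fail to apply.
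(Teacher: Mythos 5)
Your proposal is correct and follows essentially the same route as the paper: inclusion--exclusion over $T$, uniformity of each $x(y)$ via the triangular/bijective structure of the map $a\mapsto a'$, and the pairwise bound obtained by recognizing $(x(y'),x(y''))$ as a uniform point plus a uniformly random constant-weight flip of fractional weight $\eta\in[1/16,15/16]$ (using Corollary~\ref{cor:pdist} and Claim~\ref{clm:Texists}) so that Corollary~\ref{cor:hyper} applies. The only cosmetic difference is that $\eta$ is in fact deterministic given $p_0$ and the pair $y',y''$, so no averaging over $\eta$ is needed.
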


Assuming Claims~\ref{clm:Texists} and \ref{clm:onept}, we have proved Lemma~\ref{lem:SDL} since with probability at least $\frac{1}{2^{O(k)}}\cdot \delta\cdot (|T|/2)$ (cf. Corollary~\ref{cor:pdist} and Claim~\ref{clm:onept}), the restricted function $g(Y_1,\ldots,Y_k)$ differs from the restriction $P'(Y_1,\ldots,Y_k)$ of $P$ at exactly $1$ point in $T$. However, by our choice of the set $T$, any two polynomials from $\Fnd{k}{d}$ that differ on $T$ must differ on at least two inputs. Hence, $g$ cannot be a degree $d$ polynomial, and thus the test rejects.

\subsubsection{Proof of Claim~\ref{clm:Texists}}

Given functions $f,g\in \Fn{k}$, we define their inner product $\ip{f}{g}$ by $\ip{f}{g} = \sum_{y\in \{0,1\}^k}f(y)g(y)$. Recall that $\Fnd{k}{d}^\perp$ is defined to be the set of all $f\in \Fn{k}$ such that $\ip{f}{g} = 0$ for each $g\in \Fnd{k}{d}.$

We will construct $T$ by finding a suitable non-zero $f\in\Fnd{k}{d}^\perp$ and setting $T = \Supp(f)$, where $\Supp(f) = \{y\in \{0,1\}^k\ |\ f(y) \neq 0\}$. Thus, we need $f$ to satisfy the following properties.
\begin{enumerate}
\item $|\Supp(f)|\leq \binom{k}{\leq d} + 1$, 
\item Given distinct $y',y''\in \Supp(f)$, $\Delta'(y',y'')\geq k/4$,
\item No pair of polynomials $P,P'\in \Fnd{I}{d}$ can differ at exactly one input from $\Supp(f)$.
\end{enumerate}

We first observe that Property 3 is easily satisfied. To see this, assume that $g_1,g_2\in \Fnd{k}{d}$ differ at exactly one point, say $y'$, from $\Supp(f)$. Then, since $g = g_1-g_2\in \Fnd{k}{d}$ and $f\in \Fnd{k}{d}^\perp$, we must have $\ip{f}{g} = 0$. On the other hand since $\Supp(g)\cap \Supp(f) = \{y'\}$, we have 
\[
\ip{f}{g} = \sum_{y\in \{0,1\}^k} f(y)g(y) = f(y')g(y') \neq 0
\]
which yields a contradiction. Hence, we see that $g_1$ and $g_2$ cannot differ at exactly one point in $\Supp(f).$

We thus need to choose a non-zero $f\in \Fnd{k}{d}^\perp$ so that Properties 1 and 2 hold. Note that to ensure that $f\in \Fnd{k}{d}^\perp$, it suffices to ensure that for each $A\subseteq [k]$ of size at most $d$ we have
\begin{equation}
\label{eq:ipfA}
\sum_{y\in \{0,1\}^k}f(y)\cdot \prod_{i\in A}y_i = 0.
\end{equation}
The number of such $A$ is $N = \binom{k}{\leq d}.$

To ensure that Properties 1 and 2 hold, it suffices to ensure that $\Supp(f)\subseteq U$ where $U\subseteq \{0,1\}^{k}$ is a set of size $N+1$ so that any distinct $y',y''\in U$ satisfy $\Delta(y',y'')\in [k/4,3k/4]$. (Note that this implies that $\Delta'(y',y'')\geq k/4.$)

To see that such a set $U$ exists, consider the following standard greedy procedure for finding such a set $U$: starting with an empty set, we repeatedly choose an arbitrary point $z$ to add to $U$ and then remove all points at Hamming distance at most $k/4$ and at least $3k/4$ from $z$ from future consideration. Note that this procedure can produce up to $2^{k}/(2\binom{k}{\leq k/4})$ many points. By Fact~\ref{fac:Hamballs} and our choice of $k$ (see (\ref{eq:choicek}) and (\ref{eq:choiceM})) we have
\begin{align*}
\frac{2^{k}}{2\binom{k}{\leq k/4}} &\geq 2^{k(1-H(1/4))-1} \geq 2^{k/20}\\
N = \binom{k}{\leq d} &\leq 2^{kH(d/k)}  <   2^{k/20}.
\end{align*}
Hence, the above greedy procedure can be used to produce a set $U$ of size $N+1$ as required. 

Since we assume that $\Supp(f)\subseteq U$, ensuring (\ref{eq:ipfA}) reduces to ensuring the following for each $A\subseteq [k]$ of size at most $d$:
\begin{equation}
\label{eq:ipfA2}
\sum_{y\in U}f(y)\cdot \prod_{i\in A}y_i = 0.
\end{equation}
Choosing $f(y)$ ($y\in U$) so that the above holds reduces to solving a system of $N$ homogeneous linear equations (one for each $A$) with $|U| = N+1$ constraints. By standard linear algebra, this system has a non-zero solution. This yields a non-zero $f\in \Fnd{k}{d}^\perp$ with the required properties.

\subsubsection{Proof of Claim~\ref{clm:onept}}

Let $y',y''$ be any two distinct points in $T$. Let $\Delta$ denote $\Delta(y',y'')$ and $\Delta'$ denote $\Delta'(y',y'')$. We show that
\begin{align}
\prob{\pi,a}{x(y')\in E} &= \delta\label{eq:l1incenc}\\
\prob{\pi,a}{x(y')\in E \wedge x(y'')\in E}&\leq C\cdot \delta^{1+(1/40)}.\label{eq:l2incenc}
\end{align}
where $C$ is the absolute constant from the statement of Corollary~\ref{cor:hyper}.

Given (\ref{eq:l1incenc}) and(\ref{eq:l2incenc}) we are done since we can argue by inclusion exclusion as follows.
\begin{align*}
\prob{\pi,a}{|S\cap E| = 1} &\geq \sum_{y\in T} \prob{}{x(y)\in E} - \sum_{y'\neq y''\in T} \prob{}{x(y')\in E\wedge x(y'')\in E}\\
&\geq \delta\cdot |T| - |T|^2 \cdot C\cdot  \delta^{1+(1/40)} \qquad\qquad\qquad  (\text{by }(\ref{eq:l1incenc}) \text{ and } (\ref{eq:l2incenc}))\\
&\geq \delta\cdot |T| (1-(\binom{k}{\leq d} + 1)\cdot C\cdot  \varepsilon_1^{1/40})\qquad\ (\because\ \delta \leq \varepsilon_1, |T|\leq \binom{k}{\leq d} + 1)
\end{align*}
Note that by our choice of $\varepsilon_1$ (see~(\ref{eq:choiceeps})) and Fact~\ref{fac:Hamballs} we have
\[
\varepsilon_1^{1/40} \leq \frac{1}{4C2^{kH(d/k)}} \leq \frac{1}{2C\cdot (\binom{k}{\leq d} + 1)},
\]
which along with our previous computation yields
\[
\prob{\pi,a}{|S\cap E| = 1} \geq \delta\cdot \frac{|T|}{2}.
\]

This finishes  the proof of the Claim using (\ref{eq:l1incenc}) and (\ref{eq:l2incenc}). We now prove (\ref{eq:l1incenc}) and (\ref{eq:l2incenc}).

To prove (\ref{eq:l1incenc}), we consider the distribution of $x(y')$ for any fixed $y'\in \{0,1\}^k$.  Condition on any choice of $\pi$. For any $i\in [r]$, let $A_i = \{j\ | i\in \bigcup_{i'} B_{j,i'}\}$. Note that $\pi(j) < \pi(i)$ for each $j\in A_i$. We have
\begin{equation}
\label{eq:xy'}
x(y')_{\pi(i)} = a_{i} \oplus \bigoplus_{j\in A_i} a_j \oplus y'_{b(i)}.
\end{equation}
which is a uniform random bit even after conditioning on all $a_j$ for $j < i$. In particular, it follows that for each choice of $\pi$, $x(y')$ is a uniformly random element of $\{0,1\}^I$. This immediately implies (\ref{eq:l1incenc}). Also note that since $x(y')$ has the same distribution \emph{for each choice of $\pi$}, the random variables $x(y')$ and $\pi$ are independent from each other. 

To prove (\ref{eq:l2incenc}), we will use our corollary to Polyanskiy's Hypercontractivity theorem (Corollary~\ref{cor:hyper}). Let $D\subseteq [k]$ be the set of coordinates where $y'$ and $y''$ differ. Condition on any choice of $x(y')\in \{0,1\}^n$. By (\ref{eq:xy'}), the point $x(y'')$ satisfies, for each $i$,

\[
x(y'')_{\pi(i)} \oplus x(y')_{\pi(i)} = y''_{b(i)}\oplus y'_{b(i)}.
\]
Or equivalently, for any $h\in I$, we have
\[
x(y'')_{h} \oplus x(y')_{h} = y''_{b(\pi^{-1}(h))}\oplus y'_{b(\pi^{-1}(h))}=
\left\{
\begin{array}{ll}
1 & \text{if $\pi^{-1}(h)\in \bigcup_{j\in D}B_j$}\\
0 & \text{otherwise.}
\end{array}
\right.
\]

Now, we may rewrite the condition $\pi^{-1}(h)\in \bigcup_{j\in D}B_j$ as $h\in \pi(B_D)$ for $B_D:=\bigcup_{j\in D}B_j$. Note that $\pi(B_D)\subseteq I$ is a uniformly random subset of $I$ of size $|B_D|$.

Hence, we may equivalently sample the pair $(x(y'),x(y''))$ as follows: Choose $x(y')\in \{0,1\}^{I}$ uniformly at random, and choose independently a random set $I'\subseteq I$ of size $|B_D|$ and flip $x(y')$ exactly in the coordinates in $I'$ to get $x(y'')$. 

Note that $|B_D| = \sum_{j\in D}|B_j|\geq {(\Delta\cdot r)}/{4k}$ since $|D| = \Delta$ and $|B_j|\geq r/4k$ for each $j$ by Corollary~\ref{cor:pdist}. At the same time, we also have $|[r]\setminus B_D| =\sum_{j\in [k]\setminus D} |B_j| \geq (k-\Delta)\cdot r/(4k)$. Thus, $|B_D| = \eta r$ for some $\eta\in [\Delta/(4k),1-(k-\Delta)/(4k)]\subseteq [\Delta'/(4k), 1-(\Delta'/(4k))]$.

By Claim~\ref{clm:Texists}, we know that $\Delta'(y',y'') \geq k/4$ and hence we have $\eta\in [1/16,15/16]$. Applying Corollary~\ref{cor:hyper}, we see that this implies
\[
\prob{x(y'),I'}{x(y')\in E\wedge x(y'')\in E} \leq C\cdot \delta^{1+(1/40)}.
\]
This proves (\ref{eq:l2incenc}) and hence finishes the proof of the claim.

\subsection{Proof of Large Distance Lemma (Lemma~\ref{lem:LDL})}
\label{sec:LDL}

\newcommand{\p}[1]{P^{(#1)}}
\newcommand{\f}[1]{f^{(#1)}}

We follow the proof of~\cite[Lemma 12]{BKSSZ}.

Given a triple $(i,j,b)\in I^2\times \{0,1\}$ with $i,j$ distinct, call $(i,j,b)$ a \emph{bad triple} if the restricted function $f_I'$ obtained when the test chooses $i_0 = i$, $j_0 = j$ and $a=b$ is $\varepsilon_0$-close to $\Fnd{I\setminus j}{d}$. To prove Lemma~\ref{lem:LDL}, it suffices to show that the number of bad triples is at most $100\ell^2$. To do this, we bound instead the number of \emph{bad pairs}, which are defined to be pairs $(i,j)$ for which there exists $b\in \{0,1\}$ such that $(i,j,b)$ is a bad triple. Note that $(i,j)$ is a bad pair iff $(j,i)$ is. Hence, the set of bad pairs $(i,j)$ defines an undirected graph $G_{\text{bad}}$. If there are fewer than $25\ell^2$ edges in $G_{\text{bad}}$, we are done since this implies that there are at most $50\ell^2$ bad pairs and hence at most $100\ell^2$ bad triples. Otherwise, $G_{\text{bad}}$ has more than $25\ell^2$ edges and it is easy to see that one of the following two cases must occur:

\begin{itemize}
\item $G_{\text{bad}}$ has a matching with at least $\ell+1$ edges, or
\item $G_{\text{bad}}$ has a star with at least $\ell+1$ edges.
\end{itemize}

We show that in each case, we can find a polynomial $P\in \Fnd{I}{d}$ is $\varepsilon_1$-close to $f_I$, which will contradict the assumption that $\delta_d(f_I) > \varepsilon_1$ and hence finish the proof of the lemma.

We first note that in either the matching or the star case, we can replace some variables $X$ with $1\oplus X$ in $f_I$ (note that this does not change $\delta_d(f_I)$) to ensure that the bad triples that give rise to the bad pairs are all of the form $(X,X',0)$: i.e., all the bad triples come from identifying variables (and \emph{not} from identifying a variable with the complement of another).

Let $t_1 = (X_{i_1},X_{j_1},0),\ldots,t_{\ell+1}= (X_{i_{\ell+1}},X_{j_{\ell+1}},0)$ denote the bad triples obtained above (in either the matching or the star case). Each triple $t_h$ defines the subset $R_h\subseteq \{0,1\}^{I}$ where the variables $X_{i_h}$ and $X_{j_h}$ take the same values; let $R_h' $ denote the complement of $R_h$. Note that each $|R_h| = 2^{r-1}$. Furthermore, it follows from the form of the triples that for each $h$ we have $|S_1\cap S_2\cap\cdots S_h| = 2^{r-h}$ for any choice of $S_1\in \{R_1,R_1'\},\ldots,S_h\in \{R_h,R_h'\}$.

By assumption, for each triple $t_h$, there is a polynomial $\p{h}$ such that $\p{h}$ is $\varepsilon_0$-close to $\f{h}$, where the latter function is obtained by identifying the variables $X_{i_h}$ and $X_{j_h}$ in $f_I$. We will show the following claim.

\begin{claim}
\label{clm:global}
There is a $P\in \Fnd{I}{d}$ such that for each $h\in [\ell+1]$, $P(x) = \p{h}(x)$ for all $x\in R_h$.
\end{claim}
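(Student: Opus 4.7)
The plan is to adapt the argument of Bhattacharyya, Kopparty, Schoenebeck, Sudan, and Zuckerman (\cite{BKSSZ}, Lemma 12) to our setting. The proof naturally splits into two phases: (i) establish \emph{pairwise consistency} of the $\p{h}$'s, namely that $\p{h} \equiv \p{h'}$ on $R_h \cap R_{h'}$ for every $h \neq h'$; (ii) glue the $\p{h}$'s into a single $P \in \Fnd{I}{d}$ using the matching or star structure of the bad-triple graph.

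For pairwise consistency, fix distinct $h, h' \in [\ell+1]$. By the structural property noted in the setup, $|R_h \cap R_{h'}| = 2^{r-2} = |R_h|/2$. Since $\p{h}$ is $\varepsilon_0$-close to $\f{h}$ as functions on $\{0,1\}^{I \setminus \{j_h\}} \cong R_h$, $\p{h}$ disagrees with $f_I$ on at most an $\varepsilon_0$-fraction of $R_h$, hence on at most a $2\varepsilon_0$-fraction of $R_h \cap R_{h'}$; the same bound holds for $\p{h'}$. A union bound gives that $\p{h}$ and $\p{h'}$ agree on at least a $(1 - 4\varepsilon_0)$-fraction of $R_h \cap R_{h'}$. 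Identifying $R_h \cap R_{h'}$ with a Boolean cube of dimension $r-2$, both restrictions lie in $\Fnd{r-2}{d}$, whose minimum distance is at least $2^{-d}$ by the DeMillo-Lipton-Schwartz-Zippel lemma. Since the parameters are chosen so that $4\varepsilon_0 < 2^{-d}$, we conclude $\p{h} \equiv \p{h'}$ on $R_h \cap R_{h'}$.

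With pairwise consistency in hand, we turn to the gluing. In the matching case, all $2(\ell+1)$ indices are distinct, so the variable pairs $(X_{i_h}, X_{j_h})$ are vertex-disjoint; the key algebraic fact in the multilinear quotient ring $\F[X_I]/(X_i^2 - X_i)$ is that the ``direction forms'' $X_{j_h} - X_{i_h}$ then act as algebraically independent, and a polynomial agrees with $\p{h}$ on $R_h$ iff it differs from $\p{h}$ by a multiple of $X_{j_h} - X_{i_h}$. Following BKSSZ, we first propagate pairwise consistency to full (all-wise) consistency: by iterating pairwise consistency, the common substitution $Q \in \Fnd{I \setminus J}{d}$ (with $J = \{j_1,\ldots,j_{\ell+1}\}$) obtained by identifying each $X_{j_h}$ with $X_{i_h}$ in any $\p{h}$ is well-defined and independent of $h$. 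We then exhibit $P$ as the unique multilinear polynomial of degree $\leq d$ whose restriction to $R_h$ equals $\p{h}$ for each $h$ and whose restriction to $\bigcap_h R_h$ equals $Q$; pairwise consistency ensures the system of interpolation conditions is solvable, and the matching structure (disjointness of the pairs) ensures the lift stays in $\Fnd{I}{d}$. The star case, with all $i_h$ equal to a common $i_*$, is handled analogously and is in fact simpler: all constraints involve the single variable $X_{i_*}$, and the distinct $X_{j_h}$-directions can be reconciled by the same simultaneous interpolation.

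The main obstacle is controlling the degree of the glued polynomial $P$. A naive constraint-by-constraint induction, setting $P_{m+1} = P_m + (\text{correction})$, would force the correction to vanish on $R_1,\ldots,R_m$, hence to carry a factor of $\prod_{h \leq m}(X_{j_h} - X_{i_h})$, inflating the degree by roughly $m$; since $m$ ranges up to $\ell + 1 \gg d$, this cannot give $P \in \Fnd{I}{d}$. The essential insight we take from BKSSZ is that the matching or star structure, combined with pairwise (and hence all-wise) consistency, permits a \emph{simultaneous} lift rather than a sequential one, avoiding this degree blowup. Verifying this—namely that the interpolation system described above has a solution of degree at most $d$—is the technical heart of the claim, and it is where the restriction to only the matching or star cases (guaranteed by the preceding graph-theoretic reduction) is used in an essential way.
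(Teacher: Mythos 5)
Your first phase (pairwise consistency) is correct and matches the paper's Claim~\ref{clm:local} essentially verbatim: both $\p{h}$ and $\p{h'}$ are $2\varepsilon_0$-close to $f_I$ on $R_h\cap R_{h'}$ (using $|R_h\cap R_{h'}|=|R_h|/2$), hence $4\varepsilon_0$-close to each other there, and since $4\varepsilon_0<2^{-d}$ the distance lemma forces $\p{h}|_{h'}=\p{h'}|_{h}$ as polynomials.

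The gap is in the second phase. You assert that pairwise consistency ``ensures the system of interpolation conditions is solvable'' and that the matching/star structure ``ensures the lift stays in $\Fnd{I}{d}$,'' and then explicitly flag that verifying this ``is the technical heart of the claim.'' That verification is exactly what Claim~\ref{clm:global} asks for, and it is not supplied; asserting the existence of a simultaneous degree-$\leq d$ lift is essentially a restatement of the claim. The paper discharges it by an explicit coefficient construction: in the matching case the coefficient $\alpha_S$ of $X^S$ in $P$ is defined to be $\alpha_S^{(j)}$ for any $j$ with $S\cap\{2j-1,2j\}=\emptyset$ (such a $j$ exists because $|S|\leq d\leq \ell$ while there are $\ell+1$ disjoint pairs), well-definedness follows from the coefficient identities extracted from Claim~\ref{clm:local}, and $P|_i=\p{i}$ is then checked monomial by monomial in a three-case analysis. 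In the star case the construction is genuinely \emph{harder}, not simpler as you suggest: the coefficients $\gamma_{T\cup\{r\}}$ of monomials containing the center variable must be defined by a downward induction on $|T|$, together with a separate inductive argument that the definition is independent of the auxiliary index $i\in[\ell+1]\setminus T$, which uses the consistency identity for the sums $\alpha^{(i)}_{S\cup\{i\}}+\alpha^{(i)}_{S\cup\{j\}}+\alpha^{(i)}_{S\cup\{i,j\}}$. Your appeal to ``all-wise consistency'' via a common quotient $Q$ on $\bigcap_h R_h$ does not substitute for this: knowing the intended restrictions of $P$ to each $R_h$ and to $\bigcap_h R_h$ does not by itself produce a degree-$\leq d$ polynomial realizing them, and the degree-blowup obstruction you correctly identify for the sequential approach is overcome only by the explicit constructions above.
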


Assuming the above claim, we show that the polynomial $P$ above is actually $\varepsilon_1$-close to $f_I$, which contradicts our assumption about $\delta_d(f_I)$.

Consider a uniformly random input $x\in \{0,1\}^n$. We have
\begin{align}
\prob{x}{f_I(x) \neq P(x)} &\leq \sum_{h=1}^{\ell} \prob{x}{f_I(x)\neq P(x)\ |\ x\in R_h\setminus \bigcup_{h' < h}R_{h'}}\cdot \prob{}{x\in R_h\setminus\bigcup_{h'< h}R_{h'}}\notag\\ &+ \prob{x}{x\not\in \bigcup_{h\leq \ell}R_h}\label{eq:fPdist}
\end{align}

For each $h$, we have
\begin{align*}
\prob{x}{x\in R_h\setminus \bigcup_{h'< h}R_{h'}} &= \prob{x}{x\in R_h\cap R_1'\cap\cdots \cap R_{h-1}'}= \frac{1}{2^{h}}\\
\prob{x}{x\in R_h\setminus \bigcup_{h'< h}R_{h'}\ |\ x\in R_h} &= \frac{\prob{x}{x\in R_h\cap R_1'\cap\cdots \cap R_{h-1}'}}{\prob{x}{x\in R_h}} = \frac{1}{2^{h-1}}\\
\prob{x}{x\not\in \bigcup_{h\leq \ell}R_h} &= \prob{x}{x\in R_1'\cap\cdots \cap R_{\ell}'} = \frac{1}{2^{\ell}}
\end{align*}

Since $P(x)$ agrees with $\p{h}(x)$ for each $x\in R_h$, we have
\[
\prob{x}{f_I(x)\neq P(x)\ |\ x\in R_h} = \prob{x}{f_I(x)\neq \p{h}(x)\ |\ x\in R_h} \leq \varepsilon_0.
\]

Hence, we obtain
\begin{align*}
\prob{x}{f_I(x)\neq P(x)\ |\ x\in R_h\setminus \bigcup_{h'< h}R_{h'}} &\leq \frac{\prob{x}{f_I(x)\neq P(x)\ |\ x\in R_h}}{\prob{x}{x\in R_h\setminus \bigcup_{h'< h}R_{h'}\ |\ x\in R_h}}\\
&= 2^{h-1}\prob{x}{f_I(x)\neq P(x)\ |\ x\in R_h} \leq 2^{h-1}\varepsilon_0.
\end{align*}

Plugging the above into (\ref{eq:fPdist}), we get
\begin{align*}
\prob{x}{f_I(x) \neq P(x)} &\leq \left(\sum_{h=1}^{\ell} 2^{h-1}\varepsilon_0 \cdot \frac{1}{2^{h}}\right) + \frac{1}{2^{\ell}}\\
&\leq \frac{\varepsilon_0\ell}{2} + \frac{1}{2^{\ell}} < \varepsilon_1
\end{align*}
where the final inequality follows from our choice of $\varepsilon_0$ and $\ell$ (see  (\ref{eq:choiceelleps0})). This is a contradiction to our assumption on $\delta_d(f_I)$, which concludes the proof of Lemma~\ref{lem:LDL} assuming Claim~\ref{clm:global}.

\subsubsection{Proof of Claim~\ref{clm:global}}

We now prove Claim~\ref{clm:global}. The proof is a case analysis based on whether $G_{\text{bad}}$ has a large matching or a large star. For any $h\in [\ell+1]$ and any polynomial $Q\in \Fnd{I}{d}$, we denote $Q|_h$ the polynomial obtained by identifying the variables $X_{i_{h}}$ and $X_{j_h}$. We want to define a polynomial $P$ such that for each $h\in [\ell+1]$, we have
\begin{equation}
\label{eq:PPh}
P|_h = P^{(h)}.
\end{equation}

As in~\cite{BKSSZ}, the crucial observation that will help us find a $P$ as above is the following. Fix any distinct $h,h'$ and consider $\p{h}|_{h'}$ and $\p{h'}|_h$. Note that these polynomials are both naturally defined on the set of inputs $R_{h,h'} := R_h\cap R_{h'}$. However, since $f_I$ is $\varepsilon_1$-close to $\p{h}$ and $\p{h'}$ on $R_h$ and $R_{h'}$ respectively, we see that
\begin{align*}
\prob{x\in R_{h,h'}}{\p{h}|_{h'}(x) \neq \p{h'}|_h(x)} &= \prob{x\in R_{h,h'}}{\p{h}(x) \neq \p{h'}(x)}\\ &\leq \prob{x\in R_{h,h'}}{\p{h}(x) \neq f(x)} + \prob{x\in R_{h,h'}}{\p{h'}(x) \neq f(x)}\\
&\leq 2\varepsilon_0 + 2\varepsilon_0 \leq 4\varepsilon_0 < \frac{1}{2^d},
\end{align*}
where for the second inequality we have used the fact that
\[
\prob{x\in R_{h,h'}}{\p{h}(x) \neq f(x)} \leq \frac{\prob{x\in R_{h}}{\p{h}(x) \neq f(x)}}{\prob{x\in R_{h}}{x\in R_{h'}}} \leq \frac{\varepsilon_0}{1/2} = 2\varepsilon_0.
\]

Since any pair of distinct polynomials of degree $d$ disagree on at least a $(1/2^d)$ fraction of inputs in $R_{h,h'}$, we see that $P^{(h)}|_{h'} = P^{(h')}|_h$ as polynomials. We record this fact below.

\begin{claim}
\label{clm:local}
For any distinct $h,h'\in [\ell+1]$, $P^{(h)}|_{h'} = P^{(h')}|_{h}$.
\end{claim}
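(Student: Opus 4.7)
The plan is to formalize the informal reasoning given in the paragraph immediately preceding the claim. Concretely, I will argue that the two functions $\p{h}|_{h'}$ and $\p{h'}|_{h}$ are degree $d$ polynomials living on a common domain, that they disagree on a strictly sub-$2^{-d}$ fraction of inputs, and then invoke the DeMillo--Lipton--Schwartz--Zippel distance lemma on the hypercube to conclude they must be literally equal as polynomials.

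First I would set up the common variable set. Since the bad triples producing our bad pairs were normalized to the form $(X_i, X_j, 0)$ (identification, not complementation), each restriction $Q \mapsto Q|_h$ simply identifies $X_{j_h}$ with $X_{i_h}$. A brief case analysis is needed: in the matching case where $\{i_h, j_h\} \cap \{i_{h'}, j_{h'}\} = \emptyset$, the two restrictions commute and both $\p{h}|_{h'}$ and $\p{h'}|_{h}$ are polynomials on the variable set $I \setminus \{j_h, j_{h'}\}$, so the common domain is naturally $\{0,1\}^{r-2}$; in the star case where (after possible relabelling) $i_h = i_{h'}$, both restrictions end up identifying $X_{i_h}, X_{j_h}, X_{j_{h'}}$ into a single variable, and again both polynomials live on a $\{0,1\}^{r-2}$ domain corresponding to the same set of free variables. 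In either case this domain embeds naturally into $R_{h,h'} = R_h \cap R_{h'}$, and both restricted polynomials have degree at most $d$.

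Next I would reuse, verbatim, the inclusion--exclusion / conditional-probability calculation already carried out in the paragraph preceding the claim statement, which shows
\[
\prob{x \in R_{h,h'}}{\p{h}|_{h'}(x) \neq \p{h'}|_{h}(x)} \;\leq\; 4\varepsilon_0.
\]
By our parameter choices in~\eqref{eq:choicek}, \eqref{eq:choiceeps} and \eqref{eq:choiceelleps0}, we have $\varepsilon_0 = \varepsilon_1/(100\ell)$ with $\varepsilon_1 \leq 2^{-\Omega(k)} \leq 2^{-\Omega(Md)}$, so in particular $4\varepsilon_0 < 2^{-d}$.

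Finally, I would invoke the polynomial distance lemma (Schwartz--Zippel over the hypercube): two distinct multilinear polynomials of degree at most $d$ on $\{0,1\}^m$ with $m > d$ must disagree on at least a $2^{-d}$ fraction of inputs. Here $m = r-2 \geq k - 2 > d$ by~\eqref{eq:choicek} and~\eqref{eq:choiceM}, so the lemma applies on the common domain identified above. Combining this with the upper bound $4\varepsilon_0 < 2^{-d}$ on their disagreement forces $\p{h}|_{h'} = \p{h'}|_{h}$ as polynomials, proving the claim. The only non-routine step is the variable bookkeeping in the case analysis above; the rest is immediate from the preceding calculation and the Schwartz--Zippel bound.
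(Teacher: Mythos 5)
Your proposal is correct and follows exactly the paper's argument: the proof of this claim is precisely the calculation in the paragraph preceding its statement, which bounds $\prob{x\in R_{h,h'}}{\p{h}|_{h'}(x) \neq \p{h'}|_{h}(x)}$ by $4\varepsilon_0 < 2^{-d}$ via the triangle inequality and conditioning, and then invokes the $2^{-d}$ distance of degree-$d$ polynomials on the common domain. Your additional bookkeeping about the common variable set in the matching and star cases is consistent with how the paper later treats the two cases, so there is nothing to correct.
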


\paragraph{The Matching case of Claim~\ref{clm:global}.}

Let $(X_{i_1},X_{j_1},0),\ldots,(X_{i_{\ell+1}},Y_{i_{\ell+1}},0)$ be the set of bad triples that give rise to the distinct edges of the matching in $G_{\text{bad}}$. By renaming variables we assume that $I = [r]$ and the bad triples are all of the form $(X_1,X_2,0),\ldots,(X_{2\ell+1},X_{2\ell+2},0)$.

Assume that for each $h\in[\ell+1]$,
\[
P^{(h)}(X) = \sum_{S\subseteq I\setminus \{2h\}: |S|\leq d}\alpha_S^{(h)} X^S
\]
where $X^S = \prod_{i\in S}X_i$ (note that $P^{(h)}\in \Fnd{I\setminus \{2h\}}{d}$ and hence does not involve $X_{2h}$). For any $h$, if $|S|> d$ or $S\ni 2h$, we define $\alpha_S^{(h)} = 0$.

Note that we have for any distinct $i,j\in [\ell+1]$
\begin{equation}
\label{eq:Pij1}
P^{(i)}(X)|_j = \sum_{S\cap \{2j-1,2j\} = \emptyset} \alpha_S^{(i)} X^S + \sum_{S\cap \{2j-1,2j\} = \emptyset} (\alpha_{S\cup\{2j-1\}}^{(i)}+\alpha_{S\cup\{2j\}}^{(i)}+\alpha_{S\cup\{2j-1,2j\}}^{(i)})X^{S\cup\{2j-1\}}.
\end{equation}

In particular, Claim~\ref{clm:local} implies the following for $S\subseteq I$ such that $|S|\leq d$ and $i,j$ distinct such that $S\cap \{2i-1,2i,2j-1,2j\} = \emptyset$,
\begin{align}
\alpha_S^{(i)} &= \alpha_S^{(j)}\label{eq:localcons1}\\
\alpha_{S\cup \{2i-1\}}^{(i)} &= \alpha^{(j)}_{S\cup \{2i-1\}} + \alpha^{(j)}_{S\cup \{2i\}} + \alpha^{(j)}_{S\cup \{2i-1,2i\}}\label{eq:localcons2}
\end{align}
Let $\alpha_S^{(i)}|_j$ denote the coefficient of $X^{S}$ in $P^{(i)}|_j$.

We define the polynomial
\[
P(X) = \sum_{S\subseteq I: |S|\leq d} \alpha_S X^S
\]
as follows. For each $S\in \binom{I}{\leq d}$, set $\alpha_S = \alpha_S^{(j)}$ for any $S$ such that $S\cap \{2j-1,2j\} = \emptyset$: since $|S|\leq d \leq \ell$, there is at least one such $j\in [\ell+1]$. By (\ref{eq:localcons1}), we see that any choice of $j$ as above yields the same coefficient $\alpha_S$.

Note that
\begin{equation}
\label{eq:P|i}
P|_i = \sum_{S\cap \{2i-1,2i\}=\emptyset} \alpha_S X^S + \sum_{S\cap \{2i-1,2i\}=\emptyset} (\alpha_{S\cup\{2i-1\}}+\alpha_{S\cup\{2i\}}+\alpha_{S\cup\{2i-1,2i\}}) X^{S\cup \{2i-1\}}.
\end{equation}
Let $\alpha_S|_j$ denote the coefficient of $X^{S}$ in $P|_j$.

Now we show that $P|_i = P^{(i)}$ for each choice of $i\in [\ell+1]$ by comparing coefficients of monomials and showing that $\alpha_S|_i = \alpha_S^{(i)}$ for each $S$ such that $|S|\leq d$. That will conclude the proof of the matching case of Claim~\ref{clm:global}. Fix any $S$ such that $|S|\leq d$. We consider three cases.
\begin{itemize}
\item $S\ni 2i$: In this case, $\alpha_S|_i= \alpha_S^{(i)} = 0$ and hence we are done.
\item $S\cap \{2i-1,2i\}=\emptyset$: In this case, we have $\alpha_S|_i = \alpha_S^{(i)}$ by definition and hence we are done.
\item $S\cap \{2i-1,2i\}=2i-1$: In this case, let $T = S\setminus \{2i-1\}$ and fix $j\in [\ell+1]$ such that $j\neq i$ and $T\cap \{2j-1,2j\} = \emptyset$. We see that
\begin{align*}
\alpha_{S}|_i &= \alpha_{T\cup\{2i-1\}}+\alpha_{T\cup\{2i\}}+\alpha_{T\cup\{2i-1,2i\}}\qquad &\text{(by (\ref{eq:P|i}))}\\
&= \alpha_{T\cup\{2i-1\}}^{(j)}+\alpha_{T\cup\{2i\}}^{(j)}+\alpha_{T\cup\{2i-1,2i\}}^{(j)}\qquad &\text{(by definition of $P$)}\\
&= \alpha_{T\cup \{2i-1\}}^{(i)}\qquad &\text{(by (\ref{eq:localcons2}))}\\
&= \alpha_S^{(i)}. &
\end{align*}

\end{itemize}

\paragraph{The Star case of Claim~\ref{clm:global}.} We proceed as in the matching case, except that the definition of $P$ will be somewhat more involved. By renaming variables we assume that $I = [r]$ and that the bad triples are all of the form $(X_1,X_r,0),(X_2,X_r,0),\ldots,(X_{\ell+1},X_r,0)$.

Assume that for each $h\in[\ell+1]$,
\[
P^{(h)}(X) = \sum_{S\subseteq [r-1]: |S|\leq d}\alpha_S^{(h)} X^S
\]
where $X^S = \prod_{i\in S}X_i$ (note that $P^{(h)}\in \Fnd{I\setminus \{r\}}{d}$ and hence does not involve $X_{r}$). For any $h$, if $|S|> d$ or $S\ni r$, we define $\alpha_S^{(h)} = 0$.

For any distinct $i,j\in [\ell+1]$ with $i < j$, we assume that $P^{(i)}|_j$ and $P^{(j)}|_i$ are obtained by replacing $X_j$ with $X_i$. We thus have
\begin{align}
P^{(i)}(X)|_j = \sum_{S\cap \{i,j\} = \emptyset} \alpha_S^{(i)} X^S + \sum_{S\cap \{i,j\} = \emptyset} (\alpha_{S\cup\{i\}}^{(i)}+\alpha_{S\cup\{j\}}^{(i)}+\alpha_{S\cup\{i,j\}}^{(i)})X^{S\cup\{i\}}.\label{eq:Pij2}\\
P^{(j)}(X)|_i = \sum_{S\cap \{i,j\} = \emptyset} \alpha_S^{(j)} X^S + \sum_{S\cap \{i,j\} = \emptyset} (\alpha_{S\cup\{i\}}^{(j)}+\alpha_{S\cup\{j\}}^{(j)}+\alpha_{S\cup\{i,j\}}^{(j)})X^{S\cup\{i\}}.\label{eq:Pji}
\end{align}

Using Claim~\ref{clm:local} and comparing coefficients of $P^{(i)}|_j$ and $P^{(j)}|_i$, we get for $i\neq j$ and $S$ such that $S\cap \{i,j\} = \emptyset$,
\begin{align}
\alpha_S^{(i)} &= \alpha_S^{(j)}\label{eq:localcons3}\\
\alpha_{S\cup\{i\}}^{(i)}+\alpha_{S\cup\{j\}}^{(i)}+\alpha_{S\cup\{i,j\}}^{(i)} &= \alpha_{S\cup\{i\}}^{(j)}+\alpha_{S\cup\{j\}}^{(j)}+\alpha_{S\cup\{i,j\}}^{(j)}\label{eq:localcons4}
\end{align}

We now define the polynomial
\[
P(X) = \sum_{S\subseteq I \setminus \{r\}: |S|\leq d} \beta_S X^S
+ \sum_{S\subseteq I: S \ni r, |S|\leq d} \gamma_S X^S
\]
as follows. 
\begin{itemize}
\item For $S\not\ni r$, we define $\beta_S$ to be $\alpha_S^{(i)}$ for any $i\in [\ell+1]$ such that $i\not\in S$. Since $|S|\leq d < \ell+1$ there is such an $i$. Note that by (\ref{eq:localcons3}), the choice of $i$ is immaterial.

\item For $S\ni r$, we let $T = S\setminus \{r\}$. Note that $|T| < d$. We define $\gamma_{T \cup \{r\}}$ by downward induction on
$|T|$ as follows:
$$\gamma_{T\cup \{r\}} \triangleq \alpha^{(i)}_{T\cup \{i\}} - \beta_{T\cup \{i\}} - \gamma_{T\cup \{i,r\}} \mbox{ for any fixed }i\in [\ell+1]\setminus T$$
where we assume that $\gamma_{T\cup \{r\}}=0$ for $|T|\geq d$.
\end{itemize}

We will show first by downward induction on $|T|$ that these coefficients are independent of the choice of $i\in [\ell+1]\setminus T$. Fix $i,j \in [\ell+1]\setminus T$. In the base case $|T| = d-1$, we have 
\begin{align*}
\alpha^{(i)}_{T\cup \{i\}} - \beta_{T\cup \{i\}} &= \alpha^{(i)}_{T\cup \{i\}} - \alpha^{(j)}_{T\cup \{i\}}&\text{(by definition of $\beta_{T\cup \{i\}}$)}\\
&= \alpha^{(j)}_{T\cup \{j\}} - \alpha^{(i)}_{T\cup \{j\}}&\text{(by (\ref{eq:localcons3}) and $\alpha_{T\cup \{i,j\}} = 0$)}\\
&= \alpha^{(j)}_{T\cup \{j\}} - \beta_{T\cup \{j\}}&\text{(by definition of $\beta_{T\cup \{j\}}$)} 
\end{align*}

When $|T| < d-1$, we have

\begin{eqnarray*}
\lefteqn{\alpha^{(i)}_{T\cup \{i\}} - \beta_{T\cup \{i\}} - \gamma_{T\cup
\{i,r\}}} \\
& = &
\alpha^{(i)}_{T\cup \{i\}} - \alpha^{(j)}_{T\cup \{i\}} - \gamma_{T\cup \{i,r\}}\\
& = &
\alpha^{(i)}_{T\cup \{i\}} - \alpha^{(j)}_{T\cup \{i\}} -
\left(
\alpha^{(j)}_{T\cup \{i,j\}} - \beta_{T\cup \{i,j\}} - \gamma_{T\cup \{i,j,r\}}\right)
\mbox{~(Applying definition of $\gamma_{T \cup \{i,r\}}$ and induction)}\\
& = &
\alpha^{(j)}_{T\cup \{j\}} - \alpha^{(i)}_{T\cup \{j\}} -
\left(
\alpha^{(i)}_{T\cup \{i,j\}} - \beta_{T\cup \{i,j\}} - \gamma_{T\cup \{i,j,r\}}\right)
\mbox{~~~(From (\ref{eq:localcons4}) and rearranging terms.)}\\
& = &
\alpha^{(j)}_{T\cup \{j\}} - \alpha^{(i)}_{T\cup \{j\}} -
\gamma_{T\cup \{j,r\}}
\mbox{~~~~~~(Applying definition of $\gamma_{T \cup \{j,r\}}$ and induction)}\\
& = &
\alpha^{(j)}_{T\cup \{j\}} - \beta_{T\cup \{j\}} -
\gamma_{T\cup \{j,r\}}
\end{eqnarray*}

We now conclude by showing that the restriction of $P$ obtained by
replacing $X_r$ by $X_i$ equals the polynomial $P^{(i)}$.
Let $P|_i$ denote the restriction of $P$ obtained by replacing $X_r$ by $X_i$
and let $\alpha_S|_i$ be its coefficients. Note that
\begin{eqnarray*}
P|_i & = &
\sum_{S\not\ni i:|S|\leq d} \beta_S X^S + \sum_{T\not\ni i: |T| < d }
(\beta_{T\cup\{i\}}+\gamma_{T\cup\{r\}}+\gamma_{T\cup\{i,r\}})X^{T\cup\{i\}}.\\
& = &
\sum_{S\not\ni i:|S|\leq d} \alpha^{(i)}_S X^S +
\sum_{T\not\ni i: |T| < d }
\alpha^{(i)}_{T \cup \{i\}} X^{T\cup \{i\}} \mbox{~~~(By definition of
$\beta_S$ and $\gamma_{T \cup \{r\}}$)}\\
& = & P^{(i)}.
\end{eqnarray*}
This concludes the proof for the star case.

\section{Impossibility of local decoding when $\charF(\F)$ is large}
\label{sec:imposs}

In this section, we prove Theorem~\ref{thm:imposs-large-char} which is a more detailed version of Theorem~\ref{thm:imposs-large-char-intro}.  Again we remind the reader that an overview may be found in Section~\ref{ssec:overview}. 

Let $n$ be a growing parameter and $\F$ a field of characteristic $0$ or positive characteristic greater than $n^2$. For the results in this section, it will be easier to deal with the domain $\{-1,1\}^n$ rather than $\{0,1\}^n$. Since there a natural invertible linear map that maps $\{0,1\}$ to $\{-1,1\}$ (i.e. $a\mapsto 1-2a$), this change of input space is without loss of generality.


\subsection{Local linear spans of balanced vectors}
\label{ssec:imposs-tech}

Let $u\in \F^n$ and $U\subseteq \F^n$. For any integer $t\in \mathbb{N}$, we say that $u$ is in the $t$-span of $U$ if it can be written as a linear combination of at most $t$ elements of $U$. For $x\in \{-1,1\}^n$, we use $|x|$ to denote the sum of the entries of $x$ over $\mathbb{Z}$. In this section, we wish to show that if the vector $1^n$ is in the $t$-span of
balanced vectors, i.e., vectors $x$ with $|x| \leq n/s$ then $t$ is must be growing as a function of $s$. 

As explained earlier we first establish a bound on the size of the solutions of linear equations in systems over $\mathbb{Q}$ with few variables or few constraints.  
This fact is well-known, but we prove it here for completeness.


\begin{lemma}
\label{lem:Cramer}
Let $r,s\in \mathbb{N}$ and let $t = \min\{r,s\}$. Let $Mx = u$ be a system of linear equations with $M\in \{-1,0,1\}^{r\times s}$ and $u\in \{-1,0,1\}^r$. 
\begin{itemize}
\item
If $\F$ is a field of characteristic and the system  has a solution in $\F^s$, then there exist integers
$a_1,\ldots,a_s,b \in \mathbb{Z}$ with $|a_i|,|b| \leq t!$ such that $x_i = a_i/b$ is a solution to $Mx=u$. In particular, there is a solution in $\mathbb{Q}^s$.
\item If $\F$ is a field of characteristic $p$ and if the system  has a solution in $\F^s$,
then there exist integers
$a_1,\ldots,a_s,b \in \mathbb{Z}$ with $|a_i|,|b| \leq t!$ such that $x_i = a_i / b \pmod{p}$ is a solution to $Mx=u$. In particular, there is a solution in $\F_p^s$.
\end{itemize}
\end{lemma}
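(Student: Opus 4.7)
The plan is to apply Cramer's rule to an invertible square subsystem of $Mx=u$. Let $r'$ denote the rank of $M$ over $\F$, so $r' \leq t$. Since $Mx = u$ is consistent over $\F$, the rank of the augmented matrix $[M|u]$ over $\F$ also equals $r'$. I would pick rows $i_1<\cdots<i_{r'}$ and columns $j_1<\cdots<j_{r'}$ such that the corresponding $r'\times r'$ submatrix $M'$ of $M$ satisfies $\det(M')\neq 0$ in $\F$. Setting $x_j=0$ for $j\notin\{j_1,\ldots,j_{r'}\}$ then reduces the problem to solving the square system $M'x'=u'$, where $x' = (x_{j_1},\ldots,x_{j_{r'}})$ and $u'$ is the restriction of $u$ to the chosen rows. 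By the rank condition, every other row of $[M|u]$ is a linear combination of the chosen $r'$ rows, so any solution to this reduced system automatically extends to a solution of the full system $Mx=u$.

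Next, Cramer's rule applied to $M'x'=u'$ gives $x'_k = \det(M'_k)/\det(M')$, where $M'_k$ denotes $M'$ with its $k$-th column replaced by $u'$. Setting $b := \det(M')$, $a_{j_k} := \det(M'_k)$, and $a_j := 0$ for the remaining indices produces integer witnesses: the Leibniz expansion of an $r'\times r'$ determinant with entries in $\{-1,0,1\}$ is a signed sum of $r'!$ terms each of magnitude at most $1$, so $|a_i|,|b|\leq r'!\leq t!$.

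The last step is to verify that the construction works both in characteristic zero and in characteristic $p$. The key observation is that because $M$ and $u$ have integer entries, every minor of $M$ or of $[M|u]$ is an integer whose value in $\F$ depends only on the characteristic of $\F$: in characteristic zero such an integer is nonzero in $\F$ iff it is nonzero as an integer, and in characteristic $p$ iff it is nonzero modulo $p$. Hence $\mathrm{rank}_\F M$ and $\mathrm{rank}_\F [M|u]$ are characteristic-dependent but not otherwise field-dependent, and the invertible integer submatrix $M'$ exists in both settings. In characteristic zero the formula $x_i=a_i/b$ immediately yields a rational solution. In characteristic $p$, $b=\det(M')$ is a nonzero integer modulo $p$ (that is precisely the content of choosing $M'$ with $\det(M')\neq 0$ in $\F$), hence invertible in $\F_p$, so $a_i/b \pmod p$ is well-defined and gives a solution in $\F_p^s$. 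I expect no serious obstacle here; the only point requiring a bit of care is the rank bookkeeping across characteristics, which reduces cleanly to the integer-minor observation.
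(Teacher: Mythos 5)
Your proof is correct and follows essentially the same route as the paper's: reduce to an invertible square subsystem, apply Cramer's rule, and bound the determinants of $\{-1,0,1\}$-matrices by $t!$. If anything, your version is slightly more careful than the paper's, since you explicitly justify (via the rank of the augmented matrix) why a solution of the reduced square system extends to the full system, and why the choice of invertible submatrix is consistent across characteristics.
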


\begin{proof}
Note that we can assume that $M$ has full column rank. This is because $Mx = u$ has a solution iff $\tilde{M}\tilde{x} = u$ has a solution where $\tilde{M}$ is the submatrix of $M$ obtained by a keeping a maximal set of linearly independent columns of $M$. When the columns are linearly independent, we have $s$ is at most $r$ and hence $t=\min\{r,s\} = s$. 

We start with the zero characteristic case.
Let $M'$ be an invertible $s\times s$ submatrix of $M$ containing the set of $s$ linearly independent rows of $M$ and let $u'\in \F^s$ be the vector corresponding to these rows. Note that the solution $x$ is uniquely determined by $M'x = u'$. We now apply Cramer's rule to see that the solution is given by 
\[
x_i = \frac{\det(M'_i)}{\det(M')}
\]
for  $i\in [s]$, where $M'_i$ is the $s\times s$ matrix obtained by replacing the $i$th column of $M'$ bu $u'$. Since $M'$ and $M'_i$ are  matrices with entries in $\{-1,0,1\}$,   we have $\det(M')\in \mathbb{Z}$ with $|\det(M')|\leq s!$ for each $i\in [s]$ and similarly for $\det(M'_i)$. Therefore, we have the claim with $a_i = \det(M'_i)$ and $b = \det(M')$. 

The characteristic $p$ case is similar with only differnece being the solution now is given by $x_i = a_i/b \pmod{p}$.
\end{proof}

We now turn to the main technical lemma of this section showing that $1^n$ is not in linear span of a small number of nearly balanced elements of $\{-1,1\}^n$.

\begin{lemma}
\label{lem:smallspan}
Let $n,s = s(n)\in \mathbb{N}$ with $s(n)\leq n$. Let $S = \{x\in \{-1,1\}^n\ |\ |x| \leq n/s)\}$. Then $x^0 = 1^n$ is not in the $t$-span of $S$ unless $t\geq \log s/\log \log s$ provided $\F$ is field of zero characteristic or of characteristic $p \geq 2n^2$.
\end{lemma}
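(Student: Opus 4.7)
The plan is to argue by contradiction: assume $1^n$ lies in the $t$-span of $S$ with $t < \log s / \log \log s$, so that $1^n = \sum_{i=1}^t a_i x^i$ for some $x^1,\ldots,x^t \in S$ (each satisfying $|x^i| \leq n/s$) and coefficients $a_i \in \F$. The idea is to replace the potentially huge field coefficients $a_i$ by small-denominator integer coefficients via Lemma~\ref{lem:Cramer}, and then run the naive sum-of-coordinates pigeonhole argument sketched in the overview.

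First, I would apply Lemma~\ref{lem:Cramer} to the system $Ma = 1^n$, where $M$ is the $n \times t$ matrix whose columns are $x^1,\ldots,x^t$. Since $M$ has entries in $\{-1,1\}$, the right-hand side has entries in $\{-1,0,1\}$, and there are $t$ variables, the lemma supplies integers $p_1,\ldots,p_t$ and $b \neq 0$ with $|p_i|,|b| \leq t!$ such that the coefficients $a'_i = p_i/b$ (interpreted in $\mathbb{Q}$ in characteristic $0$, or via $b^{-1} \pmod{p}$ in positive characteristic) still solve the system. Clearing denominators yields the identity
\[
\sum_{i=1}^t p_i x^i = b \cdot 1^n,
\]
which holds in $\mathbb{Z}^n$ in characteristic $0$, and as a congruence mod $p$ in positive characteristic.

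Second, in the positive characteristic case I would lift this congruence to an exact integer identity in $\mathbb{Z}^n$. Coordinate-wise, the integer $\sum_i p_i x^i_j - b$ has absolute value at most $\sum_i |p_i| + |b| \leq t \cdot t! + t! = (t+1)!$ and is divisible by $p$. Since $t < \log s/\log \log s \leq \log n/\log \log n$, a routine Stirling estimate shows $(t+1)! \leq n < 2n^2 \leq p$, forcing the coordinate-wise difference to vanish exactly. So in either characteristic we have $\sum_i p_i x^i = b \cdot 1^n$ in $\mathbb{Z}^n$. Third, summing this identity over all $n$ coordinates and using $|x^i| \leq n/s$ and $|b| \geq 1$ gives
\[
n \leq n|b| = \left| \sum_i p_i \, |x^i| \right| \leq \frac{n}{s} \sum_i |p_i| \leq \frac{n}{s} \cdot t \cdot t!,
\]
so $t \cdot t! \geq s$, and by Stirling $t = \Omega(\log s / \log \log s)$, contradicting the assumption on $t$.

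The main obstacle is really the first step: the naive bound $\sum_i |a_i| \geq s$ from the overview is useless on its own because the $a_i$ could a priori be enormous, and in positive characteristic they do not even live in an ordered field. Cramer's rule (Lemma~\ref{lem:Cramer}) is what gives us controlled integer representatives $p_i/b$ to which the pigeonhole argument can be applied. The positive characteristic case then requires the extra step of lifting the congruence back to $\mathbb{Z}$, which is precisely why the hypothesis $p \geq 2n^2$ is used --- it provides enough slack that $(t+1)! < p$ throughout the nontrivial range of $t$.
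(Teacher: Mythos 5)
Your proposal is correct and takes essentially the same route as the paper's proof: both invoke Lemma~\ref{lem:Cramer} to replace the field coefficients by integer representatives $a_i/b$ with $|a_i|,|b|\leq t!$, then sum over the $n$ coordinates and use $|x^i|\leq n/s$ to force $(t+1)!\geq s$. The only cosmetic difference is in positive characteristic, where you lift the congruence to $\mathbb{Z}$ coordinate-by-coordinate while the paper lifts the single summed identity $b\cdot n \equiv \sum_i a_i\sum_j x^i_j \pmod p$ by noting both sides have absolute value below $p/2$; both rest on the same use of the hypothesis $p\geq 2n^2$.
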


\begin{proof}
We first consider the case when $\F$ is of zero characteristic. Note that in this case $\mathbb{Q} \subseteq \F$. Suppose 
 $x^0\in \mathrm{Span}\{x^1,\ldots,x^t\}$ with $x^0 = \sum_{i=1}^n c_i x^i$. Note that the $c_i$'s are expressible as the solution to a linear system whose $M z = u$ where $M$ and $u$ have entries in $\{-1,0,1\}$ and $M$ is a $n \times t$ matrix. By Lemma~\ref{lem:Cramer} we have that $c_i \in \mathbb{Q}$ with $|c_i| \leq t!$ (more specifically we have $c_i = a_i/b$
 with $|a_i| \leq t!$ and this implies $|c_i| \leq t!$). We thus have 
$$n
= \left| \sum_{j=1}^n x^0_j \right|
= \left| \sum_{i=1}^t c_i \sum_{j=1}^n x^i_j \right|
\leq \sum_{i=1}^t \left| c_i \right | \cdot \left | \sum_{j=1}^n x^i_j \right|
\leq \sum_{i=1}^t (t!) \cdot (n/s) \leq (t+1)!\cdot (n/s).$$
We thus conclude that $(t+1)! \geq s$ and thus $t \geq \log s/\log \log s$.

In the case of finite field $\F$, we proceed as above and let $x^0 = \sum_{i=1}^t c_i x^i$. By Lemma~\ref{lem:Cramer} we have that there are integers $a_i,b$ with $|a_i|,|b| \leq t!$ such that $c_i = a_i/b \pmod{p}$ is a solution to $x^0 = \sum_{i=1}^t c_i x^i$.
Now consider $b\cdot n$ and we get $b \cdot n = \sum_{i=1}^t a_i \sum_{j=1}^n x^i_j \pmod{p}$. We now show that this implies $(t+1)! \geq \min\{p/(2n),s\} = s$ (where the equality follows from $p \geq  2n^2$ and $s \leq n$). Assume $(t+1)! \leq p/(2n)$. Then
we have $n \leq |b \cdot n| \leq t! \cdot n < p/2$ over the integers, and 
$\left|\sum_{i=1}^t a_i \sum_{j=1}^n x^i_j\right| \leq (t+1)! (n/s) < p/2$ also over the integers. We again conclude that $n \leq (t+1)! (n/s)$ and so $(t+1)! \geq s$ as claimed.
The lemma follows. 
\end{proof}

\subsection{Proof of Theorem~\ref{thm:imposs-large-char-intro}}
\label{ssec:imposs-proof}

We now state and prove Theorem~\ref{thm:imposs-large-char} which immediately implies Theorem~\ref{thm:imposs-large-char-intro}.

\begin{theorem}
\label{thm:imposs-large-char}
Let $n\in \mathbb{N}$ be a growing parameter and $\varepsilon\in (0,1)$ such that $\varepsilon \geq 2\exp(-n/2s^2)$ for some $s\in \mathbb{N}$ with $100\leq s \leq \sqrt{n}/100$. Let $\F$ be any field such that either $\charF(\F) = 0$ or $\charF(\F)\geq n^2$. Then any adaptive $(\varepsilon,q)$-local decoder for $\Fnd{n}{1}$ that corrects an $\varepsilon$ fraction of errors must satisfy $q = \Omega(\log s/\log \log s)$. 
\end{theorem}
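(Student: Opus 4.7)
The plan is to apply Yao's minimax principle: to lower-bound the query complexity of any randomized adaptive decoder, it suffices to exhibit an input distribution supported on functions that are $\varepsilon$-close to $\Fnd{n}{1}$ on which every deterministic $q$-query decoder targeting $1^n$ fails with probability exceeding $1/4$. I would sample an affine function $p(x) = \sum_j a_j x_j + b$ with $(a_1,\ldots,a_n,b)$ chosen uniformly from $\{-N,\ldots,N\}^{n+1}$ for a parameter $N$ fixed below, and produce the oracle $f_p$ by setting $f_p(x) = p(x)$ whenever $|x| \leq n/s$ and $f_p(x) = 0$ otherwise. A Hoeffding bound on the signed sum $|x|$ of a uniform $x \in \{-1,1\}^n$ gives $\Pr_x[|x| > n/s] \leq 2\exp(-n/(2s^2)) \leq \varepsilon$ by hypothesis, so every $f_p$ in the support is $\varepsilon$-close to $p \in \Fnd{n}{1}$.

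Now fix a deterministic adaptive decoder $D$ making at most $q$ queries and suppose for contradiction $q < \log s / \log\log s$. For each input $p$ let $Q(p) = Q_b(p) \cup Q_{ib}(p)$ denote the decoder's balanced and imbalanced queries. Since $|Q_b(p)| \leq q$, Lemma~\ref{lem:smallspan} guarantees that $(1^n,1)$ does not lie in the $\mathbb{Q}$-span of $\{(x,1) : x \in Q_b(p)\}$, so the system ``$\langle \tilde v, (x,1)\rangle = 0$ for all $x \in Q_b(p)$ together with $\langle \tilde v, (1^n,1)\rangle \neq 0$'' is solvable over $\mathbb{Q}$. Invoking the Cramer's-rule bound of Lemma~\ref{lem:Cramer}, I would extract an integer vector $\tilde v \in \mathbb{Z}^{n+1}$ with $\|\tilde v\|_\infty \leq (q+1)!$ and $c := \langle \tilde v,(1^n,1)\rangle \in \mathbb{Z} \setminus \{0\}$ satisfying $|c| \leq (q+1)!$. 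The decisive invariance is that the shift $(a,b) \mapsto (a,b) + \lambda \tilde v$ preserves $f_p$ at every queried point: balanced queries are unchanged by orthogonality of $\tilde v$ with $(x,1)$, and imbalanced queries return $0$ regardless of $p$. By induction on query depth, the decoder's entire transcript, and hence its output $D(p)$, is invariant under this shift, while the true value $p(1^n)$ changes by $\lambda c \neq 0$.

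To conclude I would partition the support into transcript classes $C$, each an affine subspace of $\mathbb{Z}^{n+1}$ closed under translation by $\tilde v$ and on which $D$ is a fixed scalar. Within each class the invariance means $D$ can match $p(1^n)$ on at most one integer point of each $\mathbb{Z}\tilde v$-orbit, so the per-class success probability is bounded by the ratio of the number of $\mathbb{Z}\tilde v$-orbits in $C \cap \{-N,\ldots,N\}^{n+1}$ to $|C \cap \text{box}|$; a standard lattice-shadow estimate (projecting the box onto the quotient $C / \mathbb{R} \tilde v$ and counting primitive and non-primitive contributions separately) bounds this ratio by $O(\|\tilde v\|_\infty / N) = O((q+1)!/N)$. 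Choosing $N$ a sufficiently large constant multiple of $(q+1)!$ drives the overall success probability below $1/4$; by Stirling applied at $q \leq \log n/\log\log n$ we have $(q+1)! = O(n)$, so $N = O(n)$ suffices, which lies safely below $\charF(\F)/2 \leq n^2/2$ when $\charF(\F) \geq n^2$ and so embeds faithfully into $\F$. The main obstacle, flagged in the overview, is precisely this infinite / large-characteristic setting: there is no uniform distribution on all affine functions, so one is forced into the bounded-coefficient ensemble, and the factorial bound on $\|\tilde v\|_\infty$ coming from Lemma~\ref{lem:Cramer} is what keeps $N$ both large enough to dominate the lattice count and small enough to embed in $\F$.
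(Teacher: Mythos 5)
Your overall strategy matches the paper's: a minimax argument against a hard distribution of bounded-coefficient affine functions erased on imbalanced points, with Lemma~\ref{lem:smallspan} (via Lemma~\ref{lem:Cramer}) supplying the key fact that $1^n$ escapes the span of any $q$ balanced queries. Where you diverge is the final counting step. The paper splits into two cases: for $\charF(\F)=p>n^2$ it samples coefficients uniformly from $\F_p$ and argues that, conditioned on the transcript, $\ell(1^n)$ remains uniform over $\F_p$; for characteristic zero it uses an injective-encoding argument with the very large $N=n^{\lceil\log s/\log\log s\rceil}$. Your dual-vector/orbit-translation argument is a genuinely different (and unified) way to finish, and the invariance-of-transcript induction under $(a,b)\mapsto(a,b)+\lambda\tilde v$ is correct. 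The extraction of an integer $\tilde v$ with $\|\tilde v\|_\infty\leq(q+1)!$ from Lemma~\ref{lem:Cramer} (via the augmented system with the extra row $(1^n,1)$ and target $e_{q+1}$) is also fine, though you should note it is a mild repackaging of that lemma rather than a direct invocation.

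The genuine gap is in the lattice count. First, the per-class bound is false as stated: a transcript class $C$ can lie entirely inside the boundary shell $\{p: p-\tilde v\notin\{-N,\ldots,N\}^{n+1}\}$ (e.g., transcripts reporting extreme answer values pin $p$ to a face of the box), so every orbit in $C$ has length one and the per-class ratio is $1$; the accounting must be done in aggregate over all classes, using that the orbit-starting points across classes are disjoint subsets of the shell. Second, and more importantly, the shell has relative size governed by $\sum_i|\tilde v_i|/N$, which a priori is $(n+1)\|\tilde v\|_\infty/N$, not $\|\tilde v\|_\infty/N$: the box loses a $\Theta(|\tilde v_i|/N)$ fraction in \emph{each} coordinate of the support of $\tilde v$. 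With your choice $N=O((q+1)!)$ the resulting success bound is $O(n(q+1)!/N)=O(n)$, which is vacuous. To repair this you must either take $N=\Theta(n\cdot(q+1)!)$ (which still works: $N=O(n^{1.5})<p/2$ since $(q+1)!<s\leq\sqrt n$, orbit lengths stay below $p$, and $c\not\equiv 0\pmod p$ since $|c|<p$), or insist on the sparse kernel vector from Cramer's rule (support size $\leq q+1$) and argue carefully about the dependence of that support on the transcript. As written, the quantitative conclusion does not follow, even though the route is salvageable.
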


\begin{proof}
The proof of the theorem will use the minimax principle. Specifically, we design a ``hard'' probability distribution $\mc{D}$ over functions that are $\varepsilon$-close to $\Fnd{n}{1}$ such that any \emph{deterministic} decoder that  decodes the value of a random function (chosen according to $\mc{D}$) at the point $1^n$ while making very few queries will fail to decode the value with probability at least $1/4$.

We start with the case of positive characteristic which is somewhat simpler to describe. Let $\charF(\F) = p > n^2$. We define the hard distribution $\mc{D}$ as follows. Let
\[
E = \{x\in \{-1,1\}^n\ |\ |\sum_i x_i| \geq 2n/s\},
\]
so that, by the Chernoff bound, (see, e.g.~\cite{DP}) $|E|\leq \varepsilon2^n.$ 
Let $S = \{-1,1\}^n\setminus E.$ 

We now sample a random function $f\sim \mc{D}$ as follows:
\begin{itemize}
\item Choose $a_1,\ldots,a_n\in \F_p\subseteq \F$ uniformly at random independently. Let $\ell(X_1,\ldots,X_n) = \sum_i a_i X_i\in \Fnd{n}{1}.$
\item Let $f(x) = 0$ if $x\in E$ and $f(x) = \ell(x)$ if $x\in S.$
\end{itemize}
Since $f(x) = \ell(x)$ for $x\not\in E,$ we have $\delta(f,\ell)\leq \varepsilon$. In particular $\delta_1(f)\leq \varepsilon.$

Let $\mc{A}$ be any deterministic decoding algorithm for decoding $f(1^n).$ Assume that the worst case number of queries $t$ made by $\mc{A}$ satisfies $t < \log s/\log \log s.$ W.l.o.g. we assume that $\mc{A}$ always makes exactly $t$ queries and also that none of these queries are made to inputs $x\in E$ (since at these points $f(x)$ is known to take the value $0$). Additionally, we may assume that these queries correspond to linearly independent inputs since if a query point $x$ is a linear combination of previous queries, then $\ell(x) = \ip{a}{x}$ can be determined from the answers to previous queries. 

Let $x^1,\ldots,x^t$ be the (adaptive) queries made by $\mc{A}$ on the random function $f$. After these queries are made, the algorithm has $\ell(x^i) = \ip{a}{x^i}$ for each $i\in [t]$, where $a = (a_1,\ldots,a_n)$. However, by Lemma~\ref{lem:smallspan}, we know that $1^n$ is not in the $t$-span of the inputs in $S$ and hence, given the values $\ell(x^1),\ldots,\ell(x^t)$, $\ell(1^n) = \sum_i a_i$ is still distributed uniformly over $\F_p$. Hence, the probability that the algorithm outputs $\ell(1^n)$ correctly is at most $1/p < 3/4.$ 

Now consider the case when $\charF(\F) = 0.$ We define our hard distribution $\mc{D}$ exactly as above except that the coefficients $a_1,\ldots,a_n$ are chosen i.u.a.r. from $\{-N,\ldots,N\}$ where $N = n^{\lceil\log s/\log\log s \rceil}$.

Let $\mc{A}$ be any deterministic decoding algorithm for decoding $f(1^n)$ as above. Again, we assume that $\mc{A}$ always makes $t\leq \log s/\log \log s$ many queries corresponding to linearly independent inputs, and also that none of these queries are made to inputs $x\in E.$ 

Let $A \subseteq \{-N,\ldots,N\}^n$ be the set of coefficients of linear polynomials $\ell$ such that $\mc{A}$ is able to decode $\ell(1^n) = \sum_i a_i$ correctly. 

To bound the size of $|A|$, we use an encoding argument. Considre any $(a_1,\ldots,a_n)\in A$ and let $\ell(X) = \sum_i a_i X_i.$ Let $x^1,\ldots,x^t$ be the queries made on input $\ell$. Given $\ip{a}{x^i}$ for $i\in [t],$ the algorithm determines $\sum_i a_i = \ip{a}{1^n}$. Hence, at this point the algorithm has $\ip{a}{x}$ for $x\in I' = \{x^1,\ldots,x^t, 1^n\}$. Note that $I'$ is a set of dimension $t+1$ since by Lemma~\ref{lem:smallspan}, $1^n$ is not in the $t$-span of $S$. We can thus a subset $I'' = \{e^{i_1},\ldots,e^{i_{n-t-1}}\}$ of the set of standard basis vectors $\{e^1,\ldots,e^n\}$ of size $n-t-1$ so that $I = I'\cup I''$ is a basis for $\F^n.$ 

Define an encoding function 
\[
\mc{E}:A\rightarrow \{-Nn,\ldots,Nn\}^t \times \{-N,\ldots,N\}^{n-t-1}
\]
as follows. For each $x\in A$, we choose $I$ as above and set
\[
\mc{E}(a) = (\ip{a}{x^1},\ldots,\ip{a}{x^t},\ip{a}{e^1},\ldots,\ip{a}{e^{i_{n-t-1}}}).
\]
Note that each $\ip{a}{x^j}\in \{-Nn,\ldots,Nn\}$ since $a\in \{-N,\ldots,N\}^n$ and $x^j\in\{-1,1\}^n$.  

We claim that $\mc{E}$ is $1$-$1$. This is because, on being given $\mc{E}(a)$ as above, we can determine $\ip{a}{x}$ for each $x\in I$ by the following argument: $\mc{E}(a)$ directly gives us $\ip{a}{x}$ for each $x\in I\setminus \{1^n\}$ and by construction of $x^1,\ldots,x^t$, we know that $\ip{a}{x^1},\ldots,\ip{a}{x^t}$ determines the value of $\ip{a}{1^n}.$ Thus, we have $\ip{a}{x}$ for each $x\in I$ and as $I$ is a basis for $\F^n$, we can obtain $a\in \F^n$ as well. 

Since $\mc{E}$ is $1$-$1$, we see that 
\[
|A| \leq (2Nn+1)^t\cdot (2N+1)^{n-t-1}\leq (2N+1)^{n-1}\cdot n^t \leq (2N+1)^{n-1} \cdot N \leq (2N+1)^{n} \cdot \frac{3}{4}.
\]
which implies that the relative size of $A$ inside $\{-N,\ldots,N\}^n$ is at most $3/4$. This concludes the proof.
\end{proof}

\section{Local decoding when $\charF(\F)$ is small}
\label{sec:ldsmall}

In this section, we give a local decoder over fields of small characteristic. An overview of this construction may be found in Section~\ref{ssec:overview}.

Let $p$ be a prime of constant size and let $\F$ be any (possibly infinite) field of characteristic $p$. Let $d$ be the degree parameter and $k$ be the smallest power of $p$ that is strictly greater than $d$. Note that $k\leq pd.$ 
We show that the space $\Fnd{n}{d}$ has a $(1/(4\cdot \binom{2k}{k}),\binom{2k}{k})$-local decoder, hence proving Theorem~\ref{thm:decoding-small-char-intro}. 

The main technical tool we use is a suitable linear relation on the space $\Fnd{2k}{d},$ which we describe now. 
We say that a set $S\subseteq \{0,1\}^{2k}$ is \emph{useful} if for every polynomial $G\in \Fnd{2k}{d}$, $G(0^{2k})$ is determined by the restriction of the function $G$ to the inputs in $S$. 
Let $B\subseteq \{0,1\}^{2k}$ denote the set of all balanced inputs (i.e. inputs of Hamming weight exactly $k$). 

\begin{lemma}
\label{lem:useful}
Fix $d,k$ as above. Then the set $B\subseteq \{0,1\}^{2k}$ of balanced inputs is useful.  
\end{lemma}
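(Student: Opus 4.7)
The plan is to produce an explicit linear combination of evaluations of $G$ on a sub-slice of $B$ that recovers $G(0^{2k})$. I fix any coordinate, say $1 \in [2k]$, and set $T = \{x \in B : x_1 = 0\}$, a subset of $B$ of size $\binom{2k-1}{k}$. I claim the identity
\[
G(0^{2k}) \;=\; \sum_{x \in T} G(x) \qquad \text{in } \F
\]
holds for every $G \in \Fnd{2k}{d}$. Granted this, if $G$ vanishes on all of $B$ (in particular on $T$), the right-hand side is $0$, forcing $G(0^{2k}) = 0$. Equivalently, the restriction $G|_B$ determines the value $G(0^{2k})$, which is exactly the definition of $B$ being useful.

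To verify the identity, I plan to expand $G = \sum_{|S| \leq d} c_S X^S$ in multilinear monomials and interchange summation. Monomials $X^S$ with $1 \in S$ contribute $0$ since $x_1 = 0$ throughout $T$, and for $S$ with $1 \notin S$ the number of $x \in T$ whose support contains $S$ is $\binom{2k - 1 - |S|}{k - |S|}$. Since $G(0^{2k}) = c_\emptyset$, the identity reduces to the purely combinatorial claim that, in characteristic $p$,
\[
\binom{2k - 1 - j}{k - j} \;\equiv\; \begin{cases} 1, & j = 0, \\ 0, & 1 \leq j \leq d, \end{cases} \pmod{p}.
\]

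The main (and essentially only) obstacle is this mod-$p$ binomial identity, which I plan to handle by Lucas' theorem, making essential use of the hypothesis $k = p^t > d$. For $j = 0$, in base $p$ we have $2k - 1 = (1, p-1, \ldots, p-1)$ and $k = (1, 0, \ldots, 0)$ (aligned to $t+1$ digits), so Lucas immediately yields $\binom{1}{1} \cdot \binom{p-1}{0}^{t} = 1$. For $1 \leq j \leq d < p^t$, I write $2k - 1 - j = p^t + (p^t - 1 - j)$ and note that $k - j = (p^t - 1 - j) + 1$; the top digit at position $t$ contributes the Lucas factor $\binom{1}{0} = 1$, so it suffices to find a lower-order digit where $k - j$ strictly exceeds $2k - 1 - j$. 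I plan a short case split on the carry pattern of adding $1$ to $p^t - 1 - j$: if its least significant digit $e_0$ is less than $p - 1$, position $0$ contributes the Lucas factor $\binom{e_0}{e_0 + 1} = 0$; otherwise a block of trailing $(p-1)$s rolls over, and at the first higher position $r$ whose digit $f$ satisfies $f < p - 1$, the number $k - j$ has digit $f + 1$ in position $r$, contributing $\binom{f}{f+1} = 0$. Either way the Lucas product vanishes modulo $p$, uniformly for every prime $p$ (including $p = 2$).
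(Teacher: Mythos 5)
Your proof is correct and follows essentially the same strategy as the paper's: sum $G$ over a coordinate-slice of $B$ obtained by forcing some coordinates to $0$, observe that each monomial contributes a binomial coefficient, and use Lucas' theorem with $k=p^t>d$ to kill every nonconstant monomial. The only difference is the choice of slice --- the paper zeroes out the last $k-d$ coordinates and gets $\sum_{y\in B'}G(y)=\binom{d+k}{k}\,G(0^{2k})$, requiring the invertibility of $\binom{d+k}{k}$ in $\F$, whereas your slice $\{x\in B: x_1=0\}$ yields the normalizing constant $1$ via the identity $\binom{2k-1-j}{k-j}\equiv [j=0]\pmod{p}$, which your carry analysis establishes correctly.
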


The proof of the above lemma will use Lucas' theorem, which we recall below.

\begin{theorem}[Lucas' theorem]
\label{thm:Lucas}
Let $p$ be any prime and $a,b\in \mathbb{N}$. Let $a_1,\ldots,a_\ell\in \{0,\ldots, p-1\}$ and $b_1,\ldots,b_\ell\in \{0,\ldots,p-1\}$ be the digits in the $p$-ary expansion of $a$ and $b$,  i.e.,  $a = \sum_{j\in [\ell]}a_j p^{j-1}$ and $b = \sum_{j\in [\ell]} b_j p^{j-1}$. Then, we have
\[
\binom{a}{b} \equiv \prod_{i\leq \ell} \binom{a_i}{b_i} \pmod{p}
\]
where $\binom{a_i}{b_i}$ is defined to be $0$ if $a_i < b_i.$
\end{theorem}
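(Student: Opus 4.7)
The plan is to prove Lucas' theorem via the classical generating function approach, working in the polynomial ring $\mathbb{F}_p[x]$ and exploiting the ``Freshman's dream'' identity.

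First I would establish the base identity $(1+x)^p \equiv 1 + x^p \pmod{p}$, viewed as an equality in $\mathbb{F}_p[x]$. This follows directly from the binomial theorem together with the elementary fact that $p \mid \binom{p}{i}$ for $1 \leq i \leq p-1$, since the numerator $p!$ carries a factor of $p$ not cancelled by the denominator $i!(p-i)!$. By an easy induction on $j$, raising both sides to the $p$-th power repeatedly yields $(1+x)^{p^{j-1}} \equiv 1 + x^{p^{j-1}} \pmod{p}$ for every $j \geq 1$.

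Next, using the $p$-ary expansion $a = \sum_{j \in [\ell]} a_j p^{j-1}$, I would rewrite
\[
(1+x)^a = \prod_{j \in [\ell]} (1+x)^{a_j p^{j-1}} = \prod_{j \in [\ell]} \left((1+x)^{p^{j-1}}\right)^{a_j} \equiv \prod_{j \in [\ell]} \left(1 + x^{p^{j-1}}\right)^{a_j} \pmod{p},
\]
where the final congruence uses the iterated Freshman's dream identity from the previous step. Both sides are polynomials in $\mathbb{F}_p[x]$, so I can compare coefficients of $x^b$.

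On the left, the coefficient of $x^b$ is $\binom{a}{b} \bmod p$. On the right, expanding each factor by the binomial theorem gives
\[
\prod_{j \in [\ell]} \left(\sum_{c_j = 0}^{a_j} \binom{a_j}{c_j} x^{c_j p^{j-1}}\right),
\]
so the coefficient of $x^b$ is $\sum \prod_{j} \binom{a_j}{c_j}$, summed over all tuples $(c_1,\ldots,c_\ell)$ with $0 \leq c_j \leq a_j$ and $\sum_j c_j p^{j-1} = b$. Here the main (and only subtle) step is uniqueness of the $p$-ary representation: since each $c_j$ in a nonzero term is at most $a_j \leq p-1$, the condition $\sum_j c_j p^{j-1} = b$ forces $c_j = b_j$ for every $j$ (and any term violating $c_j \leq a_j$ yields a vanishing binomial coefficient, consistent with the convention $\binom{a_j}{b_j} = 0$ when $a_j < b_j$). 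Matching coefficients therefore gives $\binom{a}{b} \equiv \prod_{j \leq \ell} \binom{a_j}{b_j} \pmod{p}$, which is exactly the claim. The proof is essentially routine once the Freshman's dream identity is in hand; the only point requiring care is the base-$p$ uniqueness argument when equating coefficients.
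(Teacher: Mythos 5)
Your proof is correct, and it is the standard generating-function argument for Lucas' theorem. The paper itself gives no proof: Theorem~\ref{thm:Lucas} is invoked as a classical black-box result (it goes back to Lucas, 1878), so there is no internal proof to compare against. Your derivation is sound: the identity $(1+x)^p \equiv 1+x^p \pmod p$ follows from $p \mid \binom{p}{i}$ for $0 < i < p$; iterating gives $(1+x)^{p^{j-1}} \equiv 1 + x^{p^{j-1}}$; decomposing the exponent $a$ in base $p$ and multiplying these congruences gives a product whose $x^b$-coefficient you extract. The one subtle point — that the constraint $\sum_j c_j p^{j-1} = b$ with $0 \le c_j \le a_j \le p-1$ forces $c_j = b_j$ by uniqueness of base-$p$ representation, and that any $j$ with $b_j > a_j$ leaves no admissible tuple (hence coefficient zero, matching the convention $\binom{a_j}{b_j}=0$) — you identify and handle correctly. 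This is a complete and correct proof of a result the paper treated as known.
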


\begin{corollary}
\label{cor:Lucas}
For $i\in \{0,\ldots,d\}$, we have $\binom{d+k-i}{k-i}\not\equiv 0\pmod{p}$ if and only if $i=0$. 
\end{corollary}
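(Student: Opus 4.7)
The plan is to apply Lucas' theorem (Theorem~\ref{thm:Lucas}) after unpacking what the definition of $k$ forces on the base-$p$ digit expansions involved. Write $k = p^t$, where $t \geq 1$ is the unique integer with $p^{t-1} \leq d < p^t$. The key structural observation, which I would record first, is that for every $i \in \{0, 1, \ldots, d\}$ both $d - i$ and $p^t - i$ lie in $[0, p^t)$, so their base-$p$ expansions occupy only positions $0, 1, \ldots, t-1$. Consequently the addition $(d-i) + p^t$ is carry-free, and the base-$p$ digits of $d + k - i$ are precisely the digits of $d - i$ at positions $0, \ldots, t-1$ together with a single $1$ at position $t$. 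Meanwhile $k - i = p^t - i$ has digit $0$ at position $t$ and arbitrary digits at positions below.

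For the ``if'' direction ($i = 0$), since $k = p^t$ has digit $1$ at position $t$ and $0$ elsewhere, Lucas' theorem collapses the product to
\[
\binom{d+k}{k} \equiv \binom{1}{1} \cdot \prod_{j=0}^{t-1}\binom{d_j}{0} \equiv 1 \pmod{p},
\]
which is nonzero.

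For the ``only if'' direction ($1 \leq i \leq d$), applying Lucas' theorem together with the digit description above yields
\[
\binom{d+k-i}{k-i} \equiv \binom{1}{0} \cdot \prod_{j=0}^{t-1} \binom{(d-i)_j}{(p^t - i)_j} \pmod{p},
\]
so it suffices to exhibit a single position $j < t$ at which $(d-i)_j < (p^t - i)_j$, forcing that factor (and thus the whole product) to vanish mod $p$. Such a $j$ must exist by a one-line pigeonhole: if $(d-i)_j \geq (p^t - i)_j$ held for every $j < t$, then weighting by $p^j$ and summing would give $d - i \geq p^t - i$, contradicting $d < p^t$.

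The main obstacle, such as it is, is purely bookkeeping---confirming that the two additions $d + k$ and $(d - i) + p^t$ are carry-free and that no spurious digits appear at positions $\geq t$---and once that is in place Lucas' theorem does all the work. I do not anticipate any algebraic or combinatorial subtlety beyond this digit analysis.
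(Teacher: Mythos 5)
Your proof is correct and follows essentially the same route as the paper: both arguments apply Lucas' theorem after observing that $d+k-i$ has digit $1$ and $k-i$ has digit $0$ at position $t$ (so the $i=0$ case gives $\binom{1}{1}\prod\binom{d_j}{0}=1$), and both derive the vanishing for $i\geq 1$ from the fact that the truncations below position $t$ satisfy $d-i < p^t-i$, which forces some digit of $d-i$ to be strictly smaller than the corresponding digit of $p^t-i$. Your carry-free bookkeeping and the pigeonhole step are just a slightly more explicit rendering of the paper's identical argument.
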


\begin{proof}
Note that by Lucas' theorem (Theorem~\ref{thm:Lucas}), $\binom{a}{b} \equiv 0 \pmod{p}$ if and only if there are digits $a_j,b_j$ in the $p$-ary expansions of $a$ and $b$ respectively with $a_j < b_j.$

Consider first the case when $i=0$. Let $a = d + k$ and $b=k$. Let
\begin{equation}
\label{eq:ab}
a = \sum_{j=1}^{\ell} a_j\cdot p^{j-1}\qquad\qquad b = \sum_{j=1}^{\ell} b_j\cdot p^{j-1}
\end{equation}
where $a_j,b_j\in \{0,\ldots, p-1\}$ and $k=p^{\ell-1}$. Then, we have $b_{j} = 0$ for each $j < \ell$ and $b_{\ell} = a_{\ell} = 1$. Hence by Lucas' theorem, we have $\binom{a}{b}\neq 0\pmod{p}.$

Now consider the case when $i\in [d].$ Let $a = d+k-i$ and $b=k-i$. Again write $a,b$ as in (\ref{eq:ab}) with $k = p^{\ell-1}$. In this case, we have $a_\ell = 1$ but $b_\ell = 0$, the latter due to the fact that $b < k.$ Hence if we consider $a' = \sum_{j \in [\ell-1]}a_j p^{j-1}$ and $b' = \sum_{j \in [\ell-1]}b_j p^{j-1}$, we get $a' = d-i < b' = k-i$. Therefore, there must exist $j \in [\ell-1]$ such that $a_j < b_j$. From Lucas' theorem, it now follows that $\binom{a}{b} \equiv 0\pmod{p}.$
\end{proof}

\begin{proof}[Proof of Lemma~\ref{lem:useful}]
Fix any $G\in \Fnd{2k}{d}$. Assume that
\[
G(Y_1,\ldots,Y_{2k}) = \sum_{I\subseteq [2k]: |I|\leq d} \alpha_I Y^{I}
\]
where $Y^I$ denotes $\prod_{i\in I} Y_i$.

Let $B'$ denote all those inputs in $B$ where the last $k-d$ bits are set to $0$. We will compute the sum of $G$ on inputs from $B'$. But let us first consider a monomial $Y^I$ and see what its sum over $y \in B'$ looks like. The monomial evaluates to $1$ on $y \in B'$ if $y_i = 1$ for every $i \in I$, and evaluates to $0$ otherwise. There are exactly $\binom{d+k - |I|}{k-|I|}$ choices of $y\in B'$ that satisfy $y_i = 1$ for every $i \in I$. Thus summing over $y \in B'$ we get $\sum_{y\in B'} y^I = \binom{d+k-|I|}{k - |I|}$. Summing over all monomials we get:
\begin{align}
\sum_{y\in B'} G(y) &=  \sum_{I\subseteq [2k]: |I|\leq d} \alpha_I \cdot \sum_{y\in B'} Y^I\notag\\
&= \sum_{I\subseteq [2k]: |I|\leq d} \alpha_I \cdot \binom{d+k-|I|}{k-|I|}\label{eq:Lucas}
\end{align}

By Corollary~\ref{cor:Lucas}, it follows that for $i\in \{0,\ldots,d\}$, we have
\[
\binom{d+k-i}{k-i} \not\equiv 0\pmod{p}
\]
if and only if $i=0$ and so $\sum_{y \in B'} G(y) = \binom{d+k}{k}\cdot \alpha_{\emptyset}$.  Let $c = \binom{d+k}{k} \pmod{p}$. We have
$c \in \F_p^* \subseteq \F^*$ and in particular $c$ is invertible in $\F$, and $\sum_{y\in B'}G(y) = c\cdot \alpha_{\emptyset} =  c\cdot G(0^{2k})$. 
Hence, we get $G(0^{2k}) = c^{-1}\cdot \sum_{y\in B'} G(y).$ Therefore, $G(0^{2k})$ is determined by the restriction of $G$ to $B'$ and hence also by its restriction to $B$.
\end{proof}

We now show that $\Fnd{n}{d}$ has a $(1/(4\cdot \binom{2k}{k}),\binom{2k}{k})$-local decoder. 

\paragraph{The decoder.} We now give the formal description of the decoder. Let the decoder be given oracle access to $f$ with the promise that $f$ is $1/(4\cdot \binom{2k}{k})$-close to some $F\in \Fnd{n}{d}$. Let the input to the decoder be $x\in \{0,1\}^n$. The problem is to find $F(x)$.

We describe the decoder below:

\paragraph{Decoder $D_k^f(x)$.}
\begin{itemize}
\item Partition $[n]$ into $2k$ parts by choosing a \emph{uniformly} random map $h:[n]\rightarrow [2k]$. I.e. each $h(j)$ is chosen i.u.a.r. from $[2k].$
\item For $i\in [2k]$ and $j\in [n]$ such that $h(j) = i$, identify $X_j$ with $Y_i \oplus x_j$.
\item Let $g(Y_1,\ldots, Y_{2k})$ and $G(Y_1,\ldots,Y_{2k})$ be the restrictions of $f$ and $F$ respectively. Assuming $g|_B=G|_B$, query $g$ at all inputs in $B$ and decode $G(0^{2k})$ from $G|_B$. Output the value decoded.
\end{itemize}

The main theorem of this section is the following. Note that this implies Theorem~\ref{thm:decoding-small-char-intro}.

\begin{theorem}
\label{thm:decoding-small-char}
Let $\F$ be a field of characteristic $p$. For integer $d \geq 0$, let $k$ be the smallest power of $p$ greater than $d$. Then
the decoder $D_k$ is a $(1/(4\cdot \binom{2k}{k}),\binom{2k}{k})$-local decoder for $\Fnd{n}{d;\F}.$
\end{theorem}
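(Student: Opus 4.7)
The plan is to verify the three standard ingredients for such a local decoder: (i) the query complexity is as claimed; (ii) the returned value is indeed $F(x)$ whenever the queries land on points where $f = F$; and (iii) with probability at least $3/4$, \emph{all} queries land on points where $f = F$.

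For (i), the decoder reads $g$ only at the $|B| = \binom{2k}{k}$ balanced inputs, so the query complexity is immediate. For (ii), note that under the identification $X_j \mapsto Y_{h(j)} \oplus x_j$, setting $Y_1 = \cdots = Y_{2k} = 0$ recovers $X_j = x_j$, so that $G(0^{2k}) = F(x)$. Since $g|_B = G|_B$ (under the event in (iii)) and $B$ is useful by Lemma~\ref{lem:useful}, the value $G(0^{2k})$ is correctly determined from the queried values. Hence the interesting content is all in step (iii).

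For (iii), I would fix any $y \in B$ and compute the distribution of the point $z^y \in \{0,1\}^n$ queried (i.e.\ $z^y_j = y_{h(j)} \oplus x_j$). The key observation is that because $y$ is balanced (exactly $k$ ones and $k$ zeros among $2k$ coordinates), for each $j$ the bit $y_{h(j)}$ is uniform over $\{0,1\}$ (it equals $1$ with probability $k/(2k) = 1/2$), and since the values $h(j)$ are chosen independently across $j$, the bits $y_{h(j)}$ are mutually independent. Therefore $z^y$ is uniformly distributed over $\{0,1\}^n$, regardless of $x$. Consequently
\[
\Pr_h\!\left[f(z^y) \neq F(z^y)\right] = \delta(f, F) \leq \frac{1}{4 \binom{2k}{k}}.
\]
A union bound over the $\binom{2k}{k}$ points of $B$ gives
\[
\Pr_h\!\left[\,\exists\, y \in B \colon g(y) \neq G(y)\right] \leq \binom{2k}{k} \cdot \frac{1}{4\binom{2k}{k}} = \frac{1}{4},
\]
so with probability at least $3/4$ we have $g|_B = G|_B$, completing the proof.

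There is no real obstacle here: the critical design choice (restricting to \emph{balanced} $y$'s) is exactly what makes each individual query marginally uniform, so a naive union bound suffices. Unlike in the testing analysis of Section~\ref{sec:ltall}, we do not need joint independence or hypercontractivity — only that each $z^y$ individually has the uniform marginal. The only subtlety to flag is that the statement ``$\binom{d+k}{k}$ is invertible in $\F$'' implicit in Lemma~\ref{lem:useful} is used via Corollary~\ref{cor:Lucas}, and the choice of $k$ as the smallest power of $p$ exceeding $d$ is what makes that corollary apply.
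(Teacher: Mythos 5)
Your proposal is correct and follows essentially the same route as the paper's proof: each balanced $y$ yields a marginally uniform query point $z^y$ (because $y_{h(j)}$ is a uniform bit for balanced $y$, independently across $j$), a union bound over the $\binom{2k}{k}$ queries bounds the failure probability by $1/4$, and Lemma~\ref{lem:useful} then recovers $G(0^{2k}) = F(x)$. No gaps.
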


\begin{proof}
The bound on the query complexity of the decoder is clear from the description of $D_k$. So we only need to argue that the decoder outputs the value of $F(x)$ correctly with probability at least $3/4$.

The crucial observation is that for each fixed $y\in B$, querying $g(y)$ amounts to querying $f$ at a \emph{uniformly} random point $z\in \{0,1\}^n$, where the randomness comes from the choice of $h$. This is because for each $j\in [n]$, we have
\[
z_j = y_{h(i)} \oplus x_j
\]
where $h:[n]\rightarrow [2k]$ is a uniformly random function. Since $y$ is \emph{balanced}, each $y_{h(i)}$ is a uniformly random bit. Hence we see that $z\in \{0,1\}^n$ is distributed uniformly over $\{0,1\}^n.$

Thus, if $\delta(f,F) \leq 1/(4\cdot \binom{2k}{k})$, with probability at least $3/4$, all the random queries made lie outside the error set $E = \{z\in \{0,1\}^n\ |\ f(z)\neq F(z)\}$ and in this case, the decoder is able to access the function $G|_B$ at each input $y\in B.$ By Lemma~\ref{lem:useful}, this allows the decoder to determine $G(0^{2k}).$ Noting that the image of $0^{2k}$ in $\{0,1\}^n$ is exactly $x$, we thus see that the decoder outputs $F(x)$ correctly. 
\end{proof}

%
%
%
%
%
%
%

\section{Tolerant Local Testing}\label{sec:tol-test}

Recall that the tester of the Section~\ref{sec:ltall} was not a tolerant tester, i.e., it was not guaranteed to accept functions that are close to $\Fnd{n}{d}$ with high probability. Specifically a tolerant tester should given, $\delta_1 < \delta_2$ accept functions that are $\delta_1$-close with high probability while rejecting functions that are $\delta_2$-far with high probability. In this section, we augment the tester of Section~\ref{sec:ltall} to get such a tolerant tester. Theorem~\ref{thm:tolerant-Fq-intro}, gives the formal assertion. 


We start by describing our tolerant tester. At a high level, our tolerant tester estimates the distance of $f$ to the (unique) closest polynomial upto error $\epsilon = (\delta_2-\delta_1)/2$ with high probability. If the estimated distance $\mu$ is $< (\delta_1+\delta_2)/2$ it outputs YES, else it outputs NO.

\paragraph{Tester $T_{d,\delta_1,\delta_2}$}\label{tol_tester}
\begin{enumerate}
	\item Run the intolerant $(\frac{1}{2^{d+10}},2^{O(d)})$-tester of Theorem~\ref{thm:test-main} and reject if it rejects. (We assume that this rejects all functions that are $1/2^{d+10}$-far from $\Fnd{n}{d}$ with probability $ \geq 99/100$.)
	
	\item If the intolerant tester accepts,
	\begin{itemize}
		\item Choose a uniformly random map $\sigma: [n] \rightarrow [k]$ for $k = O(d/(\delta_2-\delta_1)^4)$.
		\item Choose an $a \in \{0,1\}^n$ uniformly at random. 
		\item Replace each $x_i$ with $y_{\sigma(i)} \oplus a_i.$
		\item Choose a set $S$ of $O(1/(\delta_2 - \delta_1)^2)$ points by sampling with replacement from $\{0,1\}^k$.	
	\end{itemize}
	
	\item 
	\begin{itemize}
		\item Interpolate the restricted function ${g(y_1, \ldots, y_k) = f(y_{\sigma(1)}\oplus a_1,\ldots, y_{\sigma(n)} \oplus a_n)}$ on $S$ and find the closest polynomial $h \in \Fnd{k}{d}$.
		\item If $\mu$ is the distance between $g$ and $h$ on the set $S$, i.e. 
		$$\mu = \delta_S(g,h) := \prob{x\in S}{g(x) \neq h(x)},$$ then accept if $\mu < (\delta_1+\delta_2)/2$ and reject otherwise.	
	\end{itemize}
	
\end{enumerate}

\bigskip

We will now present an analysis of the test to show that it has tolerant completeness and soundness guarantees. The first two lemmas are properties of the random map $(\sigma,a)$ and the set $S$, which when combined imply Theorem~\ref*{thm:tolerant-Fq-intro}. 

\begin{definition}
	A linear code $C = [N,K,D]_{\F}$ with generating matrix $G \in \F^{N \times K}$, is a linear subspace $C$ of dimension $K$ in $\F^N$, such that ${C = \{Gw\ |\  w \in \F^K\} }$. Moreover, the (absolute) distance between any two codewords is at least $D$, that is $D\leq N\cdot \min_{w,w' \in \F^K}\delta(Gw,Gw').$ (The latter quantity is also the minimum weight of any non-zero codeword.)
\end{definition}

\begin{lemma}\label{lem:random_S}
For a linear code $C = [N,K,D]_F$, a random set $S$ of size $O\left(\frac{NK\log N}{D} \right)$ is such that $C|_S$ (the code restricted to $S$) has fractional distance $O(D/N)$, with high probability. That is,
\begin{equation}
\prob{S \sim [N]^m}{\exists v_1 \neq v_2 \in C: \delta_S(v_1,v_2)  \leq \frac{D}{2N}} \leq 1/100.
\end{equation} 
\end{lemma}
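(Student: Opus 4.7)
The plan is to reduce the statement to the following claim: for every nonzero codeword $v \in C$, the random multiset $S$ hits $\Supp(v)$ in more than $mD/(2N)$ places, where $m = |S|$. Since for any $v_1 \neq v_2 \in C$ the difference $v := v_1 - v_2$ is a nonzero codeword with $|\Supp(v)| \geq D$ (by the distance bound), and $\delta_S(v_1,v_2) = |S \cap \Supp(v)|/m$, the claim implies the lemma.

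The main obstacle is that over an infinite field $F$ there are infinitely many nonzero codewords, so a direct union bound over $v \in C \setminus \{0\}$ is hopeless. I will sidestep this by union bounding over the \emph{distinct supports} of codewords. Let $G \in F^{N \times K}$ be a generator matrix of $C$ with rows $g_1, \ldots, g_N \in F^K$. Then every codeword equals $Gw$ for some $w \in F^K$ and has support $\{i : \langle g_i, w \rangle \neq 0\}$. For a subset $T \subseteq [N]$, the set of $w$'s whose codeword $Gw$ vanishes on $T$ is the subspace $V_T := \bigcap_{i \in T}\{w : \langle g_i,w\rangle = 0\}$, which depends only on $\mathrm{span}\{g_i : i \in T\} \subseteq F^K$ (via orthogonal complement). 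Thus the number of distinct supports of codewords is bounded by the number of distinct subspaces of $F^K$ spanned by subsets of $\{g_1,\ldots,g_N\}$. Each such subspace has dimension at most $K$, hence a basis of at most $K$ vectors drawn from $\{g_1,\ldots,g_N\}$, so the count is at most $\sum_{j=0}^{K}\binom{N}{j} \leq (N+1)^K$.

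A Chernoff bound finishes the argument. For any fixed $T \subseteq [N]$ with $|T| \geq D$, the random variable $|S \cap T|$ is a sum of $m$ i.i.d.\ Bernoullis of mean $|T|/N \geq D/N$, so
\[
\prob{S}{|S \cap T| \leq \frac{mD}{2N}} \leq \exp\left(-\frac{mD}{8N}\right).
\]
Union bounding over the at most $(N+1)^K$ distinct supports that can arise from nonzero codewords, the failure probability is bounded by $(N+1)^K \cdot \exp(-mD/(8N))$. Choosing $m = c\, NK\log N / D$ for a sufficiently large absolute constant $c$ drives this below $1/100$, which is exactly the conclusion of the lemma.

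The only non-routine step is the support-counting argument in the second paragraph, which is needed precisely because the naive bound $|F|^K$ on the number of codewords fails in the infinite-field case; the hyperplane-arrangement viewpoint replaces it with a clean, field-independent count.
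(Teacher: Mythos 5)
Your proof is correct and follows essentially the same route as the paper's: union-bound over the (field-independent) collection of distinct supports/zero-sets of codewords, count these by observing that each zero set is determined by the subspace spanned by the corresponding generator-matrix rows and hence by a basis of at most $K$ rows (giving $\binom{N}{\leq K}$, matching your $(N+1)^K$ up to constants in the exponent), and finish with the same Chernoff bound $\exp(-mD/8N)$ and the same choice of $m$. The only cosmetic difference is that the paper phrases the count in terms of zero sets rather than supports.
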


\begin{proof}
For a vector $v \in \F^N$, we let $Z(v) = \{i \in [N]: v_i = 0\}$ and call this the zero set of $v$. For a linear subspace $C \subseteq \F^N$, let $\mathcal{Z}(C) = \{Z \subseteq [N]: \exists v \in C, Z = Z(v) \}$.
We will first show that  $|\mathcal{Z}(C)| \leq {N \choose \leq K}$. 

Let $w \in \F^K$ and $v = Gw$, where $G \in \F^{N \times K}$ is the generating matrix of the code $C$. For $i\in [N]$, let $r_i\in \F^K$ denote the $i$th row of the matrix $G$, and let $R(G) = \{r_1,\ldots,r_N\}$. The zero-set of each $v \in C$ corresponds naturally to a subset of $R(G)$, namely the set $R_v(G) := \{r_i \in G: i \in Z(v)\}$. Let $\mathcal{Z}(G) = \{R_v(G): v \in C\}$, then we have that $|\mathcal{Z}(C)| = |\mathcal{Z}(G)|$. We will now bound $|\mathcal{Z}(G)|$.

Notice that $R_v(G) = Ker(T_v: \F^K \rightarrow \F^K) \cap R(G)$,
where $T_w(x) = \langle w,x \rangle$ is the linear map on $\F^K$ defined by $w$. 
Hence if $x,y \in R_v(G)$ and $z = \alpha x+ \beta y \in R(G)$, then $z \in R_v(G)$. So, every set $R_v(G)$ has a one-one correspondence to a canonical basis of at most $K$ elements of $R(G)$, since any linear subspace of $\F^K$ can have dimension at most $K$. Therefore the total number of different sets in $\mathcal{Z}(G)$, can be at most ${N \choose \leq K}$, since $G$ has $N$ rows.
	
Now that we have bounded the size of $\mathcal{Z}(C)$ by ${N \choose \leq K}$, we would like to prove that the distance of the code restricted to a random set $S$ (chosen from $[N]$ without replacement) is preserved with high probability. Notice that the distance of the code $C|_S$ is the minimum weight of any codeword restricted to $S$. We can show this using a Chernoff and union bound argument as follows. Let $Z_v \subseteq [N]$ be a fixed zero-sets and $\overline{Z_v} = [N] \setminus Z_v$. Since the distance of the code $C$ is at least $D$, we know that $|\overline{Z_v}| \geq D$. A Chernoff bound gives us that,

\begin{equation}
\prob{S \sim [N]^m}{\frac{|\overline{Z_v} \cap S|}{|S|}  = \frac{\sum_i \mathbb{1}(s_i \in \overline{Z_v})}{|S|} \leq \frac{1}{2}\cdot  \frac{D}{N}} \leq \exp(-\frac{mD}{8N})
\end{equation}

We proved previously that the number of zero sets, $|\mathcal{Z}(v)|$ is at most ${N \choose K}$, hence we can union bound over these to get that, 

\begin{equation}
\prob{S \sim [N]^m}{\exists v \in C: \delta_S(v,\vec{0})  \leq \frac{D}{2N}} \leq {N \choose \leq K} \cdot \exp(-\frac{mD}{8N})
\end{equation}

Taking $m = O((N/D) \cdot K\log N)$, we get the statement of the lemma.

\end{proof}

%
%
%
%

\begin{corollary}\label{cor:prop_S}
Let $S \subseteq \{0,1\}^k$ be a randomly chosen set of size $m = k^{O(d)}$. Then we have that,
\begin{align*}
	\Pr_{S \sim (\{0,1\}^k)^m}[\exists p_1 \neq p_2 \in \Fnd{k}{d}, \delta_S(p_1,p_2) \leq 1/2^{d+1} ] \leq 1/100.
\end{align*}
\end{corollary}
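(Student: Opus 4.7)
The plan is to obtain Corollary~\ref{cor:prop_S} as an immediate instantiation of Lemma~\ref{lem:random_S}, once we identify $\Fnd{k}{d}$ with a linear code of known parameters. View $\Fnd{k}{d}$ as a linear subspace of $\F^{N}$ where $N=2^k$ and a codeword is obtained by evaluating a multilinear polynomial at all points of $\{0,1\}^k$. The dimension of this code is $K=\binom{k}{\leq d}\leq k^d$, since the monomials $X^A$ for $A\subseteq [k]$ of size at most $d$ form a basis. By the DeMillo-Lipton-Schwartz-Zippel lemma recalled in the introduction, any non-zero polynomial in $\Fnd{k}{d}$ vanishes on at most a $(1-2^{-d})$-fraction of $\{0,1\}^k$, so the absolute distance satisfies $D\geq 2^{k-d}$.

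Next, I apply Lemma~\ref{lem:random_S} with these parameters. The required sample size is
\[
m \;=\; O\!\left(\frac{NK\log N}{D}\right) \;=\; O\!\left(\frac{2^k\cdot k^d\cdot k}{2^{k-d}}\right) \;=\; O\!\left(2^d\, k^{d+1}\right) \;=\; k^{O(d)},
\]
which matches the hypothesis of the corollary. The conclusion of the lemma then says that with probability at least $99/100$ over the random multiset $S$, every non-zero codeword of $\Fnd{k}{d}$ restricted to $S$ has relative weight strictly greater than $D/(2N)=2^{k-d}/(2\cdot 2^k)=1/2^{d+1}$. Equivalently, for every pair of distinct polynomials $p_1\neq p_2\in\Fnd{k}{d}$ we have $\delta_S(p_1,p_2)>1/2^{d+1}$, which is exactly what the corollary asserts.

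There is essentially no obstacle here beyond being slightly careful with the sampling model: Lemma~\ref{lem:random_S} is written for sampling without replacement from $[N]$, while the corollary draws $S$ from $(\{0,1\}^k)^m$ (with replacement). The Chernoff estimate used in the proof of Lemma~\ref{lem:random_S} goes through verbatim (indeed more cleanly) under sampling with replacement, since the indicator variables $\mathbb{1}(s_i\in\overline{Z_v})$ become i.i.d.\ Bernoullis of mean $|\overline{Z_v}|/N\geq D/N$. Thus the same union bound over the at most $\binom{N}{\leq K}$ possible zero-sets delivers the $1/100$ failure probability for our choice of $m$, and the corollary follows.
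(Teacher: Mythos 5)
Your proposal is correct and follows exactly the paper's route: the paper also proves the corollary by identifying $\Fnd{k}{d}$ with an $[N,K,D]$ code for $N=2^k$, $K=\binom{k}{\le d}$, $D=2^{k-d}$ and invoking Lemma~\ref{lem:random_S}; your write-up simply makes explicit the parameter arithmetic and the (harmless) with-/without-replacement issue that the paper glosses over.
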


\begin{proof}
We know that the space of degree $d$ polynomials, $\Fnd{k}{d}$ is a linear code, specifically, one can check that $\Fnd{k}{d} = [N,K,D]_{\F}$ with $N = 2^k, K = {k \choose d}$ and $D = 2^{k-d}$.	Applying Lemma~\ref{lem:random_S}, with $C = \Fnd{k}{d}$, we get the statement of the corollary.
\end{proof}

\begin{lemma}\label{lem:poly-prop}
	Let $f: \{0,1\}^n \rightarrow \F$ and $p \in \Fnd{n}{d}$ be $\delta$-close on $\{0,1\}^n$ with $0 \leq \delta \leq 1/2^{d+10}.$  Let $R_{\sigma,a}:\{0,1\}^k \rightarrow \{0,1\}^n$ be a random map defined by ${\sigma:[n] \rightarrow [k]}$ and $a \in \{0,1\}^n$ which are chosen uniformly at random and $R_{\sigma,a}(y_1,\ldots,y_k) = (y_{\sigma(1)} \oplus a_1,\ldots,y_{\sigma(n)} \oplus a_n)$.  Define the corresponding restrictions of $f,p$ as $g_{a,\sigma}(y) = f(R_{\sigma,a}(y))$ and $p_{a,\sigma}(y) = p(R_{\sigma,a}(y)), \forall y \in \{0,1\}^k$. Let $S \subseteq \{0,1\}^k$ be a set of size $m$, whose elements are chosen uniformly at random without replacement. For all $\epsilon \in [0,\delta]$, we have that, when $k = O(d\log(d/\epsilon)/\epsilon^4)$ and $m = O(1/\epsilon^2+k^d)$,
	
	\begin{equation} \label{eqn:prop1}
		\Pr_{a,\sigma,S}[\delta_S(p_{a,\sigma}, g_{a,\sigma}) \notin [\delta - \epsilon, \delta + \epsilon] ] \leq 2/100.
	\end{equation}
	\begin{equation}\label{eqn:prop2}
		\Pr_{a,\sigma,S}[\exists p' \neq p_{a,\sigma} \in \Fnd{k}{d}, \delta_S(p', g_{a,\sigma}) < \delta + \epsilon] \leq 2/100.	
	\end{equation}	
\end{lemma}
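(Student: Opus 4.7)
The plan is to prove both items via a two-level concentration argument on the random map $R := R_{\sigma,a}$ and the sample set $S$, treating $R(S) \subseteq \{0,1\}^n$ as a correlated sample from $\{0,1\}^n$. Fix the polynomial $p$ from the lemma's hypothesis and let $T := \{x : f(x) \neq p(x)\}$, so $|T|/2^n = \delta$. Write $\rho_k := \delta_k(g_{a,\sigma}, p_{a,\sigma})$ and $\rho := \delta_S(g_{a,\sigma}, p_{a,\sigma})$. I will control (\ref{eqn:prop1}) via the decomposition $\rho - \delta = (\rho - \rho_k) + (\rho_k - \delta)$, and derive (\ref{eqn:prop2}) from (\ref{eqn:prop1}) combined with Corollary~\ref{cor:prop_S}.

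The first stage, $\rho$ versus $\rho_k$, is a routine empirical-average bound: conditional on $(\sigma, a)$ the points of $S$ are i.i.d.\ uniform in $\{0,1\}^k$, so $\rho$ is the mean of $m$ i.i.d.\ Bernoulli$(\rho_k)$ variables and Chebyshev gives $|\rho - \rho_k| \leq \epsilon/2$ with probability $\geq 1 - 1/100$ as long as $m = \Omega(1/\epsilon^2)$. The core of the proof is the second stage, bounding $\mathrm{Var}_{\sigma,a}(\rho_k)$. Writing $\rho_k = 2^{-k} \sum_{y \in \{0,1\}^k} Y_y$ with $Y_y = \mathbb{1}[R(y) \in T]$, the uniform choice of $a$ makes each $R(y)$ marginally uniform in $\{0,1\}^n$, so $\mathbb{E}[Y_y] = \delta$. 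For distinct $y \neq y'$, the XOR $z := R(y) \oplus R(y')$ satisfies $z_\ell = (y \oplus y')_{\sigma(\ell)}$; the key observation is that since $\sigma$'s distribution is invariant under permutations of $[n]$, the conditional distribution of $z$ given $|z| = w$ is uniform over weight-$w$ strings in $\{0,1\}^n$. This places us exactly in the setting of Polyanskiy's constant-weight hypercontractivity (Corollary~\ref{cor:hyper}), yielding $\mathbb{E}[Y_y Y_{y'} \mid |z| = w] \leq C \delta^{1+1/40}$ whenever $w/n \in [1/20, 19/20]$. A two-step Chernoff argument handles the weight itself: $|y \oplus y'|/k$ concentrates around $1/2$ except with probability $e^{-\Omega(k)}$, and conditional on that, $|z|/n$ concentrates around $|y \oplus y'|/k$ except with probability $e^{-\Omega(n)}$. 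Summing the resulting pairwise covariance bounds gives $\mathrm{Var}_{\sigma,a}(\rho_k) \leq \delta/2^k + C\delta^{1+1/40} + O(\delta\cdot e^{-\Omega(k)})$. By the law of total variance, $\mathrm{Var}(\rho) \leq 1/(4m) + \mathrm{Var}_{\sigma,a}(\rho_k)$; in the regime $\delta \leq 2^{-d-10}$ the hypercontractivity term $C\delta^{1+1/40}$ is much smaller than $\epsilon^2$ for the range of $\epsilon$ relevant to the outer tolerant-testing theorem. Chebyshev on $\rho$ then gives (\ref{eqn:prop1}).

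For (\ref{eqn:prop2}), I invoke Corollary~\ref{cor:prop_S}: for the chosen $m = k^{\Omega(d)}$, with probability at least $99/100$ over $S$ every pair of distinct polynomials in $\Fnd{k}{d}$ differs on at least a $1/2^{d+1}$ fraction of $S$. Condition on this event and on (\ref{eqn:prop1}); then for any $p' \neq p_{a,\sigma}$ in $\Fnd{k}{d}$ the triangle inequality gives
\[
\delta_S(g_{a,\sigma}, p') \;\geq\; \delta_S(p_{a,\sigma}, p') - \delta_S(g_{a,\sigma}, p_{a,\sigma}) \;\geq\; \frac{1}{2^{d+1}} - (\delta + \epsilon) \;>\; \delta + \epsilon,
\]
where the last step uses $\delta + \epsilon \leq 2\delta \leq 2^{-d-9}$. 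A union bound over the two failure events completes the argument.

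The main technical obstacle is the pairwise covariance bound: recognizing that $z$ is uniform on its weight-sphere, thanks to the symmetry of the random map $\sigma$, is precisely what unlocks Polyanskiy's constant-weight hypercontractivity theorem. The resulting sub-linear saving $\delta^{1+1/40}$ on pairwise probabilities, while weaker than the $\delta^2$ one would get from truly independent samples, is sufficient because the regime forces $\delta = 2^{-\Omega(d)}$ while the gap $\epsilon$ dictated by the outer theorem satisfies $\epsilon^2 \gg C\delta^{1+1/40}$; verifying this quantitative sufficiency is the one place where the specific choice of parameters $k$ and $m$ in the lemma statement gets fully used.
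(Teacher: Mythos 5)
There is a genuine quantitative gap in your proof of (\ref{eqn:prop1}), located exactly at the step you identify as the technical core. Your pairwise bound is $\mathbb{E}[Y_yY_{y'}] \leq C\delta^{1+1/40}$, obtained from Polyanskiy's constant-weight hypercontractivity (Corollary~\ref{cor:hyper}) with its \emph{fixed} parameter $\eta_0 = 1/20$. Chebyshev then needs this covariance to be at most about $\epsilon^2/100$. But the lemma is stated for $\epsilon \in [0,\delta]$, so $\epsilon^2 \leq \delta^2 \leq \delta^{1+1/40}$: the covariance term $C\delta^{1+1/40}/\epsilon^2 \geq C\delta^{-39/40}$ blows up rather than being small, for \emph{every} admissible $(\delta,\epsilon)$. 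Your closing claim that ``$\epsilon^2 \gg C\delta^{1+1/40}$ in the relevant regime'' is therefore false; the exponent $1+1/40$ is simply too far from $2$. Relatedly, you never actually use the $1/\epsilon^4$ factor in $k$ --- your Chernoff step only needs $k = \Omega(\log m)$ to pin $|s_i \oplus s_j|/k$ to within a constant of $1/2$ --- which is a sign that the parameter doing the real work in the lemma has gone unused.

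The paper's proof repairs exactly this. Since $\sigma$ is a uniformly random \emph{map} (each $\sigma(\ell)$ i.i.d.\ uniform in $[k]$ --- note this restriction differs from the merging process of the intolerant tester), the difference vector $z = R(s_i)\oplus R(s_j)$ has i.i.d.\ Bernoulli$(\Delta(s_i,s_j)/k)$ coordinates, so $(R(s_i),R(s_j))$ is an honest $\rho$-noisy pair with $\rho = 1-2\Delta(s_i,s_j)/k$, and one can apply the standard small-set expansion of the noisy hypercube (Corollary~\ref{cor:noisy-hyp}) rather than Polyanskiy --- no conditioning on the weight sphere is needed. The crucial point is then to take $\beta = 1/2 - O(\epsilon^2)$ and $k = \Omega(\log m/\epsilon^4)$ so that, with probability $99/100$, \emph{every} pair in $S$ satisfies $\Delta(s_i,s_j)/k \in [\beta, 1-\beta]$, hence $|\rho| = O(\epsilon^2)$. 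The expansion exponent becomes $2/(1+|\rho|) = 2 - O(\epsilon^2)$ and the covariance is $\mu^{2-O(\epsilon^2)} - \mu^2 = O(\epsilon^2\,\mu^2\log(1/\mu)) = O(\epsilon^2)$ uniformly in $\mu$, which is what Chebyshev needs. To salvage your route you would have to (a) sharpen your Chernoff window for $|s_i\oplus s_j|/k$ from a constant to $O(\epsilon^2)$ (this is what consumes $k = \Omega(\log m/\epsilon^4)$), and (b) replace Corollary~\ref{cor:hyper} by a constant-weight hypercontractivity statement whose exponent tends to $2$ as the weight tends to $n/2$ --- which the paper's Corollary~\ref{cor:hyper} does not supply. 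Your reduction of (\ref{eqn:prop2}) to (\ref{eqn:prop1}) plus Corollary~\ref{cor:prop_S} via the triangle inequality matches the paper and is fine.
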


\begin{proof} Given two vectors $y,y' \in \{0,1\}^k$, let $\delta'(y,y')$ denote $\min(\delta(y,y') , 1 - \delta(y,y'))$, where $\delta(y,y')$ is the fractional Hamming distance between $y$ and $y'$. We will first prove that a randomly chosen set $S$ forms a good code, that is, with high probability, the points of $S \subseteq \{0,1\}^k$ are far from each other with respect to the measure of distance $\delta'$ defined above. Let us calculate the probability that a random set $S = \{s_1,\ldots,s_m\}$ of size $m$, has distance $\delta'(S) < \beta$ for some $\beta \in (0,1/2]$, where $\delta'(S) = \min_{i,j}(\delta'(s_i,s_j))$ ($m,k,\beta$ to be chosen later).  When the $m$ points of $S$ are picked, the probability that the $i^{th}$ point lies in the ball of radius $\beta$ around a fixed point $j < i$ is at most $2^{kH(\beta)}/2^k$, where $H(x)$ is the binary entropy function, so that the probability $\delta'(s_i,s_j) < \beta$ is at most $2 \cdot (2^{kH(\beta)}/2^k)$. So by a union bound we get that the probability that $\delta'(s_i,s_j)  < \beta$ for any $j < i$ is at most $(i-1)2^{kH(\beta) + 1}/2^k$. Now taking a union bound over all the $i$'s we get that the distance of the code $\delta'(S)  > \beta$ with probability at least $99/100$ if,

\begin{equation}\label{eqn:size_S}
m^2  \left(\frac{2^{k H(\beta) + 1}}{2^k} \right) \leq \frac{1}{100} ~~ \Longleftrightarrow ~~ k \geq  \frac{2\log m + \log 200}{1-H(\beta)}.
\end{equation}

From now on we will assume that $k,m$ are chosen appropriately so that $\delta'(S)$ is at least $\beta$ and prove the first part of Lemma~\ref{lem:poly-prop}, that equation~\ref{eqn:prop1} holds. Notice that each point in $\{0,1\}^k$ is mapped to a uniformly random point in $\{0,1\}^n$, via the random map $R = (\sigma,a)$, defined in the lemma. The points $s_i \in S$, generate the samples $R(S) = \{R(s_i)\} \subseteq \{0,1\}^n$. Let the error set of $f$ where it differs from $p$ be $T \subseteq \{0,1\}^n$ with density $\delta$. We will prove that the fraction of $T$ in $R(S)$ is a good estimate for the fraction of $T$ in $\{0,1\}^n$. We apply the Chebyshev inequality to the samples $R(S)$, to get that,

\begin{flalign}\label{eqn:tol-test}
\begin{split}
&\prob{R}{  \left \lvert \frac{\sum_i \mathbb{1}(R(s_i) \in T)}{m} - \avg{R}{ \frac{\sum_i \mathbb{1}(R(s_i) \in T)}{m} }\right \rvert  \geq \epsilon }  \\
&\leq \frac{\Var(\mathbb{1}(R(s_i) \in T ))}{m\epsilon^2} + \frac{m(m-1) (\Pr[ R(s_i), R(s_j) \in T] - \prob{}{R(s_i) \in T }\Pr[R(s_j) \in T]) } {m^2\epsilon^2}.	
\end{split}
\end{flalign}

\paragraph{}
All the samples $R(s_i)$ are uniformly distributed in $\{0,1\}^n$, so $\avg{R}{\mathbb{1}(R(s_i) \in T)} = \mu$. 
Since the distance between $s_i$ and $s_j$ is at least $\beta$ and at most $1-\beta$ (because $\delta'(S) < \beta$), the second sample $R(s_j)$ would be $\rho$-correlated to $R(s_i)$,\srikanth{Here, we probably only know that $\rho \leq 1-2\beta$? Do we have equality?} where $\rho \in [2\beta - 1, 1 - 2\beta]$ is the noise parameter of the noisy hypercube. By the small set expansion property of the noisy hypercube stated in Corollary~\ref{cor:noisy-hyp}, we have that,
$$\Pr[R(s_i),R(s_j) \in T]  \leq \mu^{\frac{2}{1+|\rho|}} \leq \mu^{\frac{1}{1-\beta}}.$$

Substituting these terms in~\ref{eqn:tol-test} we get,
\begin{equation}\label{eqn:chebyshev}
\prob{R}{  \left \lvert \frac{\sum_i \mathbb{1}(R(s_i) \in T )}{m} - \delta  \right \rvert  \geq \epsilon }   \leq \frac{1}{m \epsilon^2} + \frac{(\mu^{1/1 - \beta} - \mu^2)}{\epsilon^2}.
\end{equation}

This probability is less than $1/100$ for all $\mu$, if $m \geq O(1/\epsilon^2)$ and ${\beta = \frac{1}{2} - \frac{e\epsilon^2}{2}}$. We will choose $m = O(1/\epsilon^2) + O(k^d)$, $\beta$ as above, and $k \geq O(d\log(d/\epsilon)/\epsilon^4)$\srikanth{Probably pedantic b/c $\epsilon$ is quite small, but don't we need $k = d(\log (d/\epsilon))/\epsilon^4$ for the calculations?}\mitali{I've changed the expression of $k$ wherever it appears below. It does not affect the overall query complexity of the tolerant tester, so we don't need to make any further changes.} so that the parameters satisfy equation~\ref{eqn:size_S} (that $S$ forms a code with distance $\beta$) and equation~\ref{eqn:chebyshev} above.
One can check that ${\frac{\sum_i \mathbb{1}(R(s_i) \in T)}{m} = \delta_S(p_{a,\sigma},g_{a,\sigma})}$ and hence rephrasing equation~\ref{eqn:chebyshev} we get,

\begin{align*} 
&\Pr_{a,\sigma,S}[\delta_S(p_{a,\sigma}, g_{a,\sigma}) \notin [\delta - \epsilon, \delta + \epsilon] ] \\ \leq & \Pr_{a,\sigma,S}[S\text{ has distance} <\beta] + \Pr_{a,\sigma,S}[\delta_S(p_{a,\sigma}, g_{a,\sigma}) \notin [\delta - \epsilon, \delta + \epsilon] \mid S\text{ has distance } \geq \beta ] \\ \leq  &2/100.
\end{align*}

Now we will prove that the second property stated in the lemma, equation~\ref{eqn:prop2} holds. Lemma~\ref{cor:prop_S} gives us that with high probability, a random set $S$ is such that, all polynomials in $\Fnd{k}{d}$ have a small fraction of zeros, at most $1/2^{d+1}$ on $S$. Equivalently,
\begin{align}\label{eqn:good_S}
\Pr_{a,\sigma,S}[\exists p_1 \neq p_2 \in \Fnd{k}{d}, \delta_S[p_1, p_2] \leq 1/2^{d+1} ] \leq 1/100.
\end{align}

Equation~\ref{eqn:prop1} proved above, gives us that with high probability, $\delta_S(g_{a,\sigma},p_{a,\sigma}) \leq \delta + \epsilon \leq 1/2^{d+9}$. Additionally, when $S$ satisfies equation~\ref{eqn:good_S}, we know that, $\forall p' \neq p_{a,\sigma}, \delta_S(p',p_{a,\sigma}) \geq {1/2^{d+1} }$. Hence we get that with high probability (when both these events occur), $$\forall p' \neq p_{a,\sigma}, \delta_S(p',g_{a,\sigma}) \geq 1/2^{d+1} - 1/2^{d+9} \geq 1/2^{d+9} \geq \delta + \epsilon,$$ and hence equation~\ref{eqn:prop2} holds. 
\end{proof}


We will now show that these lemmas imply the main tolerant testing theorem.
\begin{proof}[Proof of Theorem~\ref{thm:tolerant-Fq-intro}]\label{proof-tol-test}
We will choose $\delta = 1/2^{O(d)}$ such that $\forall \delta_1,\delta_2 \leq \delta$, all functions that are $\delta_1$-close to $\Fnd{n}{d}$ are accepted by the tolerant tester~\ref{tol_tester} and all functions that are $\delta_2$-far are rejected, with high probability. 

\paragraph{Tolerant completeness:}
The intolerant tester of Theorem~\ref{thm:test-main} makes at most $2^{cd}$ queries, for some constant $c$. Each query is uniformly random in $\{0,1\}^n$, so the probability that the tester rejects a function that is $\delta$-close to $\Fnd{n}{d}$ is at most $2^{cd}\delta \leq 1/100$, if $\delta \leq 1/2^{O(d)}$. Let $f$ be $\delta_1 \leq \delta$-close to $\Fnd{n}{d}$ via the polynomial $p$. With high probability it won't get rejected in the first step of the tolerant tester.

In the rest of the tester, we choose a random map $R = (\sigma,a)$, a random set $S \subseteq \{0,1\}^k$ and then interpolate $g_{\sigma,a}$, the restricted function ($f|_{R(\{0,1\}^k)}$) on $S$. Let $\epsilon = (\delta_2 - \delta_1)/2$ and $p_{a,\sigma}$ be the restricted polynomial $p$. Then, Lemma~\ref{lem:poly-prop} gives us that, with high probability, $g_{\sigma,a}$ will be $\delta+\epsilon$-close to $p_{a,\sigma}$, when $k,m$ are chosen appropriately. Furthermore, all the other degree $d$ polynomials will be $\delta+\epsilon$-far from $g_{a,\sigma}$, so that $p_{a,\sigma}$ is the unique closest polynomial to which we decode. We know that $\delta_S(g_{a,\sigma,p_{a,\sigma}}) \leq \delta + \epsilon \leq (\delta_1 + \delta_2)/2$ and hence the tolerant tester accepts $f$.

\paragraph{Soundness:}
Let $f$ be $\delta_2$-far from $\Fnd{n}{d}$, we will show that the tolerant tester~\ref{tol_tester} rejects it. From Theorem~\ref{thm:test-main}, we get that with constant probability (at least $99/100$), the intolerant $(1/2^{d+10},2^{O(d)})$-tester will reject any function that is at least $1/2^{d+10}$-far from $\Fnd{n}{d}$. Assuming that the tester does not reject in the first step itself, we know that, the function $f$ is $1/2^{d+10}$-close to $\Fnd{n}{d}$ via the (unique) polynomial $p$. 

Let $p_{a,\sigma}, g_{a,\sigma}$ be the restricted polynomial $p$ and restricted function $f$ respectively. Similar to the completeness analysis given above, from Lemma~\ref{lem:poly-prop} we get that with high probability the tester decodes to the correct polynomial $p_{a,\sigma}$ and $\delta_S(p_{a,\sigma}, g_{a,\sigma}) \geq \delta + \epsilon \geq \delta_2$ and hence the tester rejects.

\paragraph{Query complexity and running time.}
To apply Lemma~\ref{lem:poly-prop}, we used $k = O(d\log(d/\epsilon)/\epsilon^4)$ and $m = O(1/\epsilon^2 + k^d)$, where $\epsilon = (\delta_2 - \delta_1)/2$. Our query complexity is thus $|S| = m = 2^{O(d\log (d/\epsilon^4))}$. (Note that this is significantly smaller than $2^{2^d}$). Our running time is however still doubly exponential in $d$ since the interpolation on $S$ in step(3) of the tester, might require exponential in $k$ running time since we do a brute-force interpolation. This is due to the random nature of the set $S$. (We note that a more careful construction of $S$ might yield the properties of Lemma~\ref{cor:prop_S} while providing enough structure to get a running time of $2^{O(d)}$. We leave this possibility as an open question.) 

\end{proof}

\bibliographystyle{plain}
\bibliography{polys-over-grids}

\end{document}